\newtheorem{theorem}{Theorem}
\newtheorem{definition}[theorem]{Definition}
\newtheorem{proposition}[theorem]{Proposition}
\newtheorem{claim}[theorem]{Claim}
\newtheorem{remark}[theorem]{Remark}
\newtheorem{lemma}[theorem]{Lemma}
\newtheorem{example}[theorem]{Example}
\newtheorem{assumption}[theorem]{Assumption}
\newtheorem{approximation problem}[theorem]{Approximation problem}
\newtheorem{corollary}[theorem]{Corollary}
\newcommand{\ip}[2]{\left\langle#1,#2\right\rangle}
\newcommand{\abs}[1]{\left|#1\right|}
\newcommand{\norm}[1]{\left\|#1\right\|}
\def\bT{\breve{T}}
\def\bP{\breve{P}}
\def\bp{\breve{\pi}}
\def\bQ{\breve{Q}}
\def\hf{\hat{f}}
\def\hg{\hat{g}}
\def\Re{{\rm Re}}
\def\Im{{\rm Im}}
\def\w{\omega}
\def\d{\delta}
\def\e{\epsilon}
\def\k{\kappa}
\def\s{\sigma}
\def\NN{\mathbb{N}}
\def\cA{\mathcal{A}}
\def\cF{\mathcal{F}}
\def\cH{\mathcal{H}}
\def\cX{\mathcal{X}}
\def\cY{\mathcal{Y}}
\def\CC{\mathbb{C}}
\def\NN{\mathbb{N}}
\def\ZZ{\mathbb{Z}}
\def\RR{\mathbb{R}}
\def\SS{\mathbb{S}}
\begin{document}

\title{Uncertainty principles \\ and optimally sparse wavelet transforms}

\author{Ron Levie \ \ \ \ \ \ \ \ \ \ \ \ \ \ \ \ \ \ \ \ \     Nir Sochen \\
{\small ronlevie@post.tau.ac.il \ \ \ \ \ \ \ \ \ \ \ \ \ \ \ \ \ sochen@post.tau.ac.il}}
\affil{Tel Aviv University}
\date{\vspace{-5ex}}

\maketitle

\begin{abstract}
 
In this paper we introduce a new localization framework for wavelet transforms, such as the 1D wavelet transform and the Shearlet transform.
Our goal is to design nonadaptive window functions that promote sparsity in some sense. For that, we introduce a framework for analyzing localization aspects of window functions. Our localization theory diverges from the conventional theory in two ways. First, we distinguish between the group generators, and the operators that measure localization (called observables). Second, we define the uncertainty of a signal transform 
 as a whole, instead of defining the uncertainty of an individual window.
We show that the uncertainty of a window function, in the signal space, is closely related to the localization of the reproducing kernel of the wavelet transform, in phase space. 
As a result, we show that using uncertainty minimizing window functions, results in representations which are optimally sparse in some sense. 

\end{abstract}

\section{Introduction}
In this paper we consider ``generalized wavelet transforms'', namely signal transforms based on taking the inner product of the input signal with a set of transformations of a window function. Such a transform is defined as follows. Let the Hilbert space $\cH$ be the space of signals to be transformed. Let $G$ be a manifold, called phase space, and let $\pi:G\rightarrow {\cal U}(\cH)$ be a a strongly continuous mapping from $G$ to unitary operators in $\cH$. Consider a Radon measure of $G$, and take $L^2(G)$ as the output signal space of the generalized wavelet transform.
A generalized wavelet transform is defined by
\begin{equation}
V_{f}:\cH\rightarrow L^2(G) \quad ,\quad \big[V_f[s]\big](g)= \ip{s}{\pi(g)f}.
\label{eq:1}
\end{equation}
Here, $s\in\cH$ denotes the signal we input to the transform. The vector $f\in \cH$, called a window function, is a part of the definition of the transform. The independent variable of the output signal $V_f(s)\in L^2(G)$ is denoted by $g$. 
Some examples are the short time Fourier transform (STFT) \cite{Time_freq}, the continuous wavelet transform \cite{Cont_wavelet_original}\cite{Ten_lectures}, the Shearlet transform \cite{Shearlet}, the Curvelet transform \cite{Curvelet}, and the dyadic wavelet transform \cite{Ten_lectures}. 
The first three examples are based on a square integrable representation of a group $G$. Such representations are sometimes called continuous wavelet transforms, but in this paper we reserve this name to the classical 1D continuous wavelet transform. Signal transforms based on square integrable representations where extensively studied, see e.g the classical paper \cite{gmp} and the more recent book \cite{Fuhr_wavelet}. The latter two examples, namely the Curvelet transform and the discrete wavelet transform, are not based on a group $G$.
We give additional restrictions on $V_f$ in our framework in Assumptions \ref{ass_voice} and \ref{ass_voice2}. 

Many generalized wavelet transforms provide a sparse or optimal representation for their respective
classes of signals, in the following sense. Consider a discretization of a generalized wavelet transform. Namely, assume that there is some sampling $\{g_n\}_{n\in\NN}\subset G$, such that the mapping
\[d_{f}:\cH\rightarrow l^{2}\quad ,\quad [d_f(s)]_n= \ip{s}{\pi(g_n)f}=\big[V_f[s]\big](g_n)\] 
has the inversion formula
\[s=\sum_{n\in\NN}[d_f(s)]_n \pi(g_n)y\]
where $y$ is some other window function.
Let $s_N$ be the approximation of the signal $s$ using only the $N$ largest wavelet coefficients $[d_f(s)]_n$ of $s$. For the optimality statement of the 1D wavelet transform, consider the class of piecewise smooth functions. The asymptotic behavior of the $N$-best approximation is  
\[\norm{s-s_N}_2^2\ \asymp\  N^{-2} \quad , \quad N\rightarrow\infty\]
which is the best approximation rate possible when approximating piecewise smooth functions with dictionaries \cite{Nonlinear_approximation}. Similar optimality properties were proved for the Curvelet transform \cite{optimal_Curvelet} and the Shearlet transform \cite{optimal_Shearlet} in a class of 2D piecewise-smooth signals called ``cartoon-like images''.

Such optimality properties
are independent of the specific choice of the window function. Yet, choosing different window
functions for a given generalized wavelet transform may lead to signal representations with different properties.
In this paper we address the question of how to design and analyze window functions. 
Our goal is to design window functions that promote sparsity, or ``localized'' representations, in some non-asymptotic sense. We do this by defining uncertainty principles.
{ Our localization driven optimality concept complements the above approximation rate optimality property, rather than compete with it. Indeed, the choice of the window in the approximation rate optimality property is a degree of freedom, and thus adding a loss function is required in order to obtain a unique optimal wavelet transform. The loss functions in our theory are new uncertainty measures. }
 In Section \ref{A one parameter localization framework} we define an uncertainty of windows, and in Section \ref{The global localization framework} we strengthen the definition to obtain an uncertainty of a generalized wavelet transform as a whole. { For this, we set down a theory for defining wavelet transforms through measurements of physical quantities, or signal attributes. Many examples of general wavelet transforms are intuitively interpreted as procedures of measuring physical quantities. For example, the STFT measures the content of signals at different times and frequencies, and the 1D wavelet transform measures the content at different times and scales. The idea in our construction is to systematically define the physical quantities underlying a general wavelet transform.}

{
Physical quantities are defined as ``simple'' Lie groups of complex numbers, and $G$ is assumed to be a set of tuples of physical quantities. For each physical quantity we define an operator that measures this quantity, namely an observable. The observables are the link connecting the structure of $G$ and its representation $\pi$ with the uncertainty in measuring the physical quantities.
The systematic approach for defining the observables of the physical quantities underlying general wavelet transforms lends itself to a definition of uncertainty which is compatible with the structure of $G$. 
Taking an uncertainty minimizing vector as the window of the generalized wavelet transform, leads to signal transforms that map to a function space in phase space with optimally localized reproducing kernels. This is discussed in Section \ref{Uncertainty minimizers as sparsifying windows}}. Moreover, we explain how this window choice leads to the ``sparsest'' signal representation possible for the respective generalized wavelet transform. 

{The notion of optimal sparsity in our context is not related to the standard $N$-best approximation rate, and is explained next. In general, a wavelet transform based on a square integrable representation of a group $G$, is an isometry which is not onto $L^2(G)$. Thus, for each signal $s$, there are many functions $F\in L^2(G)$ that synthesize $s$ via $V_f^*[F]$. We consider a class of signals that can be synthesized by a delta train in phase space, $\sum_{j=1}^Jc_j\delta\big(g_j^{-1}(\cdot)\big)$, namely signals of the form $\sum_{j=1}^Jc_j\pi(g_j)f$. The wavelet transform of such a signal is not its corresponding delta train, but rather a blurring of this delta train, obtained by the convolution of the delta train with some filter kernel. Our optimality notion corresponds to windows that result in as little blurring as possible, by which they preserve as much as possible the separation of the peaks of the delta train. Such a property is useful when one wants to recover the delta train in phase space from the wavelet transform of $s$.}

{ We note that conventional variance based uncertainty principles of general wavelet transforms do not have properties relating them to localization in phase space. Our theory unifies variance based uncertainties, defined in the signal domain, with ambiguity function localization, defined in phase space. We see this as a validation that our newly defined uncertainties indeed quantify meaningful notions of localization.}

\subsection{Motivation for defining new uncertainty principles}
\label{Motivation for defining new uncertainty principles}

In the STFT there is a well understood framework for analyzing the localization of window functions. 
The STFT measures the content of the signal at different times and frequencies. This is done by translating the time and modulating the frequency of the window function and taking the inner product of the transformed window with the signal. Namely, in (\ref{eq:1}) $\cH=L^2(\RR)$, $G=\RR^2$ and $\pi(g_1,g_2)f(t)=e^{ ig_2 t}f(t-g_1)$. Thus a prevailing approach for localization, is to analyze window functions in term of their time and frequency variances. The mean time of a normalized $f$ is defined as $e_1=\int_{-\infty}^{\infty}t\left|f(t)\right|^2 dt$, the mean frequency of $f$ is defined as $e_2=\int_{-\infty}^{\infty}\omega\left|\hat{f}(\omega)\right|^2 d\omega$ (where $\hat{f}$ is the Fourier transform of $f$). The spreads of $f$ around its mean time and mean frequency are defined as the variances $\s_1=\int_{-\infty}^{\infty}(t-e_1)^2\left|f(t)\right| dt$ and $\s_2=\int_{-\infty}^{\infty}(\omega-e_2)^2\left|\hat{f}(\omega)\right|^2 d\omega$ respectively.
A good window function in this approach is one that has small spreads both in time and frequency. Quantitatively, the uncertainty of a window function is defined as the product of its variances in time and frequency, and we wish to find a window function with minimal uncertainty.  The smaller the uncertainty of a window function is, the more ``accurately'' the window probes the content of the signal at different times and frequencies simultaneously.
The Heisenberg uncertainty principal poses a lower bound on the uncertainty of any signal, and a classical result states that the modulated Gaussians are the optimal window functions, in the sense that they have minimal uncertainty \cite{Time_freq}.

There is a conventional generalization of this localization analysis framework to generalized wavelet transforms. To describe this approach, we first reformulate the localization framework of the STFT described above, and then perform the conventional abstraction. In the reformulation, we observe that the STFT is based on a square integrable representation of the Heisenberg group. We consider two one parameter subgroups, the subgroup of translations defined by $\pi_1(g_1)f(t)=f(t-g_1)$ and the subgroup of modulations defined by $\pi_2(g_2)f(t)=e^{ig_2t}f(t)$, and note that $\pi(g_1,g_2)=\pi_2(g_2)\pi_1(g_1)$ for any transformation parameters $(g_1,g_2)\in\RR^2$. Next we obtain the infinitesimal generators $T_1$ and $T_2$ of these one parameter unitary groups. Namely $\pi_1(g_1)=e^{ig_1 T_1}$ and $\pi_2(g_2)=e^{ig_2 T_2}$ where the infinitesimal generators are defined by $T_1 f(t)=i\frac{\partial}{\partial t}f(t)$ and $T_2 f(t)=tf(t)$. To define the localization concepts, we adopt the quantum mechanical notion of an observable (for more on observables see Subsection \ref{Sec_obs}). 
\begin{definition}
An \textbf{observable} is a self-adjoint or unitary operator $\breve{T}$ in $\cH$. The expected value and the variance of a normalized vector $f\in\cH$ with respect to $\breve{T}$ are defined to be 
\begin{align}
e_f(\breve{T})&=\ip{\breve{T}f}{f}\\
\s_f(\breve{T})&=\norm{\big(\breve{T}-e_f(\breve{T})\big)f}^2
\label{eq:3}
\end{align}

respectively. 
\end{definition}
When the vector $f\in\cH$ is not normalized, we still use the notations (\ref{eq:3}). In this case, $e_f(\breve{T})$ and $\s_f(\breve{T})$ are no longer interpreted as expected value and variance.

In our case of the time-frequency localization, the suitable time and frequency observables are defined by $\breve{Q}f(t)=tf(t)$ and $\breve{P}f(t)=-i\frac{\partial}{\partial t}f(t)$. Indeed, our ``probabilistic'' concepts of time and frequency localization, defined by $e_1,e_2,\s_1$ and $\s_2$ above, coincide with the quantum mechanical localization notions $e_f(\breve{Q}),e_f(\breve{P}),\s_f(\breve{Q})$ and $\s_f(\breve{P})$ respectively. 

The construction of the general localization framework stems from the observation that the pair of infinitesimal generators $T_1,T_2$ coincide with the pair of observable $\breve{P},\breve{Q}$ up to sign. Indeed, the conventional localization analysis framework for generalized wavelet transforms is based on the following scheme. Consider a signal transform based on a square integrable representation of a Lie group $G$, consider a  set of linearly independent infinitesimal generators $T_1,\ldots,T_n$ of the group of transformations $\pi(G)$, and take them as the observables.
This approach can be found in the literature, e.g \cite{Coherent}\cite{2D_wave}, and in papers, e.g \cite{Affine_uncertainty}\cite{Affine_uncertainty2}\cite{Teschke2005}.
The variances of the window function $f$ are defined to be $\s_f(T_1),\ldots,\s_f(T_n)$. Consider two observables $T_k,T_l$. The product $\s_f(T_k)\s_f(T_l)$ is called the uncertainty of $f$ with respect to $T_k,T_l$.  Let us treat the case where $T_k,T_l$ are self-adjoint.
The general uncertainty principle states \cite{folland_unc} 
\begin{equation} \label{eq:genuncertainty}
\s_f(T_k)\s_f(T_l) \geq \frac{1}{4}\ip{[T_k,T_l]f}{f}^2.
\end{equation}
		Here, $[T_k,T_l]=T_kT_l-T_lT_k$ denotes the commutator of $T_k$ and $T_l$. 
In order to obtain an optimal window function with respect to $\s_f(T_k),\s_f(T_l)$, the conventional procedure is to solve the uncertainty equality
\begin{equation}
\s_f(T_k)\s_f(T_l) = \frac{1}{4} \left|\ip{f}{[T_k,T_l]f}\right|^2.
\label{Uequaloty}
\end{equation}
A classical result states that the solution of (\ref{Uequaloty}) satisfies 
$(T_k-a)f=ic(T_l-b)f$ 
for some $a,b,c\in\RR$ \cite{folland_unc}.

In \cite{do_unc} it was indicated that substituting equality in the uncertainty principle instead of inequality, does not lead to a window function with minimal uncertainty in general. Instead, in order to find a window with minimal uncertainty, one should minimize the uncertainty of $f$ using variational methods.

Applying the above procedure, with variational methods for the minimization problem instead of equation (\ref{Uequaloty}), leads to uncertainty minimizing window functions. These optimal window functions were never applied in engineering. Indeed the results are quite strange and counter intuitive, e.g \cite{ss}.
Our assertion is that the conventional generalization is flawed, in the sense that the derived localization notions do not correspond to the ``metaphysical concept'' of localization. An obvious example follows. Consider the signal transform $L^2(\RR)\rightarrow L^2(\RR)$ based on the one parameter group of time translations $\pi(g)f(t)=f(t-g)$. Applied to a signal $s$, the corresponding transform (\ref{eq:1}) returns the convolution of the signal with the window function (up to a complex conjugation and reflection). An obvious choice of an observable is the time observable $\breve{Q}$: the less spread in time the window is, the more accurately it probes the signal at different times. But note that the infinitesimal generator of the time translations is the operator $i\frac{\partial}{\partial t}$ which coincides up to sign with the \textbf{frequency} observable $\breve{P}$. This observable is inadequate for measuring time localization. 

In this paper we define new notions of uncertainty for generalized wavelet transforms. We illustrate how our definitions encapsulate the notion of locality, by connecting our notion of uncertainty to sparsity. Some preliminary results were published in \cite{Adjoint}.

\subsection{Heuristic derivation of the framework}
Let us start by discussing the STFT again.
Consider a window function $f$ positioned at time and frequency zero. Namely, $e_f(\breve{Q})=e_f(\breve{P})=0$.
The STFT can be interpreted as a procedure of measuring the signal content of $s$ at different values of time and frequency $(g_1,g_2)$. This probing of $s$ is calculated by the inner product $\ip{s}{\pi_2(g_2)\pi_1(g_1)f}$.  

Let us set forth some important ingredients that help lead the way to a generalization. First, there are two underlying physical quantities in the STFT, namely \textit{time} and \textit{frequency}. The parameter $g_1$ of $\pi_1$ corresponds to different values of time. The time values are numbers in $\RR$, and they have a ``natural'' Lie group structure, namely $\RR$ with addition. Indeed, time delaying by $g_1$ and then by $g'_1$, results in a time delay of $g_1+g'_1$. 
Thus we define the physical quantity \textit{time} as the Lie group $\{\RR,+\}$. A similar construction holds for frequency. The two Lie groups of physical quantities are accompanied by two maps that represent them as unitary operators. \textit{time} is accompanied by $\pi_1$ that maps each time $g_1$ to the operator that time-translate by $g_1$. \textit{frequency} is accompanied by $\pi_2$ that maps each frequency value to a modulation. Next, to each physical quantity there is a corresponding observable, in our case the time observable $\breve{Q}$ and the frequency observable $\breve{P}$. These observables are tailored to the unitary operators translation and modulation in the following sense. Using our notion of mean time $e_f(\breve{Q})$, it is easy to verify that time-translating $f$ by $g_1$ changes the mean time of $f$ by $g_1$. Namely, $e_{\pi_1(g_1)f}(\breve{Q})=e_{f}(\breve{Q})+g_1$, and similarly for mean frequency $e_{\pi_2(g_2)f}(\breve{P})=e_{f}(\breve{P})+g_2$. Hence the interpretation of $\pi(g_1,g_2)$ as an operator that changes the time and frequency of window functions.
The \textbf{Heisenberg point of view} of quantum mechanics states the following.  ``Translating a window $f$ by applying a unitary operator $U$  is equivalent to keeping the window constant and translating the observable by conjugating it with $U$''. More accurately, for unitary $U$ and any $\bT$
\begin{align}
\label{Hpov1}
 e_{U f}(\bT) &= \ip{\bT U f}{U f}
   &=\ip{U^*\bT U f}{f}=e_f\big(U^*\bT U \big) 
\end{align}
and
\begin{align}
\label{Hpov2}
\s_{U f}(\bT) &= \norm{\big(\bT -e_{U f}(\bT)\big)U f}^2 \\
 &=\ip{U^*\Big(\bT -e_{f}\big(U ^*\bT U \big)\Big)^*UU^*\Big(\bT -e_{f}\big(U ^*\bT U \big)\Big)U f}{f}\\
 &=\ip{\Big(U^*\bT U  -e_{f}\big(U ^*\bT U \big)\Big)f}{\Big(U^*\bT U -e_{f}\big(U ^*\bT U \big)\Big)f}= \s_{f}\big(U ^*\bT U\big).
\end{align}
In our case, for $U=\pi_1(g_1)$ and $\bT=\breve{Q}$, it is easy to verify that 
\begin{equation}
\pi_1(g_1)^*\breve{Q}\pi_1(g_1)= g_1 I+\breve{Q}.
\label{eq:4}
\end{equation}

To interpret (\ref{eq:4}) we turn to notions from spectral theory (for more on spectral theory and observables see Subsection \ref{Sec_obs}). First note that the spectrum of $\breve{Q}$ is $\RR$, which is the set of possible values of the Lie group of time. The spectrum of an observable corresponds to the set of possible outcomes of measurements by this observable. In particular $e_f(\breve{Q})$ is always in the convex hall of the spectrum of $\breve{Q}$. To interpret the right hand side of (\ref{eq:4}), note that the spectral family of projections of $\breve{Q}$ coincides with the spectral family of projections of $\breve{Q}+g_1I$, but the value in the spectrum to which each spectral projection corresponds is translated by $g_1$. In the language of eigenvectors and eigenvalues, which is ill suited in this case but helps illustrate the situation, the set of eigenvectors of the observable is kept constant but the eigenvalues are translated by $g_1$. We interpret  the spectral family of projections of an observable as the ``physical dimension'' of the observable (to be defined precisely in Subsection \ref{Sec_obs}). In $\breve{Q}$ for example, the spectral family of projections partitions $\cH$ to subspaces having windows with different time supports. Taking all of the above into account, (\ref{eq:4}) is interpreted as follows. Transforming the observable $\breve{Q}$ by its corresponding unitary operator $\pi_1(g_1)$, is equivalent to keeping the physical dimension of time intact, while translating the values of time by $g_1$. Namely, time-translations do not change the very definition of what time is, but only change the values of time. 

Let us add one last note before we generalize. Note that the $+$ sign in the right hand side of (\ref{eq:4}) corresponds to the group rule in the Lie group of the physical quantity time. Thus, if we want to generalize (\ref{eq:4}) to other Lie groups, with $\bullet$ denoting the group multiplication, (\ref{eq:4}) should take the form
\begin{equation}
\pi_1(g_1)^*\breve{T}_1\pi_1(g_1)= g_1\bullet\breve{T}_1 .
\label{eq:5}
\end{equation}
where $g_1$ is in the Lie group of the physical quantity, $\pi_1$ is a unitary representation of the physical quantity and $\breve{T}_1$ is the unknown observable corresponding to $\pi_1$. 
We call (\ref{eq:5}) the one
parameter canonical commutation relation, and study it in Section \ref{A one parameter localization framework}.
Note that in (\ref{eq:5}) we multiply $\breve{T}_1$ by a group element using the group multiplication, which may seem ill defined. However, in spectral theory this operation has a precise meaning. 
In the next subsection we offer a short discussion on spectral theory and observables. 

In case $G$ is a group with physical quantities as subgroups, and $\pi$ is a representation, the canonical commutation relation (\ref{eq:5}) can be extended to a canonical commutation relation of the group as a whole, namely
\begin{equation}
\pi(g)^*{\bf\breve{T}}\pi_1(g)= g\bullet{\bf\breve{T}}.
\label{eq:555}
\end{equation}
Here, ${\bf\breve{T}}$ is a tuple of observables to be defined in Subsection \ref{The general global localization framework}. We call (\ref{eq:555}) the multi-canonical commutation relation, and study it in Section \ref{The global localization framework}. In Section \ref{Uncertainty minimizers as sparsifying windows} we show how the uncertainty defined for the multi-canonical observable ${\bf\bT}$ is correlated with the sparsifying capability of the generalized wavelet transform. For this, we introduce a model for sparse signals in the context of generalized wavelet transforms. For a sparse signal $s$, the smaller the uncertainty of a window function $f$ is, the more $V_f[s]$ is sparse in some sense.

\subsection{Observables} 
\label{Sec_obs}

In this paper an observable in a separable Hilbert space $\cH$ is a self-adjoint or unitary operator. We denote observables with capital letters with a ``breve'', e.g $\breve{T}$. In the following discussion we show how to interpret a self-adjoint or unitary operator, which is a mapping of vectors to vectors, as an entity that defines and measures physical quantities. The interpretation relies on the spectral theorem. 

An observable in an infinite dimensional separable Hilbert space $\cH$ does not admit an eigen-decomposition in general. Instead it admits a more subtle notion of spectral decomposition, called a projection-valued Borel measure, or PVM. 
 This form of spectral decomposition of unitary or self-adjoint operators is guaranteed by the  spectral theorem. We begin by defining a PVM in our case of unitary or self-adjoint operators. 
\begin{definition}
\label{defi_PVM}
Let $\cH$ be a separable (complex) Hilbert space, and $\SS$ be $\RR$ or $e^{i\RR}$. Let ${\cal B}$ be the standard Borel $\sigma$-algebra of $\SS$, and let ${\cal P}$ be the set of orthogonal projections in $\cH$. A mapping $P:{\cal B}\rightarrow {\cal P}$ is called a \textbf{projection valued Borel measure (PVM)} if
\begin{enumerate}
	\item $P(\SS)=I$ and $P(\emptyset)=0$.
	\item
	If $\{B_n\}_{n\in\NN}$ is a sequence of pairwise disjoint Borel sets, then for every $k\neq j$, $P(B_k)$ and $P(B_j)$ are projections to two orthogonal subspaces, and
	\[P(\bigcup_{n\in\NN}{B_n})=\sum_{n\in\NN}P(B_n).\] 
\end{enumerate}
\end{definition}

Next we describe how a PVM $P:{\cal B}\rightarrow {\cal P}$ is interpreted as a physical quantity, that we shall call \textit{quantity}$_P$. The set $\SS$ is interpreted as a set of numbers that contains the possible values that \textit{quantity}$_P$ can take. For any Borel set $B$ of values of \textit{quantity}$_P$, $P(B)$ is interpreted as the projection upon the subspace ${\rm{Image}}(P(B))\subset\cH$ having windows with values of \textit{quantity}$_P$ in $B$.
Let us make this interpretation concrete with an example. In $\cH=L^2(\RR)$ and $S=\RR$, consider the PVM $P$ that maps every Borel set $B$ to the projection upon the space of functions having support in $B$. Namely for any $f\in L^2(\RR)$ and $x\in\RR$
\begin{equation}
 [P(B)f](x)=\left\{ 
\begin{array}{ccc}
	f(x) & \rm{if} & x\in B \\
	0  & \rm{if} & x\notin B .
\end{array}\right.
\label{pvm_Q}
\end{equation} 
This PVM corresponds to the physical quantity \textit{time} (or position). Indeed, $P(B)$ projects to the space of windows with \textit{time} support in $B$.

The above notion of a physical quantity based on a PVM is related to the notion of an observable by the spectral theorem. The theorem states that any self-adjoint or unitary operator corresponds to a unique PVM (modulu sets of measure 0 in $\cal B$) and vice versa.
Here, we present a ``Riemann-Stieltjes'' formulation of the spectral theorem (see e.g \cite{Spectral_Hilbert}).

\begin{theorem}
\label{spectral theorem}
Let $\breve{T}$ be a self-adjoint or unitary operator in the separable Hilbert space $\cH$. Let $\SS=\RR$ in case $\bT$ is self-adjoint, and $\SS=e^{i\RR}$ in case $\bT$ is unitary.
Then,
there is a PVM, $P:\SS\rightarrow {\cal P}$
such that
\begin{enumerate}
	\item 
	\begin{equation}
\breve{T}=\int_{\SS}\lambda \ dP(\lambda)
\label{eq:spectral_theorem}
\end{equation}
where the integral in (\ref{eq:spectral_theorem}) is defined as follows.
Let ${\bf x}=\{x_0,\ldots,x_n\}$ denote a finite Riemann partition of $\SS$. Let ${\rm diam}({\bf x})$ denote the maximal diameter of intervals in ${\bf x}$. Denote by $[x_k,x_{k+1}]$ a general interval in the partition, and by $[x_0,x_n]$ the union of the intervals of ${\bf x}$. We have
\[\bT= \lim_{ \scriptsize
\begin{array}{c}
	{\rm diam}({\bf x})\rightarrow 0 \cr
	[x_0,x_n]\rightarrow \SS
\end{array}
} \sum_{k=0}^{n-1} x_k P\big([x_k,x_{k+1})\big)  \]
where the limit is in the strong topology in case $\bT$ is unbounded (and thus self-adjoint), and in the operator norm topology otherwise.
	\item
	For $\SS=\RR$,
	$f$ is in the domain of $\bT$ if and only if
	\begin{equation}
	\int_{\RR}\lambda^2 \ d\norm{P\big((-\infty,\lambda]\big)f}^2<\infty
	\label{eq:spectral_domain1}
	\end{equation}
	where the integral in (\ref{eq:spectral_domain1}) is the Riemann-Stieltjes integral with respect to the weight function $\norm{P\big((-\infty,\ \cdot\ ]\big)f}^2:\RR\rightarrow\RR$.
\end{enumerate}

\end{theorem}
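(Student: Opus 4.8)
The statement is the classical spectral theorem, so the plan is to reconstruct the standard proof, organized so that the PVM and the Riemann--Stieltjes representation (\ref{eq:spectral_theorem}) fall out directly. First I would dispose of the case where $\bT$ is bounded and self-adjoint. The engine is the continuous functional calculus: polynomials in $\bT$ form a commutative algebra, and for a self-adjoint operator the spectral radius equals the operator norm, so $\norm{p(\bT)}=\sup_{\sigma(\bT)}\abs{p}$; by the Stone--Weierstrass theorem this extends to an isometric $\ast$-homomorphism $\Phi$ from $C(\sigma(\bT))$ into the bounded operators on $\cH$, with $\Phi(1)=I$ and $\Phi(\mathrm{id})=\bT$. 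For each $f\in\cH$ the map $\phi\mapsto\ip{\Phi(\phi)f}{f}$ is a positive bounded linear functional on $C(\sigma(\bT))$, so by the Riesz--Markov representation theorem it equals $\int\phi\,d\mu_f$ for a finite positive Borel measure $\mu_f$ carried by $\sigma(\bT)$; polarization then produces complex measures $\mu_{f,h}$ with $\ip{\Phi(\phi)f}{h}=\int\phi\,d\mu_{f,h}$.

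The next step promotes $\Phi$ to bounded Borel functions: for bounded Borel $\psi$ the sesquilinear form $(f,h)\mapsto\int\psi\,d\mu_{f,h}$ is bounded and hence defines an operator $\Phi(\psi)$, and one checks by a monotone-class/dominated-convergence argument that $\Phi$ remains multiplicative. Setting $P(B)=\Phi(\mathbf 1_B)$ for Borel $B\subseteq\SS$ (extended by $0$ outside $\sigma(\bT)$): since $\mathbf 1_B$ is real and idempotent, $\Phi(\mathbf 1_B)$ is self-adjoint and idempotent, i.e.\ an orthogonal projection; $P(\SS)=\Phi(1)=I$ and $P(\emptyset)=0$; and countable additivity of each $\mu_{f,h}$ gives property~(2) of Definition~\ref{defi_PVM}. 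The representation (\ref{eq:spectral_theorem}) is now immediate: the finite sum $\sum_k x_k P\big([x_k,x_{k+1})\big)$ equals $\Phi(s_{\mathbf x})$ for the step function $s_{\mathbf x}=\sum_k x_k\mathbf 1_{[x_k,x_{k+1})}$, which converges uniformly on $\sigma(\bT)$ to $\mathrm{id}$ as $\mathrm{diam}(\mathbf x)\to 0$, so by contractivity of $\Phi$ on bounded Borel functions $\Phi(s_{\mathbf x})\to\Phi(\mathrm{id})=\bT$ in operator norm.

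For unitary $\bT$ I would run the same argument with $\SS=e^{i\RR}$, replacing polynomials by Laurent polynomials in $\bT,\bT^{-1}=\bT^{*}$ and using Stone--Weierstrass on the circle; the Cayley transform is an alternative, but the direct route keeps the value set $e^{i\RR}$ transparent. For unbounded self-adjoint $\bT$ I would pass to the Cayley transform $U=(\bT-iI)(\bT+iI)^{-1}$, which is unitary with $1$ not an eigenvalue, take its PVM $P^{U}$ on $e^{i\RR}$ from the previous paragraph, and push it forward under the homeomorphism $e^{i\theta}\mapsto-\cot(\theta/2)$ of $e^{i\RR}\setminus\{1\}$ onto $\RR$ to obtain a PVM $P$ on $\RR$. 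One then checks that $\bT=\int_{\RR}\lambda\,dP(\lambda)$ as a strong limit and that $f$ lies in $\mathrm{dom}(\bT)$ exactly when $\int_{\RR}\lambda^2\,d\norm{P\big((-\infty,\lambda]\big)f}^2<\infty$; the latter is just the assertion that the partial sums $\sum_k x_k P\big([x_k,x_{k+1})\big)f$ are Cauchy in $\cH$, since $\norm{\big(\Phi(s_{\mathbf x})-\Phi(s_{\mathbf x'})\big)f}^2=\int\abs{s_{\mathbf x}-s_{\mathbf x'}}^2\,d\mu_f$ with $\mu_f(B)=\norm{P(B)f}^2$. Uniqueness follows because a second representing PVM would make $\ip{\Phi_P(\phi)f}{h}$ agree on a Stone--Weierstrass-dense family (polynomials in the bounded case, resolvents $(\lambda I-\bT)^{-1}$ in the unbounded case), forcing the measures $\mu_{f,h}$, hence $P$, to coincide up to ${\cal B}$-null sets. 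I expect the unbounded case to be the main obstacle: one has to track domains carefully, confirm that the strong limit in (\ref{eq:spectral_theorem}) converges on precisely $\mathrm{dom}(\bT)$, and reconcile this with (\ref{eq:spectral_domain1}); the promotion from the continuous to the bounded-Borel functional calculus is the other place where a genuine limiting argument, rather than a routine computation, is required.
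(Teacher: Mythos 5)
The paper does not prove this theorem: it is quoted as the classical spectral theorem in a Riemann--Stieltjes formulation, with a citation to the literature in place of a proof. Your outline reconstructs the standard argument (continuous functional calculus via Stone--Weierstrass, Riesz--Markov and polarization to get the measures $\mu_{f,h}$, promotion to the bounded Borel calculus, $P(B)=\Phi(\mathbf 1_B)$, and the Cayley transform for the unbounded case) and is correct in its essentials, including the pushforward $e^{i\theta}\mapsto-\cot(\theta/2)$ and the characterization of $\mathrm{dom}(\bT)$ via Cauchyness of the partial Riemann sums; this is exactly the proof the paper is implicitly deferring to.
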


\begin{remark}(Functional calculus)
\label{Band_limit_poly}
A smooth function $\phi:\SS\rightarrow\CC$ of an observable $\bT$ is defined to be the normal operator
\begin{equation}
\phi(\bT)=\int_{\SS}\phi(\lambda) dP(\lambda).
\label{eq:functional_calculus1}
\end{equation}
defined on the domain of vectors $f\in\cH$ satisfying 
\[\int_{\RR}\abs{\phi(\lambda)}^2 \ d\norm{P\big((-\infty,\lambda]\big)f}^2<\infty.\]
This definition is consistent with polynomials of $\bT$ in the following sense. If $f\in \cH$ is band-limited, namely there exists some compact subset $B\subset\SS$ such that $f=P(B)f$, and if  $\{q_n\}_{n\in\NN}$ is a sequence of polynomials satisfying 
\[\lim_{n\rightarrow\infty}\norm{p_n-\phi}_{L^{\infty}(B)}=0\]
then
\[\lim_{n\rightarrow\infty}\norm{p_n(\bT)f - \phi(\bT)f}_{\cH}=0\]
where $p_n(\bT)$ is in the sense of compositions, additions, and multiplication by scalars of $\bT$, and $\phi(\bT)$ is in the sense of (\ref{eq:functional_calculus1}).
\end{remark}

We say that two observables $\bT_1,\bT_2$, with the same set of values $\SS$, are dimensionally equivalent, if $\bT_1=\phi(\bT_2)$ where $\phi:\SS\rightarrow\SS$ is a diffeomorphism (smooth, with smooth inverse).  
A \textbf{physical dimension} is an equivalence class of dimensionally equivalent observables.
This definition is intuitive. For example,   what makes the time observable an observable of time is its spectral family of projections, and not the specific value corresponding to each projection.

Returning to  (\ref{eq:5}), the group multiplication of $\bT$ by $\lambda$ satisfies
\[ g\bullet\breve{T}  = \int_{\SS} g\bullet\lambda  \ dP(\lambda) \]
where $\breve{T} = \int_{\SS} \lambda \ dP(\lambda)$.
  In other words, transforming the observable by its corresponding operator keeps the physical dimension intact, and only changes the values of the physical quantity.

\section{The one parameter localization framework} 
\label{A one parameter localization framework}

\subsection{Definition of the framework}

We are now ready to introduce our first generalization of uncertainty. 
We define a \textbf{physical quantity} as one of the following numerical Lie groups
\[\{\RR,+\} \quad,\quad \{e^{i\RR},\cdot\} \quad,\quad \{\ZZ,+\} \quad,\quad \{e^{2\pi i \ZZ/N},\cdot\} \quad (N\in\NN).\]
This set of Lie groups exhausts up to isomorphism a set of zero dimensional and one dimensional Lie groups that satisfy some regularity conditions (abelian and connected one dimensional-locally compact groups, or cyclic-discrete groups).
{The phase space $G$ in our construction of wavelets is defined to be a manifold direct product of physical quantities, with assumptions listed in Assumption \ref{ass_voice}.}

{ General wavelet transform frameworks usually stem from generalizing and abstracting the two classical examples of the STFT and the continuous 1D wavelet transform. The classical general formulation of general wavelet transforms was developed in \cite{gmp0}\cite{gmp}. There, the space $G$ is a locally compact topological group, and the mapping $\pi$ is a strongly continuous square integrable representation. Our construction stems from a special case of the classical general framework, where we assume that the group $G$ has a manifold direct product structure of physical quantities, while the group structure is not a group direct product in general. In the classical theory, given a square integrable representation, there is a complete characterization of the set of vectors that are allowed to be taken as windows, namely admissible vectors. These are given as the vectors in the domain of a uniquely defined operator, called the Duflo-Moore operator. Moreover, a reconstruction formula of the wavelet transform is given in term of this Duflo-Moore operator. These properties, proven in the classical theory for wavelet transforms based on group representation, are also true in some special transforms which are not based on group representations, such as the Curvelet transform.
While our construction is based on a special case of a square integrable representation, we do allow a slight generalization. For the framework to include the important example of the Curvelet transform, we take the properties of the classical theory as assumptions, rather than having them as theorems resulting from the group representation structure.}
The following list summarizes our assumptions on generalized wavelet transforms.

\begin{assumption}[Generalized wavelet transform]
\label{ass_voice}
A generalized wavelet transform is constructed by, and assumed to satisfy, the following.
\begin{enumerate}
	\item 
	Consider a tuple of physical quantities $G_1,\ldots,G_n$ where $G_k$ is called $quantity_k$. We denote by the same notation $\bullet$ the group product of each $G_k$.
	\item
	$G$ is a (manifold) direct product of the manifolds $G_1,\ldots,G_n$ (note that $G$ is not a group in general).
	\item
	We consider a radon measure $dg$ on the manifold $G$.
	\item
	$\pi_k$ are a strongly continuous unitary (SCU) representations of $G_k$ in the separable Hilbert space $\cH$, $k=1,\ldots,n$. Namely $\pi_k(g_k)$ is a unitary operator in $\cH$ and $\pi_k(g_k\bullet g'_k)=\pi_k(g_k)\circ\pi_k(g'_k)$ for any $g_k,g_k'\in G_k$. Here $\circ$ is composition. 
	\item
	For any $g=(g_1,\ldots,g_k)\in G$, we define $\pi(g)=\pi_1(g_1)\circ\ldots\circ\pi_n(g_n)$ (note that $\pi$ is not a group representation in general).
	\item
	There exists a densely defined positive self-adjoint operator $A$ on $\cH$, with densely defined inverse, such that $V_f[f]\in L^2(G)$ if and only if $f$ is in the domain $\cA$ of $A$. The domain $\cA$, which is dense in $\cH$, is called the space of admissible vectors.  
	In our context we also call $\cA$ \textbf{the window space}, and call vectors in $\cA$ \textbf{windows}.
	\item
	\label{resolution_id}
	Given windows $f,h$, and signals $s,q$, the \textbf{wavelet transform} $V_f:\cH\rightarrow L^2(G)$ defined by $V_f[s](g)=\ip{s}{\pi(g)f}$ satisfies
	\[\ip{Ah}{Af}_{\cH}\ip{s}{q}_{\cH}=\ip{V_f[s]}{V_h[q]}_{L^2(G)}.\]
\end{enumerate}
\end{assumption}

\begin{remark}
$ $
\begin{enumerate}
	\item 
	\label{reconstruction1}
	Item \ref{resolution_id} in Assumption \ref{ass_voice} can be read off as a reconstruction formula. Namely
\begin{equation}
\ip{Ah}{Af}_{\cH}s=\int V_f[s](g) \pi(g)h \ dg
\label{eq:inversionR}
\end{equation}
where the convergence in the definition of the integral is in the weak topology.
\item
$V_f$ is also called the \textbf{analysis operator} corresponding to the window $f$. For a window $h$, $V_h^*$ is called the \textbf{synthesis operator} corresponding to $h$. For $Q\in L^2(G)$, we have
\begin{equation}
V_h^*[Q] = \int Q(g) \pi(g)h dg
\label{eq:wave_inversion1}
\end{equation}
where the integral is defined in the weak sense as in \ref{reconstruction1}.
The reconstruction formula (\ref{eq:inversionR}) can be written in the form
$\ip{Ah}{Af}_{\cH}s=V_h^*V_f[s]$.
\end{enumerate}
\end{remark}

\begin{remark}
\label{ass_voice_square}
An important class of a generalized wavelet transforms is when $G$ is a locally compact topological group, $dg$ is the left Haar measure, and $\pi$ is a square integrable (irreducible) representation.
In this case, sections 6 and 7 of Assumption \ref{ass_voice} are theorems, and $A$ is called the Duflo-Moore operator \cite{gmp0}\cite{gmp}. 
\end{remark}
Generalized wavelet transforms based on square integrable representations include the wavelet and the Shearlet transforms, and the STFT.
An important example in which $G$ is not a group, but Assumption \ref{ass_voice} is still satisfied, is the continuous Curvelet transform.

A generalized wavelet transform is interpreted as a procedure of measuring the content of a signal by probing it at different values of $quantity_1,\ldots,quantity_n$. Since our goal is to measure these values as accurately as possible, we want to define corresponding observables, and notions of localization. 
\begin{definition}
Let $\pi_k$ be a  SCU representation of the physical quantity $G_k$. An observable $\bT_k$ satisfying the \textbf{canonical commutation relation}
\begin{equation}
\pi_k(g_k)^* \bT_k \pi_k(g_k) = g_k\bullet \bT \quad , \quad \forall g_k\in G_k 
\label{eq:5a}
\end{equation}
is called a \textbf{canonical observable} of $\pi_k$.
\end{definition}
To each representation $\pi_k$ from Assumption \ref{ass_voice} we define a corresponding canonical observable $\breve{T}_k$.
Once we have the canonical observables, we may define the uncertainty of a window $f$ as some combination of the variances $\{\s_f(\breve{T}_k)\}_{k=1}^n$, and look for an optimal window that minimizes this uncertainty.
For example, in the STFT if we define the uncertainty of a window either as the product or as the sum of it's time and frequency variances, the optimal windows in either case are modulated Gaussians.

\subsection{Analysis of the canonical commutation relation}
\label{Solving the canonical commutation relation0}

In this subsection we show how to restrict the pair $\{\pi,\bT\}$ to a special case, called a canonical system. For canonical systems, there is a procedure for solving the canonical commutation relation (\ref{eq:5a}), given in Subsection \ref{Solving the canonical commutation relation}. We motivate the definition of a canonical system using heuristic arguments on the roles of $\pi$ and $\bT$.
Since in both this section and the next we study a single representation of one physical quantity, we omit subscripts. Namely, we denote the physical quantity by $G$, its representation by $\pi$ and the canonical observable by $\bT$. 

First we recall some basic facts from harmonic analysis \cite{H}.
A character of an abelian group $K$ is a homomorphism $\chi:K\rightarrow \{e^{i\RR},\cdot\}$. The set of characters of $K$, denoted by $\hat{K}$, is an abelian group with the group rule $(\chi\eta)(k)=\chi(k)\eta(k)$ for $\eta,\chi\in\hat{K}$ and $k\in K$, where in the right hand side the multiplication is in $\CC$. The Pontryagin duality states that the group of characters of $\hat{K}$ is isomorphic to $K$. If $K$ is compact then $\hat{K}$ is discrete and vice-versa. 

In our case, the group is a physical quantity $G$.
The following list exhausts the Lie groups of physical quantities and their Pontryagin duals (up to isomorphism). If $G=\{\RR,+\}$ then $\hat{G}= G$, if $G=\{e^{i\RR},\cdot\}$ then $\hat{G}=\{\ZZ,+\}$ and vise-versa, and if $G=\{e^{2\pi i\ZZ/N},\cdot\}$ then $\hat{G}= G$. We assume, with abuse of notation, that $\hat{G}$ is equal to a physical quantity. When we want to treat $\hat{G}$ as a group of characters, we denote it by $\chi(G)$.

Let $\pi$ be a SCU representation of the physical quantity $G$, and let $\bT$ be a corresponding canonical observable.
First we characterize the spectrum ${\rm spec}(\bT)$ of $\bT$.
\begin{proposition} 
\label{Prop_spectrum_of_observ}
Let $\pi$ be a SCU representation of the physical quantity $G$, and let $\bT$ be a canonical observable of $\pi$. Then
\begin{itemize}
	\item 
	If $G=\RR$ or $G=e^{i\RR}$, then ${\rm spec}(\bT)=G$.
	\item
	If $G=\ZZ$ or $G=e^{2\pi i \ZZ/N}$, then ${\rm spec}(\bT)\subset \RR$ or ${\rm spec}(\bT)\subset e^{i\RR}$ respectively.
\end{itemize}
\end{proposition}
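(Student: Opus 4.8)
The plan is to turn the canonical commutation relation (\ref{eq:5a}) into the statement that $\bT$ is unitarily conjugate to a simple transform of itself, and then read off the consequences for ${\rm spec}(\bT)$.

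The two discrete cases are essentially immediate. The shape of the right-hand side $g\bullet\bT$ in (\ref{eq:5a}) already fixes the type of $\bT$: when $G=\RR$ or $G=\ZZ$ this is an additive shift, which only makes sense when the value set of $\bT$ is $\SS=\RR$, i.e.\ $\bT$ is self-adjoint; when $G=e^{i\RR}$ or $G=e^{2\pi i\ZZ/N}$ it is multiplication by a scalar of modulus $1$, which forces $\SS=e^{i\RR}$, i.e.\ $\bT$ is unitary. By the spectral theorem (Theorem \ref{spectral theorem}), a self-adjoint operator has spectrum contained in $\RR$ and a unitary operator has spectrum contained in $e^{i\RR}$; this is exactly the claim for $G=\ZZ$ and for $G=e^{2\pi i\ZZ/N}$.

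For $G=\RR$ and $G=e^{i\RR}$ I would argue as follows. Writing $\bT=\int_{\SS}\lambda\,dP(\lambda)$ and using the definition of $g\bullet\bT$ through the functional calculus (\ref{eq:functional_calculus1}), namely $g\bullet\bT=\int_{\SS}(g\bullet\lambda)\,dP(\lambda)$, we get $g\bullet\bT=\bT+gI$ when $G=\RR$, and $g\bullet\bT=g\bT$ (a scalar multiple, with $|g|=1$) when $G=e^{i\RR}$. On the other hand $\pi(g)$ is unitary, so $\pi(g)^*\bT\pi(g)$ is again self-adjoint (resp.\ unitary) with the same spectrum as $\bT$. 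Combining these facts with (\ref{eq:5a}) yields, for every $g\in G$,
\[
{\rm spec}(\bT)={\rm spec}\big(\pi(g)^*\bT\pi(g)\big)={\rm spec}(g\bullet\bT)=g\bullet{\rm spec}(\bT),
\]
where $g\bullet(\cdot)$ acts on $\SS=G$ as translation by $g$ when $G=\RR$ and as rotation by $g$ when $G=e^{i\RR}$.

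To finish: since $\cH\neq\{0\}$ the set ${\rm spec}(\bT)$ is nonempty, and the displayed identity says it is invariant under the action of the whole group $G$ on $\SS=G$, an action which is transitive in both cases. Hence the orbit of any single point of ${\rm spec}(\bT)$ is all of $\SS$, so ${\rm spec}(\bT)=G$. The functional-calculus identities and this orbit argument are routine; the one step that needs genuine care is the unbounded case $G=\RR$, where one must verify that $\pi(g)^*\bT\pi(g)$ --- a priori only a symmetric operator on the transported domain $\pi(g)^*{\rm Dom}(\bT)$ --- is in fact self-adjoint and unitarily equivalent to $\bT$ (hence equispectral). I expect this to be the main technical point; for $G=e^{i\RR}$ there is nothing to check here because $\bT$ is then bounded.
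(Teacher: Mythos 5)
Your proposal is correct and follows essentially the same route as the paper: unitary conjugation preserves the spectrum, the functional calculus gives ${\rm spec}(g\bullet\bT)=g\bullet{\rm spec}(\bT)$, and nonemptiness of the spectrum together with transitivity of the $G$-action on $\SS=G$ forces ${\rm spec}(\bT)=G$ in the continuous cases, while self-adjointness/unitarity alone gives the containments in the discrete cases. Your remark about checking self-adjointness of $\pi(g)^*\bT\pi(g)$ on the transported domain is a fair point of care that the paper glosses over, but it does not change the argument.
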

\begin{proof}
First, the spectrum of any normal operator is non-empty, so there exists $\lambda_0\in{\rm spec}(\bT)$ where $\lambda_0\in\CC$.
consider the canonical commutation relation
\[\forall g\in G. \quad \pi(g)^*\bT\pi(g)=g\bullet \bT\]
On the one hand, note that conjugating any operator $A$ with a unitary operator, doesn't change the spectrum of $A$, so
${\rm spec}\big(\pi(g^{-1})\bT\pi(g)\big)={\rm spec}\big(\bT\big)$. On the other hand, note that
\[{\rm spec}(\bT)={\rm spec}\big(\pi(g^{-1})\bT\pi(g)\big)={\rm spec}\big(g\bullet\bT \big) = g\bullet{\rm spec}\big(\bT\big)  =\{g\bullet\lambda \ |\ \lambda\in{\rm spec}(\bT)\}\]
This is true for any $g\in G$, so 
\begin{equation}
G\bullet{\rm spec}(\bT) ={\rm spec}(\bT) \quad , \quad  G\bullet\lambda_0 \subset{\rm spec}(\bT).
\label{eq:for spectrum proof}
\end{equation}

As a result of (\ref{eq:for spectrum proof}), the following list exhausts all of the cases of $\bT$ and $G$.
Since $\bT$ is unitary or self-adjoint, ${\rm spec}(\bT)$ is a subset of $e^{i\RR}$ or of $\RR$ respectively.
As a result, if $G=\RR$ or $G=\ZZ$, we must have ${\rm spec}(\bT)\subset \RR$, and if  $G=e^{i\RR}$ or $G=e^{2\pi i \ZZ/N}$, we must have ${\rm spec}(\bT)\subset e^{i\RR}$.
In case $G=\RR$ or $G=e^{i\RR}$, we must have ${\rm spec}(\bT)=G$.
\end{proof}

Since the role of $\bT$ is to measure $quantity_G$, we further demand the following assumption.
\begin{assumption}
\label{ass_sigma_G}
${\rm spec}(\bT)=G$.
\end{assumption}

Next we show that under Assumption \ref{ass_sigma_G}, the roles in the canonical commutation relation (\ref{eq:5a}) of the observable $\bT$ and the representation $\pi$ are interchangeable in some sense. To see this we need to derive an observable from the representation $\pi$, and to generate a representation from the observable $\bT$. 
We start by deriving the observable from $\pi$.
In case $G$ is one-dimensional, by Stone's theorem on one parameter unitary groups, there is a self-adjoint generator $T$ of the unitary $\pi(G)$ \cite{Stone}. Namely, every element of $\pi(G)$ can be written as $e^{it T}$, where $t\in\RR$. In case $G$ is zero-dimensional, there is an element $T\in\pi(G)$ that generates $\pi(G)$. Namely, every element of $\pi(G)$ can be written as $T^n$, where $n\in\ZZ$.
Now, the idea is that the canonical observable $\bT$ can be treated as a generator of a unitary group $\breve{\pi}(\hat{G})$, which can be treated as a representation of the physical quantity $\hat{G}$, whereas the generator $T$ of the unitary group $\pi(G)$ can be taken as a canonical observable of $\breve{\pi}(\hat{g})$. 

We show the construction for the case of $G=\{\ZZ,+\}$. The other cases are treated similarly. 
Define $T:=\pi(-1)$, and note that $T$ generates the unitary group $\pi(G)$.

\begin{claim}
\label{claim_replace4}
$T e^{iq\breve{T}}=e^{iq(\breve{T}+I)}T$.
\end{claim}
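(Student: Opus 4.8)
The goal is to prove the operator identity $T e^{iq\breve{T}} = e^{iq(\breve{T}+I)}T$ for the case $G = \{\ZZ,+\}$, where $T := \pi(-1)$. The key input is the canonical commutation relation, which for $G = \ZZ$ with additive group law reads $\pi(g)^*\breve{T}\pi(g) = g + \breve{T}$ for all $g \in \ZZ$; specializing to $g = -1$ and using $\pi(-1)^* = \pi(1) = T^{-1}$ gives $T^{-1}\breve{T} T = \breve{T} - I$, equivalently $\breve{T} T = T(\breve{T} - I)$, or $T^{-1}\breve{T}T = \breve{T} - I$, i.e. $T\breve{T}T^{-1} = \breve{T} + I$. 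So conjugation by $T$ shifts $\breve{T}$ by $+I$. The plan is to promote this shift-by-conjugation property from $\breve{T}$ itself to the unitary $e^{iq\breve{T}}$ via functional calculus.

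First I would establish the conjugation identity $T\breve{T}T^{-1} = \breve{T} + I$ carefully as operators, keeping track of domains: if $P$ is the PVM of $\breve{T}$ (guaranteed by the spectral theorem, Theorem \ref{spectral theorem}), then $T\breve{T}T^{-1}$ is self-adjoint with PVM $B \mapsto TP(B)T^{-1}$, and the relation says this equals the PVM of $\breve{T}+I$, namely $B \mapsto P(B - 1)$ (translation of Borel sets). Hence $TP(B)T^{-1} = P(B-1)$ for every Borel set $B$. Next, for the bounded normal operator $\phi(\breve{T}) = \int_{\SS}\phi(\l)\,dP(\l)$ with $\phi(\l) = e^{iq\l}$, I would compute
\[
T\,e^{iq\breve{T}}\,T^{-1} = T\Big(\int_{\SS} e^{iq\l}\,dP(\l)\Big)T^{-1} = \int_{\SS} e^{iq\l}\,d\big(TP(\l)T^{-1}\big) = \int_{\SS} e^{iq\l}\,dP(\l-1),
\]
and then substitute $\mu = \l - 1$ to get $\int_{\SS} e^{iq(\mu+1)}\,dP(\mu) = e^{iq}\int_{\SS}e^{iq\mu}\,dP(\mu) = e^{iq}e^{iq\breve{T}} = e^{iq(\breve{T}+I)}$, using that $e^{iq(\breve{T}+I)} = e^{iqI}e^{iq\breve{T}} = e^{iq}e^{iq\breve{T}}$ since $I$ commutes with $\breve{T}$. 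This yields $T e^{iq\breve{T}} T^{-1} = e^{iq(\breve{T}+I)}$, and multiplying on the right by $T$ gives the claim.

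Alternatively — and perhaps cleaner to present — one can avoid the Riemann–Stieltjes manipulation of the PVM by working directly with the one-parameter unitary group generated by $\breve{T}$ (Stone's theorem): the map $q \mapsto T e^{iq\breve{T}} T^{-1}$ is a strongly continuous one-parameter unitary group, its generator is $T\breve{T}T^{-1} = \breve{T}+I$ (this step uses the conjugation identity, and that conjugation by a fixed unitary is strongly continuous so it commutes with the differentiation defining the generator), hence by uniqueness in Stone's theorem $T e^{iq\breve{T}} T^{-1} = e^{iq(\breve{T}+I)}$. I would likely use this route in the writeup since it sidesteps the measure-transport bookkeeping.

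The main obstacle is the domain/closedness subtlety in passing from the bounded relation $TP(B)T^{-1} = P(B-1)$ (or equivalently from $T\breve{T}T^{-1} = \breve{T}+I$, which is an identity of unbounded self-adjoint operators with the natural domain $T(\mathrm{dom}\,\breve{T})$) to the functional-calculus identity; one must check that conjugation by the unitary $T$ intertwines the spectral measures correctly, i.e. that the PVM of $T\breve{T}T^{-1}$ really is $B \mapsto TP(B)T^{-1}$. This is standard but should be stated, since everything else is a routine change of variables in the spectral integral and the trivial observation that $e^{iqI} = e^{iq}I$. The zero-dimensional nature of $G = \ZZ$ plays no special role beyond pinning down $g = -1$; the other physical-quantity cases go through with the analogous shift (or multiplicative twist) in the spectral variable.
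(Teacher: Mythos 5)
Your proof is correct, but it takes a genuinely different route from the paper's. The paper proves the identity by iterating $T\breve{T}=(\breve{T}+I)T$ to get $T\breve{T}^k=(\breve{T}+I)^kT$, substituting term by term into the power series of the exponential, and then making this rigorous by a density argument on the subspace of band-limited vectors (in the sense of Remark \ref{Band_limit_poly}), where the partial sums of the exponential series converge. You instead start from the same conjugation identity $T\breve{T}T^{-1}=\breve{T}+I$ but promote it to the exponential via transport of the spectral measure under unitary conjugation (or, in your alternative, via uniqueness in Stone's theorem applied to the group $q\mapsto Te^{iq\breve{T}}T^{-1}$). Your route sidesteps the series-convergence and density bookkeeping entirely: once you grant the standard fact that the PVM of $T\breve{T}T^{-1}$ is $B\mapsto TP(B)T^{-1}$, the rest is a change of variables, and the argument extends verbatim to any bounded Borel function of $\breve{T}$ rather than just exponentials. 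The paper's approach is more self-contained relative to its own Remark \ref{Band_limit_poly} on functional calculus via polynomials, but is longer and requires the band-limited density step. The one point you rightly flag — that conjugation by a unitary intertwines spectral measures, so the identity of unbounded self-adjoint operators passes to their functional calculi — is indeed the only nontrivial input, and it is standard; stating it explicitly, as you propose, is the right level of care.
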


\begin{proof}
The canonical commutation relation reads
\[T^{g}\breve{T}T^{-g}=\pi(g)^*\breve{T}\pi(g)=\breve{T}+gI \quad , \quad g\in\ZZ. \]
By taking $g=1$ we have
\[T\breve{T}=(\breve{T}+I)T.\]
Thus by induction
\begin{equation}
T\breve{T}^k=(\breve{T}+I)^kT.
\label{eq:help1}
\end{equation}

The idea now is to use the series expansion of the exponential map and to substitute (\ref{eq:help1}) term by term to get
\begin{equation}
T e^{iq\breve{T}}=T\big(\sum_{k=0}^{\infty}\frac{(iq)^k}{k!}\breve{T}^k\big)=\big(\sum_{k=0}^{\infty}\frac{(iq)^k}{k!}(\breve{T}+I)^k\big)T=e^{iq(\breve{T}+I)}T.
\label{eq:}
\end{equation}
To make this formal, we need a density argument. 
Following Remark \ref{Band_limit_poly}, we consider the space of band-limited signals $\cH_{\rm bl}$ with respect to $\bT$.
By Remark \ref{Band_limit_poly}, for band-limited vectors we have 
\begin{equation}
\lim_{K\rightarrow\infty}\sum_{k=0}^{K}\frac{(iq)^k}{k!}\bT^k\ f = e^{i q\bT}f.
\label{eq:in_claim_4t5}
\end{equation}
By the unitarity of $T$ and by
\[T\big(\sum_{k=0}^{K}\frac{(iq)^k}{k!}\breve{T}^k\big)=\big(\sum_{k=0}^{K}\frac{(iq)^k}{k!}(\breve{T}+I)^k\big)T\]
equation (\ref{eq:in_claim_4t5}) shows that $\sum_{k=0}^{K}\frac{(iq)^k}{k!}(\breve{T}+I)^k y$ converges to $T e^{iq\breve{T}} T^* y$ for any $y$ in the dense subspace $T\cH_{\rm bl}$.
Since the series $\sum_{k=0}^{K}\frac{(iq)^k}{k!}(\breve{T}+I)^k y$ also converges to $e^{iq(\breve{T}+I)}$ in the dense subspace $\cH_{\rm bl}$, by continuity of $T e^{iq\breve{T}} T^*$ and $e^{iq(\breve{T}+I)}$ we must have
\[T e^{iq\breve{T}}T^*f=e^{iq(\breve{T}+I)}f\]
For any $f\in\cH$.
\end{proof}

Let us now define $\breve{\pi}$ and show the canonical commutation relation for our case of $G=\{\ZZ,+\}$.
By Claim \ref{claim_replace4} and since $I$ commutes with every operator, we have 
\begin{equation}
T e^{iq\breve{T}}=e^{iq(\breve{T}+I)}T=e^{iq} e^{iq\breve{T}}T.
\label{eq:last_cheee1}
\end{equation}
Consider the mapping $\{\RR,+\}\rightarrow {\cal U}(\cH)$, $q\mapsto e^{iq\breve{T}}$. 
 By Assumption \ref{ass_sigma_G}, ${\rm spec}(\bT)=\ZZ$, so by Remark \ref{Band_limit_poly}, $e^{iq\breve{T}}=e^{i(q+2\pi)\breve{T}}$ for every $q\in\RR$. Thus we define
\[\breve{\pi}:\{e^{i\RR},\cdot\}\rightarrow {\cal U}(\cH) \quad , \quad \breve{\pi}(e^{iq})=e^{iq\breve{T}}\]
and note that $\breve{\pi}$ is a SCU representation of $\hat{G}$.
To conclude, (\ref{eq:last_cheee1}) can now be written as
\[\forall e^{iq} \in\hat{G}. \quad \breve{\pi}(e^{iq})^*T\breve{\pi}(e^{iq})=e^{iq}\bullet T.\]

Let us now study the spectrum of $T$ in the general case.
In the case where $G$ is one dimensional, by Proposition \ref{Prop_spectrum_of_observ}, ${\rm spec}(T)=\hat{G}$. For the other cases we adopt an assumption
\begin{assumption}
\label{ass_sigma_hat_G}
${\rm spec}(T)=\hat{G}$.
\end{assumption}

We summarize our construction and assumptions in the following definition.
\begin{definition}
\label{def:canonical system}
 $\{G,\pi,T,\hat{G},\breve{\pi},\bT\}$ is called a \textbf{canonical system}, if
 $\pi$ and $\breve{\pi}$ are representations of the physical quantities $G$ and $\hat{G}$ respectively, $T$ and $\bT$ are generators  (or infinitesimal generators) of $\pi(G)$ and $\breve{\pi}(\hat{G})$ respectively satisfying ${\rm spec}(\bT)=G$ and ${\rm spec}(T)=\hat{G}$, and $\bT$ is a canonical observable of $\pi$.
\end{definition}

Note that the representations in a canonical system must be faithful. Otherwise, if for $g\neq e$ in $G$ we have $\pi(g)=I$, then
\[\bT=\pi(g)^*\bT\pi(g)= g\bullet \bT\]
which is a contradiction, since the mapping $g'\mapsto g\bullet g'$ has no fixed points. This is also true for $\breve{\pi}$.
To conculde the above results, the following list exhausts all possibilities of canonical systems.
\begin{proposition}
 Let $\{G,\pi,T,\breve{\pi},\bT\}$ be a canonical system. Then
\label{canonical_pairs}
\begin{enumerate}
	\item 
	If $G=\{\RR,+\}$: $\hat{G}= G$. $T$ and $\breve{T}$ are self-adjoint with ${\rm spec}(T)={\rm spec}(\breve{T})=\RR$. $\pi(g)=e^{ig T}$ and $\breve{\pi}(q)=e^{iq\breve{T}}$ are SCU faithful representations of $\{\RR,+\}$. Here, $g$ denotes elements of $G$ and $q$ denotes elements of $\hat{G}$.
	\item
	If $G=\{e^{i\RR},\cdot\}$: $\hat{G}=\{\ZZ,+\}$. $T$ is self-adjoint with ${\rm spec}(T)=\ZZ$ and $\breve{T}$ is unitary with ${\rm spec}(T)=e^{i\RR}$. $\pi(e^{i \theta})=e^{i\theta T}$ is a SCU faithful representation of $\{e^{i\RR},\cdot\}$, and $\breve{\pi}(n)=\breve{T}^{n}$ is a SCU faithful representation of $\{\ZZ,+\}$. Here, $e^{i\theta}$ with $\theta\in \left[0,2\pi\right)$ denotes elements of $G$ and $n$ denotes elements of $\hat{G}$.
	\item
	If $G=\{\ZZ,+\}$: $\hat{G}=\{e^{i\RR},\cdot\}$. The rest is as in case (2), with the roles of $T,\pi$ and $\breve{T},\breve{\pi}$ interchanged.
	\item
	If $G=\{e^{2\pi i \ZZ/N},\cdot\},\  (N\in\NN)$: $\hat{G}= G$. $T$ and $\breve{T}$ are unitary with ${\rm spec}(T)={\rm spec}(\breve{T})=e^{2\pi i\ZZ/N}$, and $\pi(e^{\frac{2\pi i}{N}n})=T^{n}$ and $\breve{\pi}(e^{\frac{2\pi i}{N}m})=\breve{T}^{m}$ are SCU faithful representations of $\{e^{2\pi i \ZZ/N},\cdot\}$. Here, $e^{\frac{2\pi i}{N}n}$,$e^{\frac{2\pi i}{N}m}$ with $n,m\in (\ZZ \ {\rm mod}\ N)$ denote elements of $G$ and $\hat{G}$ respectively.
\end{enumerate}
\end{proposition}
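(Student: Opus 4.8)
The plan is to prove Proposition \ref{canonical_pairs} by assembling the results already established in the excerpt and then carrying out the (essentially mechanical) case analysis for the zero-dimensional groups in parallel with the one-dimensional ones. First I would recall the classification of physical quantities and their Pontryagin duals given just before Proposition \ref{Prop_spectrum_of_observ}: $\widehat{\{\RR,+\}}=\{\RR,+\}$, $\widehat{\{e^{i\RR},\cdot\}}=\{\ZZ,+\}$ and conversely, and $\widehat{\{e^{2\pi i\ZZ/N},\cdot\}}=\{e^{2\pi i\ZZ/N},\cdot\}$. This immediately fixes the group $\hat G$ in each of the four items. Then, by Definition \ref{def:canonical system}, a canonical system already comes equipped with the conditions ${\rm spec}(\bT)=G$ and ${\rm spec}(T)=\hat G$, so the only remaining content of the spectral claims is to record whether each of $T$ and $\bT$ is self-adjoint or unitary, and this is forced by whether the spectrum sits inside $\RR$ or inside $e^{i\RR}$: a connected noncompact value set ($\RR$) forces self-adjointness, a circle value set ($e^{i\RR}$) forces unitarity, and a finite value set $e^{2\pi i\ZZ/N}$ forces unitarity. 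I would thus run through the four cases and in each simply read off the types of $T,\bT$ and the shape of $\pi,\breve\pi$ (exponential $e^{igT}$ in the one-dimensional factors, powers $T^n$ in the cyclic factors) as already described in Subsection \ref{Solving the canonical commutation relation0}.

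The second ingredient is the construction carried out in the excerpt for $G=\{\ZZ,+\}$: there we set $T:=\pi(-1)$, proved Claim \ref{claim_replace4}, defined $\breve\pi(e^{iq})=e^{iq\bT}$, verified using Assumption \ref{ass_sigma_G} and Remark \ref{Band_limit_poly} that $e^{iq\bT}=e^{i(q+2\pi)\bT}$ so that $\breve\pi$ is a well-defined SCU representation of $\hat G=\{e^{i\RR},\cdot\}$, and obtained the dual canonical commutation relation $\breve\pi(e^{iq})^*T\breve\pi(e^{iq})=e^{iq}\bullet T$. For the proof of the proposition I would point out that each of the other three cases is handled by precisely the same argument with the obvious substitutions: for $G=\{e^{i\RR},\cdot\}$ one takes $T$ to be the Stone generator of the one-parameter group $\pi(G)$ and $\breve\pi(n)=\bT^n$; for $G=\{\RR,+\}$ both $T$ and $\bT$ are Stone generators and $\pi,\breve\pi$ are the corresponding one-parameter groups; for $G=\{e^{2\pi i\ZZ/N},\cdot\}$ one takes $T=\pi(e^{2\pi i/N})$ and $\breve\pi(e^{2\pi i m/N})=\bT^m$, using that $\bT^N=I$ because ${\rm spec}(\bT)=e^{2\pi i\ZZ/N}$ consists of $N$-th roots of unity. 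In every case faithfulness follows from the short argument already given right after Definition \ref{def:canonical system}: if $\pi(g)=I$ for $g\neq e$ then $\bT=g\bullet\bT$, contradicting that $g'\mapsto g\bullet g'$ is fixed-point free, and symmetrically for $\breve\pi$.

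Concretely, the key steps in order are: (1) invoke the Pontryagin-dual table to pin down $\hat G$ in each item; (2) invoke Proposition \ref{Prop_spectrum_of_observ} together with Assumptions \ref{ass_sigma_G} and \ref{ass_sigma_hat_G} (built into Definition \ref{def:canonical system}) to get ${\rm spec}(\bT)=G$ and ${\rm spec}(T)=\hat G$ exactly; (3) deduce the self-adjoint/unitary dichotomy for $T$ and $\bT$ from the ambient set $\RR$ vs.\ $e^{i\RR}$ in which the spectrum lies, and record that a unitary with spectrum in $e^{2\pi i\ZZ/N}$ satisfies the $N$-th power identity; (4) write $\pi$ and $\breve\pi$ in exponential or power form according to dimension, using Stone's theorem in the one-dimensional case and the single generator in the zero-dimensional case, exactly as in the $G=\{\ZZ,+\}$ model computation; (5) conclude faithfulness via the fixed-point-free argument. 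I expect no genuine obstacle here — the substantive work (solving the canonical commutation relation, establishing the dual representation, handling the band-limit density argument) was already done in Claim \ref{claim_replace4} and the surrounding discussion. The only point demanding a little care is making sure the well-definedness step — that $e^{iq\bT}$ descends to a representation of the \emph{compact} dual $e^{i\RR}$ rather than of $\RR$ in the cases $G=\{\ZZ,+\}$ and $G=\{e^{i\RR},\cdot\}$ (and the finite analogue for $G=\{e^{2\pi i\ZZ/N},\cdot\}$) — is justified by the spectral condition via Remark \ref{Band_limit_poly}, which is why Assumptions \ref{ass_sigma_G} and \ref{ass_sigma_hat_G} are needed and are folded into the definition of a canonical system.
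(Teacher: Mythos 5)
Your proposal is correct and follows essentially the same route as the paper, which states Proposition \ref{canonical_pairs} as a summary of the preceding subsection rather than giving a separate argument: the Pontryagin-dual table fixes $\hat G$, the spectral conditions in Definition \ref{def:canonical system} (Assumptions \ref{ass_sigma_G} and \ref{ass_sigma_hat_G}) force the self-adjoint/unitary dichotomy, Stone's theorem or the single generator gives the exponential/power form as in the model computation of Claim \ref{claim_replace4}, and faithfulness is the fixed-point-free argument given right after the definition. Nothing is missing.
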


\subsection{Solving the canonical commutation relation}
\label{Solving the canonical commutation relation}

In this subsection we present a general procedure for finding a canonical observable for a given representation of a physical quantity. 
The construction is guaranteed under the assumption that $\{G,\pi\}$ are members of a canonical system.
We base our construction on the the Stone-von Neumann-Mackey theorem \cite{SVNM_original}, and give a restricted version of the theorem for abelian groups, the proof of which can be found in \cite{SVNM}.

Let us first recall the definition of generalized Heisenberg groups (see e.g \cite{SVNM}).
Let $K$ be a locally compact abelian Lie group, and let $\chi(K)$ be its dual group of characters. Consider the following unitary operators on $L^2(K)$ (where the Haar measure is used to define the inner product). Generalized left translation operators:
\[L(k) f(t)=f(k^{-1}\bullet t) \quad , \quad k\in K\]
and generalized modulation operators:
\[M({\chi}) f(t)=\chi(t)f(t) \quad , \quad \chi\in \chi(K).\]
These operators satisfy the commutation relation
\begin{equation}
[L(k),M({\chi})]=L(k)^*M({\chi})^*L(k)M({\chi})=\overline{\chi(k)}I. 
\label{general_canonical_com}
\end{equation}
Thus, the following set of unitary operators is a Lie group of operators on $L^2(K)$, called the Heisenberg group associated with $K$
\[ J=\{e^{2\pi i t}L(k) M({\chi})\ |\ t\in \left[\left.0,1\right)\right. ,\  k\in K ,\  \chi\in\chi(K)\}. \]
As a unitary group, $J$ has a natural representation on $L^2(K)$, namely $\gamma(h)=h$ for any $h\in J$. We denote elements of $J$ in coordinates by $(t,k,\hat{g})$, where $\hat{g}\in \hat{G}\cong\chi(G)$.

\begin{theorem}[Stone - von Neumann - Mackey]
\label{SVNMS}
$ $
\begin{enumerate}
	\item  The representation $\gamma$ is irreducible. Namely, $L^2(K)$ has no non-trivial proper closed subspace invariant under $J$.
	\item Let $\cH$ be a Hilbert space and $\rho$ an irreducible SCU representation of $J$ in $\cH$, such that $\rho(e^{2\pi i t}I)=e^{2\pi i t}I$ for all $t\in\left[\left.0,1\right)\right.$. Then $\rho$ is unitarily equivalent to $\gamma$. Namely, there exists
	a unique (up to a constant) isometric isomorphism $U:{\cH}\rightarrow L^2(K)$ satisfying
	\begin{equation}
U\rho(h) U^*=\gamma(h) \quad , \quad {\rm for\ any\ } h\in J.
\label{S_V_N_M}
\end{equation}
\item
In case $\rho$ from $2$ is reducible, there exists an orthogonal sum decomposition of Hilbert spaces
\[\cH=\bigoplus_{n\in\kappa}\cH_n\]
where $\kappa$ is a finite or countable index set,
	such that each $\cH_n$ is invariant under $\rho(J)$, and $\rho$ is irreducible in $\cH_n$.
	For each $n\in\kappa$ there exists a unique (up to a constant) isometric isomorphism 
	$U_n:{\cH}_n\rightarrow L^2(K)$ satisfying
	\begin{equation}
U_n\rho(h)|_{\cH_n}U_n^*=\gamma(h)\quad , \quad {\rm for\ any\ } h\in J
\end{equation}
where $\rho(h)|_{\cH_n}$ is the restriction of $\rho$ to $\cH_n$.
\end{enumerate}
\end{theorem}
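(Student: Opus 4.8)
The plan is to give only a sketch, since this is a classical result and the excerpt explicitly cites \cite{SVNM_original} and \cite{SVNM} for the full argument; the role here is to explain the mechanism rather than reconstruct every estimate. The key tool is an integrated projection operator built from the representation. First I would recall that $J$ is a (generalized) Heisenberg group with one-dimensional center $\{e^{2\pi i t}I\}$, and that by hypothesis $\rho$ restricts to the character $e^{2\pi i t}\mapsto e^{2\pi i t}I$ on this center. Following the standard approach, I would introduce the operator $P=\int_{K}\int_{\chi(K)} \overline{\kappa(k,\chi)}\,\rho(0,k,\chi)\,dk\,d\chi$ for a suitably normalized Gaussian-type kernel $\kappa$ (the analogue of the rank-one projection onto the vacuum state in the $L^2(\RR)$ case). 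Using the Heisenberg commutation relation \eqref{general_canonical_com}, a direct computation shows $P=P^*$, $P^2=P$, and $\rho(0,k,\chi)\,P\,\rho(0,k,\chi)^*$ is again a rank-one projection of the same form translated in phase space. So $P$ is a nonzero orthogonal projection, and on $L^2(K)$ the corresponding operator $P_0$ (built from $\gamma$) is exactly the rank-one projection onto the ``ground state''.

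Next I would build the intertwiner. Pick any unit vector $v$ in the range of $P$ and, on $L^2(K)$, the corresponding ground-state vector $v_0$ with $P_0 v_0 = v_0$. Define $U$ on the dense subspace spanned by $\{\rho(h)v : h\in J\}$ by $U\rho(h)v = \gamma(h)v_0$. The commutation relations guarantee that inner products $\ip{\rho(h)v}{\rho(h')v}$ depend only on group data and coincide with $\ip{\gamma(h)v_0}{\gamma(h')v_0}$ — this is where the hypothesis $\rho(e^{2\pi it}I)=e^{2\pi it}I$ is essential, since it fixes the phases. Hence $U$ is a well-defined isometry from the closed span of the $\rho(J)$-orbit of $v$ into $L^2(K)$; since $\gamma$ is irreducible (part 1, which one proves by showing any bounded operator commuting with all $\gamma(h)$ must commute with translations and modulations and hence be scalar, e.g.\ via Fourier analysis on $K$), $U$ is onto. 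If $\rho$ is irreducible the span of the orbit is all of $\cH$, giving part 2; uniqueness up to a constant follows because any two intertwiners $U_1,U_2$ give $U_1 U_2^*$ commuting with $\gamma(J)$, hence scalar by irreducibility.

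For part 3, the reducible case, the idea is that $\rho(J)$-invariant closed subspaces correspond to the decomposition of the range of $P$: each unit vector in $P\cH$ generates, via its $\rho(J)$-orbit, an irreducible invariant subspace $\cH_n$ on which part 2 applies, and choosing an orthonormal basis of $P\cH$ yields a (finite or countable, by separability of $\cH$) orthogonal family $\{\cH_n\}$ whose direct sum exhausts $\cH$ — the last point because the orthogonal complement of $\bigoplus \cH_n$ would be $\rho(J)$-invariant with $P$ acting as zero on it, forcing it to be trivial by the properties of $P$. I expect the main obstacle to be the careful treatment of domains and convergence in defining the integrated operator $P$ when $K$ is noncompact (so that $\chi(K)$ is noncompact as well): one must work on a dense subspace of sufficiently regular vectors, verify that $P$ extends to a bounded operator, and justify the manipulations of the operator-valued integrals under the commutation relation. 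Since the paper cites \cite{SVNM} for exactly this, I would state these analytic points and refer there for the details rather than grinding through them.
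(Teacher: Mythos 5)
The paper does not actually prove Theorem \ref{SVNMS}; it states it as a quoted classical result and defers entirely to the cited references (``the proof of which can be found in \cite{SVNM}''), so there is no in-paper argument to compare yours against. Your sketch is a faithful outline of the standard von Neumann--Mackey mechanism: build the integrated ``vacuum'' projection $P$ from a matrix coefficient of $\gamma$, use the twisted commutation relation \eqref{general_canonical_com} to show that matrix coefficients of $\rho$ on the range of $P$ are forced to agree with those of $\gamma$, and let orbits of an orthonormal basis of $P\cH$ produce the decomposition in part 3. Two points deserve a caveat. First, a literal ``Gaussian-type kernel'' only exists for $K=\RR$; for the other physical quantities in this paper ($e^{i\RR}$, $\ZZ$, $e^{2\pi i\ZZ/N}$) the correct choice of ground state is the constant function (compact case) or a point mass / indicator of a compact open subgroup (discrete case), and what one really needs is any unit vector whose ambiguity function lies in $L^1(K\times\chi(K))$ --- this exists in all four cases but is not automatic from the phrase ``Gaussian-type.'' Second, your completeness argument for part 3 (``the complement would be invariant with $P$ acting as zero, forcing it to be trivial'') needs the additional observation that the restriction of $\rho$ to any nonzero closed invariant subspace is again a representation with the same central character, so the same integral formula defines $P$ there and yields a nonzero projection; without saying this, ``by the properties of $P$'' is circular. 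With those two repairs the sketch is sound and matches the classical proof the paper points to.
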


Next we formulate a uniqueness property of the decomposition in 3 of Theorem \ref{SVNMS}.
It's proof relies on the notion of direct integral decomposition of representations. Since this is the only part in the paper in which we use direct integrals, in the Appendix we only give restricted definitions, limited to our specific needs. For a general exposition we refer the reader to Chapter 3.4 of \cite{Fuhr_wavelet}. Given a representation $\rho(g)$ of $\cH$, and $N\in\NN\cup\{\infty\}$, we denote by $\cH^N$ the direct product of $\cH$ with itself  $N$ times, if $N$ is finite, and define $\cH^N$ to be the space of square summable $\cH$ sequences if $N=\infty$. We denote by $\rho(g)^{[N]}$ the representation in $\cH^N$ defined for $\{f_n\}_{n=1}^N\in\cH^N$ by
\[\rho(g)^{[N]}\{f_n\}_{n=1}^N=\{\rho(g)f_n\}_{n=1}^N.\]

\begin{proposition} 
\label{unique_decompose_size}
Consider two representations $\rho$ and $\rho'$ in the same Hilbert space $\cH$, satisfying 3 of Theorem \ref{SVNMS}, for a physical quantity $K$. Denote by $\rho|_K$ and $\rho'|_K$ the restrictions of the representations $\rho$ and $\rho'$ to the subgroup of translations $K$ of $J$ respectively, and assume $\rho|_K=\rho'|_K$. Let $\kappa$ and $\kappa'$ be the index sets from 3 of Theorem \ref{SVNMS}, corresponding to $\rho$ and $\rho'$ respectively. Then  $\kappa$ and $\kappa'$ are of the same size.
\end{proposition}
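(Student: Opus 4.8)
The plan is to identify the size of the index set $\kappa$ as an intrinsic invariant of the restricted representation $\rho|_K$, so that equality $\rho|_K = \rho'|_K$ forces $|\kappa| = |\kappa'|$. First I would use part 3 of Theorem \ref{SVNMS} to write $\cH = \bigoplus_{n\in\kappa}\cH_n$ with each $\cH_n$ carrying a copy of the irreducible representation $\gamma$ of $J$, via isometric isomorphisms $U_n:\cH_n\to L^2(K)$. Restricting to the translation subgroup $K$, this exhibits $\rho|_K$ as unitarily equivalent to $(\gamma|_K)^{[|\kappa|]}$, i.e. to the $|\kappa|$-fold amplification of the single representation $L$ of $K$ on $L^2(K)$ by generalized left translations. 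Doing the same for $\rho'$ gives $\rho'|_K \cong (\gamma|_K)^{[|\kappa'|]}$. Since $\rho|_K = \rho'|_K$, we get $(\gamma|_K)^{[|\kappa|]} \cong (\gamma|_K)^{[|\kappa'|]}$ as representations of $K$, and the task reduces to showing that the multiplicity of the building block $\gamma|_K = L$ is well-defined: $L^{[m]} \cong L^{[m']}$ implies $m = m'$ (for $m,m' \in \NN\cup\{\infty\}$).

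To prove this multiplicity uniqueness I would invoke the direct integral decomposition of the representation $L$ of the abelian group $K$ (this is precisely where the Appendix machinery on direct integrals, referenced after the Proposition statement, is used). By Pontryagin duality and the Fourier/Plancherel transform on $L^2(K)$, the representation $L$ of $K$ by left translations is unitarily equivalent to the representation of $K$ by the multiplication operators $f(\chi)\mapsto \overline{\chi(k)}f(\chi)$ on $L^2(\chi(K))$, which is the direct integral $\int^{\oplus}_{\chi(K)}\CC\, d\chi$ of one-dimensional representations with uniform multiplicity one (with respect to Haar/Plancherel measure on $\chi(K)$). Hence $L^{[m]}$ is the direct integral of the $m$-dimensional representation $\chi\mapsto \overline{\chi(k)}I_{\CC^m}$ over $\chi(K)$, i.e. it has uniform multiplicity $m$ over a set of full Plancherel measure. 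The multiplicity function of a direct integral decomposition of a representation of a locally compact abelian group is an invariant of the unitary equivalence class (two representations are equivalent iff their spectral measures are equivalent and their multiplicity functions agree a.e.), so $L^{[m]} \cong L^{[m']}$ forces $m = m'$ a.e., hence $m = m'$ since both are constants and $\chi(K)$ has positive measure.

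I would then assemble these pieces: $|\kappa| = $ the (constant) multiplicity of $L$ in $\rho|_K = $ the (constant) multiplicity of $L$ in $\rho'|_K = |\kappa'|$, which is the claim. A small amount of care is needed at the level of bookkeeping — the decomposition in Theorem \ref{SVNMS}(3) is only ``unique up to a constant'' for each $U_n$, but constants do not affect the count; and one should note that the physical quantities $K$ under consideration ($\RR$, $e^{i\RR}$, $\ZZ$, $e^{2\pi i\ZZ/N}$) are exactly the locally compact abelian Lie groups for which this Pontryagin/Plancherel picture is elementary, so there are no measure-theoretic pathologies.

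\textbf{Main obstacle.} The substantive step is the multiplicity-uniqueness claim $L^{[m]}\cong L^{[m']}\Rightarrow m=m'$: one must genuinely appeal to the direct integral / spectral multiplicity theory for representations of abelian groups rather than argue it by hand, and one must be careful that ``multiplicity'' is measured against the correct (Plancherel) spectral measure class. Everything else — restricting the $\gamma$-decomposition to $K$, identifying it as an amplification of a single irreducible, and recognizing $\gamma|_K$ as $L$ — is routine unwinding of definitions.
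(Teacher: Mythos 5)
Your proposal is correct and follows essentially the same route as the paper: decompose $\rho$ and $\rho'$ via Theorem \ref{SVNMS}(3) into amplifications of $\gamma$, restrict to $K$, use the Fourier/Plancherel decomposition $\gamma|_K \cong \int_{\chi(K)}^{\oplus}\chi^{[1]}\,d\mu(\chi)$ to exhibit $\rho|_K$ and $\rho'|_K$ as direct integrals with uniform multiplicities $\abs{\kappa}$ and $\abs{\kappa'}$, and conclude by the uniqueness of multiplicities (Proposition \ref{unique_direct_int}). No substantive differences.
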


\begin{proof}
By 3 of Theorem \ref{SVNMS}, $\rho$ and $\rho'$ are equivalent to the two direct product representations $\gamma^{[{\kappa}]}$ (acting on $L^2(K)^{\abs{\kappa}}$) and $\gamma^{[{\kappa'}]}$ (acting on $L^2(K)^{\abs{\kappa'}}$) respectively. By (\ref{eq:direct_decomposition_of_translation}), the representation $\gamma$ restricted to $K$, $\gamma|_K$, has the direct integral decomposition
\[\gamma|_K(k)\cong \int_{\hat{K}}^{\oplus}\chi^{[1]} \ d\mu(\chi). \]
So 
\[\rho|_K(k)\cong \int_{\hat{K}}^{\oplus} \chi ^{[\kappa]}\ d\mu(\chi) \quad ,\quad \rho'|_K(k)\cong \int_{\hat{K}}^{\oplus} \chi^{[\kappa']} \ d\mu(\chi) \]
By Proposition \ref{unique_direct_int}, the multiplicities in a direct integral decomposition are unique. Therefore, since $\rho|_K(k)= \rho'|_K(k)$, we must have $\abs{\kappa}=\abs{\kappa'}$.
\end{proof}

To bridge the gap between our theory and the Stone - von Neumann - Mackey theorem, we define a representation of Heisenberg groups corresponding to  canonical systems.

\begin{definition}
\label{def:schrodinger}
Let $\{G,\pi,T,\hat{G},\breve{\pi},\bT\}$ be a canonical system, and let $J$ be the Heisenberg group assosiated with $G$. The mapping $\Pi:J\rightarrow {\cal U}(\cH)$, defined by
\begin{equation}
\Pi(t,g,\hg)= e^{2\pi it}\pi(g)\breve{\pi}(\hg), \quad t\in\left[0,1\right),\ g\in G,\ \hg\in\hat{G}.
\label{eq:Heisenberg_observable}
\end{equation}
is called the \textbf{Schr\"odinger representation} of the canonical system $\{G,\pi,T,\breve{\pi},\bT\}$.
\end{definition}

The following proposition shows that Schr\"odinger representations are representations of $J$.

\begin{proposition}
\label{Schro_is_Heis}
Let $\{G,\pi,T,\hat{G},\breve{\pi},\bT\}$ be a canonical system. Then
there exists an isomorphism 
\[\hat{G}\rightarrow \chi(G) \quad , \quad \hg\mapsto \chi_{\hg}\]
such that 
\begin{equation}
\pi(g)^*\breve{\pi}(\hg)^*\pi(g)\breve{\pi}(\hg)=\overline{\chi_{\hg}(g)}I.
\label{H_rep_commutation}
\end{equation}
\end{proposition}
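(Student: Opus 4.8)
The plan is to establish the commutation relation (\ref{H_rep_commutation}) by exploiting the canonical commutation relation for $\bT$ and the fact that $\pi(g)$ is generated by the infinitesimal (or discrete) generator $T$, and then to recognize the resulting scalar-valued cocycle $g\mapsto \overline{\chi_{\hg}(g)}$ as a character of $G$. I would proceed case by case according to the four possibilities in Proposition \ref{canonical_pairs}, since the argument is structurally uniform but the bookkeeping (self-adjoint versus unitary, $\RR$ versus $\ZZ$ parameters) differs.

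First, fix $\hg\in\hat{G}$ and consider the operator $W(\hg):=\pi(g)^*\breve{\pi}(\hg)^*\pi(g)\breve{\pi}(\hg)$; I want to show it is a unimodular scalar. The key input is the canonical commutation relation $\pi(g)^*\bT\pi(g)=g\bullet\bT$ together with the functional calculus of Remark \ref{Band_limit_poly}: conjugation by $\pi(g)$ sends the PVM of $\bT$ to a translated PVM, hence $\pi(g)^*\,\breve{\pi}(\hg)\,\pi(g) = \pi(g)^*\phi_{\hg}(\bT)\pi(g) = \phi_{\hg}(g\bullet\bT)$, where $\phi_{\hg}$ is the character on $\mathrm{spec}(\bT)=G$ defining $\breve{\pi}(\hg)$ (e.g. in case $G=\{\ZZ,+\}$, $\breve{\pi}(e^{iq})=e^{iq\bT}$ and the relevant computation is exactly Claim \ref{claim_replace4} and equation (\ref{eq:last_cheee1}), which already yields $T e^{iq\bT}=e^{iq}e^{iq\bT}T$). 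Since $g\bullet\bT$ differs from $\bT$ by the group action on spectral values, $\phi_{\hg}(g\bullet\bT)=\phi_{\hg}(g)\,\phi_{\hg}(\bT)$ as operators, where $\phi_{\hg}(g)$ is now a unimodular scalar and $\phi_{\hg}(\bT)=\breve{\pi}(\hg)$. Substituting back gives $\pi(g)^*\breve{\pi}(\hg)\pi(g)=\phi_{\hg}(g)\breve{\pi}(\hg)$, which rearranges to (\ref{H_rep_commutation}) with $\overline{\chi_{\hg}(g)}=\phi_{\hg}(g)$.

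It remains to check that $\hg\mapsto\chi_{\hg}$ is a group isomorphism onto $\chi(G)$. That $\chi_{\hg}$ is a character of $G$ in the variable $g$ follows from the homomorphism property $\pi(g\bullet g')=\pi(g)\pi(g')$ applied to both sides of the relation just derived; that $\hg\mapsto\chi_{\hg}$ is a homomorphism follows from $\breve{\pi}(\hg\bullet\hg')=\breve{\pi}(\hg)\breve{\pi}(\hg')$. Injectivity follows because $\breve{\pi}$ is faithful (canonical systems have faithful representations, as noted after Definition \ref{def:canonical system}): if $\chi_{\hg}\equiv 1$ then $\breve{\pi}(\hg)$ commutes with all $\pi(g)$, and by the Stone--von Neumann--Mackey theorem (Theorem \ref{SVNMS}) applied to the Schr\"odinger representation $\Pi$, the only operators commuting with the irreducible pieces are scalars, forcing $\breve{\pi}(\hg)$ scalar and hence $\hg=e$. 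Surjectivity then follows by comparing cardinalities/structure using the explicit Pontryagin duals listed before Proposition \ref{canonical_pairs} (in each of the four cases $|\hat G|$ and $|\chi(G)|$ agree, and an injective continuous homomorphism between them of matching type is onto).

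The main obstacle I anticipate is making the step ``$\phi_{\hg}(g\bullet\bT)=\phi_{\hg}(g)\phi_{\hg}(\bT)$'' rigorous at the level of unbounded operators and functional calculus, rather than merely at the formal exponential-series level. This requires the density/band-limited approximation argument of Remark \ref{Band_limit_poly} and Claim \ref{claim_replace4}: one proves the identity first on the dense subspace $\cH_{\mathrm{bl}}$ of band-limited vectors for $\bT$, where everything reduces to polynomial manipulations and the relation $T\bT^k=(g\bullet\bT)^kT$ (the discrete analogue of (\ref{eq:help1})), and then extends by continuity using unitarity of $\pi(g)$. The rest of the proof — extracting the character, checking homomorphism properties, and identifying the dual — is routine given Proposition \ref{canonical_pairs} and the harmonic-analysis facts already recalled.
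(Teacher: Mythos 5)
Your proposal matches the paper's proof: the paper likewise derives $\pi(g)^*\breve{\pi}(\hg)\pi(g)=\chi_{\hg}(g)\breve{\pi}(\hg)$ from the canonical commutation relation applied to powers of $\bT$, sums the exponential series with the band-limited density argument of Claim \ref{claim_replace4}, and reads off the explicit character (e.g.\ $\chi_{\hg}(g)=e^{ig\hg}$ for $G=\hat{G}=\RR$), treating one case in detail and declaring the others similar. One small caveat: your auxiliary injectivity argument via Stone--von Neumann--Mackey is both unnecessary and circular (that theorem is applied to $\Pi$ only \emph{after} this proposition establishes that $\Pi$ is a representation of $J$, and in the reducible case ``commutes with everything'' does not force a global scalar), but this does no harm since the explicit character formula in each of the four cases already exhibits $\hg\mapsto\chi_{\hg}$ as the standard Pontryagin isomorphism.
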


\begin{proof}
Let us treat the case where $G=\hat{G}=\RR$. The other cases are treated similarly.
\[\forall k\in\ZZ_+ . \quad \pi(g)^*\bT^k\pi(g)=\big(\pi(g)^*\bT\pi(g)\big)^k=(\bT+g)^k,\]
so by the series expantion of the exponential map (and using a density argument as before)
hg\[\pi(g)^*\breve{\pi}(\hg)\pi(g)=e^{iq (\bT+g)}=e^{ig\hg}\breve{\pi}(\hg).\]
Therefore
\[\pi(g)^*\breve{\pi}(\hg)\pi(g)\breve{\pi}(\hg)^*=e^{ig\hg}I.\]
By substituting $\hg\mapsto-\hg$, we get
\[\pi(g)^*\breve{\pi}(\hg)^*\pi(g)\breve{\pi}(\hg)=\overline{e^{ig\hg}}I,\]
and the corresponding mapping is $\hg\mapsto \chi_{\hg}$ where $\chi_{\hg}(g)= e^{ig\hg}$.

\end{proof}

By Proposition \ref{Schro_is_Heis}, the Schr\"odinger representation $\Pi$ is a representation of $J$ satisfying the conditions in the Stone - von Neumann - Mackey theorem (Theorem \ref{SVNMS}).
Thus we have the following corollary.
\begin{corollary}
\label{From_Stone}
Let $\{G,\pi,T,\hat{G},\breve{\pi},\bT\}$ be a canonical system, with Schr\"odinger representation $\Pi$. Then there exists an orthogonal sum decomposition of Hilbert spaces
\[\cH=\bigoplus_{n\in\kappa}\cH_n\] such that each $\cH_n$ is invariant under $\Pi$. 
Moreover, in every $\cH_n$, $\Pi(g)|_{\cH_n}$ is unitarily equivalent to $\gamma(g)$ (where $\gamma({g})$ is the natural representation of the Heisenberg group of $G$ in $L^2(G)$). Namely, there exist unique isometric isomorphisms $U_n:\cH_n\rightarrow L^2(G)$ such that
\[\forall g\in G . \quad U_n\Pi(g)|_{\cH_n}U_n^* =\gamma(g).  \]
\end{corollary}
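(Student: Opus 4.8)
The plan is to obtain Corollary \ref{From_Stone} as essentially a direct translation of the Stone--von Neumann--Mackey theorem (Theorem \ref{SVNMS}) applied to the Schr\"odinger representation $\Pi$ of the canonical system. First I would invoke Proposition \ref{Schro_is_Heis}, which tells us that $\Pi:J\to\mathcal{U}(\cH)$ genuinely is a strongly continuous unitary representation of the Heisenberg group $J$ associated with $G$: the defining relation $\pi(g)^*\breve{\pi}(\hg)^*\pi(g)\breve{\pi}(\hg)=\overline{\chi_{\hg}(g)}I$ is precisely the commutation relation (\ref{general_canonical_com}) that makes the map (\ref{eq:Heisenberg_observable}) multiplicative on $J$ once we check that products of the three types of factors $e^{2\pi i t}$, $\pi(g)$, $\breve{\pi}(\hg)$ combine with the correct cocycle. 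I would also note that $\Pi$ satisfies the central normalization $\Pi(e^{2\pi i t}I)=e^{2\pi i t}I$ required by part 2 of Theorem \ref{SVNMS}, which is immediate from the definition $\Pi(t,0,0)=e^{2\pi i t}I$.

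Next I would apply part 3 of Theorem \ref{SVNMS} (with $K=G$, so that the ``model'' space is $L^2(G)$ and the natural representation is $\gamma$). This yields an orthogonal decomposition $\cH=\bigoplus_{n\in\kappa}\cH_n$ with each $\cH_n$ invariant under $\Pi(J)$ and each $\Pi|_{\cH_n}$ irreducible, together with unique-up-to-constant isometric isomorphisms $U_n:\cH_n\to L^2(G)$ intertwining $\Pi(h)|_{\cH_n}$ with $\gamma(h)$ for all $h\in J$. To get the statement of the corollary, I would then restrict the intertwining identity from all of $J$ to the translation subgroup $G\subset J$: since $\pi(g)=\Pi(0,g,0)$ is the image under $\Pi$ of the element $(0,g,0)\in J$, and $\gamma(0,g,0)$ is the generalized left translation $L(g)$ which is the ``natural representation of the Heisenberg group of $G$ in $L^2(G)$'' restricted to translations, the identity $U_n\Pi(h)|_{\cH_n}U_n^*=\gamma(h)$ specializes to $U_n\pi(g)|_{\cH_n}U_n^*=\gamma(g)$ for all $g\in G$, which is exactly the claimed conclusion.

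The only genuinely substantive point — and the step I expect to require the most care — is verifying that $\Pi$ as defined in (\ref{eq:Heisenberg_observable}) is actually a homomorphism $J\to\mathcal{U}(\cH)$, i.e.\ that the ordering convention $e^{2\pi i t}\pi(g)\breve{\pi}(\hg)$ together with the commutation relation from Proposition \ref{Schro_is_Heis} reproduces the group law of $J$ (including getting the central $e^{2\pi i t}$ cocycle exactly right, with the correct identification $\hg\leftrightarrow\chi_{\hg}$). This is really bookkeeping: one computes $\Pi(t,g,\hg)\Pi(t',g',\hg')$, moves $\breve{\pi}(\hg)$ past $\pi(g')$ using (\ref{H_rep_commutation}), and matches the result with $\Pi((t,g,\hg)\cdot(t',g',\hg'))$ in $J$. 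Strong continuity of $\Pi$ follows from strong continuity of $\pi$ and $\breve{\pi}$ (which hold since a canonical system consists of SCU representations) and continuity of scalar multiplication. Once $\Pi$ is known to be a strongly continuous representation of $J$ with the right central character, everything else is a verbatim citation of Theorem \ref{SVNMS} plus the trivial restriction-to-$G$ step, so no further obstacle arises.
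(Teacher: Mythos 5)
Your proposal is correct and follows exactly the route the paper takes: the paper derives the corollary in one line by noting that Proposition \ref{Schro_is_Heis} makes $\Pi$ a representation of $J$ satisfying the hypotheses of Theorem \ref{SVNMS}, and then citing part 3 of that theorem. Your additional bookkeeping (verifying the cocycle/homomorphism property, the central normalization, strong continuity, and the restriction to the subgroup $G$) is just an expanded version of what the paper delegates to Proposition \ref{Schro_is_Heis} and the phrase ``satisfying the conditions in the Stone--von Neumann--Mackey theorem.''
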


To construct a canonical observable for a given $\pi$, we assume that there exists a canonical system containing $\{G,\pi\}$.
First let us assume that the corresponding $\Pi$ is an irreducible representation of $J$. Corollary \ref{From_Stone} can be utilized as follows. 
Given $\pi,G$, 
we first construct an isometric isomorphism 
$U:\cH\rightarrow L^2(G)$ such that 
\begin{equation}
U\pi(g)U^*=L({g})
\label{eq:IsoIso}
\end{equation}
 for any $g\in G$. A solution of (\ref{eq:IsoIso}) is guaranteed to exist. 
Consider the multiplicative operator in $L^2(G)$, 
\begin{equation}
\bQ_G f(g)=g f(g),
\label{eq:canonical_ob_in_H}
\end{equation} 
where the multiplication in (\ref{eq:canonical_ob_in_H}) is the usual arithmetic multiplication.
It is straightforward to show that $\bQ_G$ is a canonical observable of $L(g)$ in $L^2(G)$.   Now, we can pull back the canonical observable $\bQ_G$ to $\cH$ using $U$. Namely,
\begin{equation}
\breve{T}=U^*\bQ_G U
\label{eq:canonical_construction}
\end{equation}
 is a canonical observable of $\pi$. Indeed,
\begin{align*}
\pi(g)^*\breve{T}\pi(g)&=\pi(g)^*U^*\bQ_G U\pi(g)\\
&=U^*L({g})^*\bQ_G L({g})U
=U^*g\bullet\bQ_G U=g\bullet U^*\bQ_G U = g \bullet \breve{T} .
\end{align*}
where the operation ``$g\bullet (\cdot)$'' commutes with unitary operators since it is either the multiplication by the scalar $g$, or the addition with the scalar operator $gI$.

The following proposition extends this analysis to the reducible case.

\begin{proposition}
\label{help_construct_canoni} 
Let $\pi$ be a SCU representation of $G$, such that there exists a canonical system containing $\{G,\pi\}$.  Then there exists an index set $\k$ of size uniquely defined by $\pi$, a
 decomposition of $\cH$ to invariant subspaces of $\pi(g)$, $\bigoplus_{n\in\kappa}\cH_n$, and a sequence of isometric isomorphisms 
$\{U_n:\cH_n\rightarrow L^2(G)\}_{n\in\kappa}$, such that 
\begin{equation}
U_n\pi(g)|_{\cH_n}U_n^*=L(g)
\label{eq:IsoIso2}
\end{equation}
 for any $g\in G$.  Moreover, for any decomposition $\cH=\bigoplus_{n\in\kappa}\cH_n$ and $\{U_n:\cH_n\rightarrow L^2(G)\}_{n\in\kappa}$ that satisfies the above,
\begin{enumerate}
	\item  The operator
	\begin{equation}
\breve{T}_n=U_n^*\bQ_GU_n
\label{eq:canonical_construction2}
\end{equation}
is a canonical observable of $\pi(g)|_{\cH_n}$.
\item
\label{2_help_construct_canoni}
$\bT=\bigoplus_{n\in\kappa}\bT_n$
is a canonical observable of $\pi(g)$.
\end{enumerate}
\end{proposition}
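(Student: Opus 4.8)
The plan is to assemble the statement from the pieces already established: Corollary~\ref{From_Stone} gives, for the Schr\"odinger representation $\Pi$ of a canonical system containing $\{G,\pi\}$, an orthogonal decomposition $\cH=\bigoplus_{n\in\kappa}\cH_n$ with $\cH_n$ invariant under $\Pi$ and unique isometric isomorphisms $U_n:\cH_n\to L^2(G)$ intertwining $\Pi|_{\cH_n}$ with the natural Heisenberg representation $\gamma$. Since $\pi(g)=\Pi(0,g,e)$ is a component of $\Pi$, each $\cH_n$ is in particular invariant under $\pi(g)$, and $U_n\Pi(g)|_{\cH_n}U_n^*=\gamma(g)$ specializes to $U_n\pi(g)|_{\cH_n}U_n^*=L(g)$, which is exactly (\ref{eq:IsoIso2}). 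That establishes the existence part. For the uniqueness of the size of $\kappa$: apply Proposition~\ref{unique_decompose_size} with $K$ the appropriate translation subgroup of $J$; two decompositions arising from two canonical systems both restrict to the same $\pi|_K$ (namely the given $\pi$), so the index sets have the same cardinality. I would state this as: $\abs\kappa$ is determined by the multiplicity of $\pi$ in the sense of the direct integral decomposition via Pontryagin duality, and is therefore a function of $\pi$ alone.

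Next, for item (1): take any decomposition $\cH=\bigoplus_{n\in\kappa}\cH_n$ with isometries $U_n:\cH_n\to L^2(G)$ satisfying (\ref{eq:IsoIso2}). One verifies directly that $\bQ_G f(g)=gf(g)$ is a canonical observable of $L(g)$ on $L^2(G)$ — this is the straightforward computation $L(g)^*\bQ_G L(g)f(h)=\bQ_G L(g)f(g\bullet h)=(g\bullet h)L(g)f(g\bullet h)=(g\bullet h)f(h)=(g\bullet\bQ_G)f(h)$, interpreting $g\bullet\bQ_G$ through the functional calculus as in the discussion following Remark~\ref{Band_limit_poly}. Then the pull-back $\bT_n=U_n^*\bQ_G U_n$ is a canonical observable of $\pi(g)|_{\cH_n}$ by exactly the conjugation computation already displayed in the text for the irreducible case, since $U_n$ is an isometric isomorphism and conjugation by it carries $L(g)$ to $\pi(g)|_{\cH_n}$ and $\bQ_G$ to $\bT_n$; the key point, again noted in the text, is that ``$g\bullet(\cdot)$'' commutes with unitary conjugation because it is either multiplication by the scalar $g$ or addition of $gI$.

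For item (2): define $\bT=\bigoplus_{n\in\kappa}\bT_n$, which is self-adjoint (resp.\ unitary) as an orthogonal direct sum of self-adjoint (resp.\ unitary) operators, hence an observable. Its domain in the self-adjoint case is $\{f=\sum f_n : \sum\norm{\bT_n f_n}^2<\infty\}$, dense since each $\bT_n$ has dense domain. Since $\pi(g)=\bigoplus_n \pi(g)|_{\cH_n}$ respects the decomposition, $\pi(g)^*\bT\pi(g)=\bigoplus_n \pi(g)|_{\cH_n}^*\bT_n\pi(g)|_{\cH_n}=\bigoplus_n (g\bullet\bT_n)$, and since $g\bullet(\cdot)$ acts diagonally (scalar multiplication or $+gI$ componentwise) this equals $g\bullet\bigoplus_n\bT_n=g\bullet\bT$. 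So $\bT$ is a canonical observable of $\pi$.

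I expect the only genuine subtlety to be bookkeeping around unbounded operators: making sure the term-by-term identity $\pi(g)^*\bT\pi(g)=\bigoplus_n(g\bullet\bT_n)$ holds on the correct domain, and that the direct sum $\bigoplus_n\bT_n$ is genuinely self-adjoint rather than merely symmetric (it is, being a direct sum of self-adjoint operators over invariant subspaces — a standard fact). The core algebra is routine given the earlier results; the care needed is in domain arguments and in citing Proposition~\ref{unique_decompose_size} for the uniqueness of $\abs\kappa$, which in turn rests on the uniqueness of multiplicities in direct integral decompositions (Proposition~\ref{unique_direct_int}).
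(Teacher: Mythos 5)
Your proposal is correct and follows essentially the same route as the paper, which treats this proposition as the reducible-case extension of the computation displayed just before it: Corollary \ref{From_Stone} for the decomposition and intertwiners, restriction of the Schr\"odinger representation to the subgroup $G$ to get (\ref{eq:IsoIso2}), the pull-back of $\bQ_G$ via the conjugation identity, and Proposition \ref{unique_decompose_size} for the uniqueness of $\abs{\kappa}$. Your added care about domains and the self-adjointness of the direct sum in item (2) is a reasonable filling-in of details the paper leaves implicit.
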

Note that the uniqueness of $\abs{\k}$ in Proposition \ref{help_construct_canoni}  follows Proposition \ref{unique_decompose_size}.

In practice, 
finding the decomposition $\cH=\bigoplus_{n\in\kappa}\cH_n$, given a representation $\pi$, may seem like a convoluted task. Indeed, this decomposition only makes sense in view of the unknown observables $\breve{T}_n$, since it is a decomposition to irreducible subspaces of the Schr\"odinger representation. In the following discussion we formulate a more accessible version of Proposition \ref{help_construct_canoni}.

Under the assumptions of Proposition \ref{help_construct_canoni}, define the isometric isomorphism 
\begin{equation}
U:\cH\rightarrow L^2(G)^{\abs{\kappa}} \quad , \quad U=\bigoplus_{n\in\kappa}U_n.
\label{eq:isoiso3f4}
\end{equation} 
 Consider the left translation $L(g)^{[\kappa]}:L^2(G)^{\abs{\kappa}}\rightarrow L^2(G)^{\abs{\kappa}}$. 
Consider the multiplicative operator $\bQ_G^{[\kappa]}:L^2(G)^{\abs{\kappa}}\rightarrow L^2(G)^{\abs{\kappa}}$ defined by
\[\bQ_G^{[\kappa]}\left\{F_n(g)\right\}_{n\in\kappa}=\left\{gF_n(g)\right\}_{n\in\kappa}.\]
Proposition \ref{help_construct_canoni} states that $U$ intertwines $\pi$ and $L(g)^{[\kappa_m]}$, and $\bT=U^*\bQ_G^{[\kappa]} U$.
Note that in this construction, the space $L^2(G)^{\abs{\kappa}}$ is isomorphic to the space $L^2(X)=L^2(G\times {\cal Y})$, where ${\cal Y}$ is the standard discrete measure space $\{n\}_{n\in\kappa}$. Under this isomorphism, the left translation $L(g)^{[\kappa]}$ takes the following form in $L^2(X)$. For any $h\in L^2(X)$,
\[L_X(g)h(g',y)=h(g^{-1}\bullet g',y).\]
Moreover, the observable $\bQ_G^{[\kappa]}$ takes the form $\bQ_X h(g,y)=gh(g,y)$ in $L^2(X)$.
Motivated by this observation,
another technique for constructing a canonical multi-observable for a SPWT is explained next. First, find an isometric isomorphisms $\Psi:\cH\rightarrow L^2(G\times{\cal Y})$, where ${\cal Y}$ is some manifold with Radon measure, and $\Psi$ maps $\pi(g)$ to translations $L_X(g)$. Then, consider the multiplicative operator
$\bQ_X:L^2(X)\rightarrow L^2(X)$ defined by
\[\bQ_X h(g,y)= g h(g,y).\]
Last, define the canonical observable of $\pi$ to be
$\bT = \Psi^*\bQ_X \Psi$. This construction guarantees the canonical commutation relations (\ref{eq:5a}). When $\pi$ is a representation of $quantity_G$, we call $\Psi$ the $\boldsymbol{quantity_G}$ \textbf{transform}, and call $L^2(X)$ the $\boldsymbol{quantity_G}$ \textbf{domain}. We summarize this discussion in a theorem.

\begin{theorem}
\label{Theorem:pull_trans1}
Let $\{G,\pi\}$ be members of a canonical system. Then there exists a manifold ${\cal Y}$ with a Radon measure, where for $X=G\times{\cal Y}$ there exists an isometric isomorphism $\Psi:\cH\rightarrow L^2(X)$
 (the $quantity_G$ transform) that intertwines $\pi$ with translations along $G$. Namely, $\pi(g)= \Psi^*L_X(g)\Psi$. 
For any such transform $\Psi$, the observable $\bT = \Psi^*\bQ_X \Psi$ is a canonical observable of $\pi$.
\end{theorem}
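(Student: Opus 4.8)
The plan is to derive Theorem~\ref{Theorem:pull_trans1} as essentially a repackaging of Proposition~\ref{help_construct_canoni}, translated from the language of the orthogonal-sum decomposition $\cH=\bigoplus_{n\in\kappa}\cH_n$ into the language of a single function space $L^2(X)$ over the product manifold $X=G\times\cY$. First I would invoke Proposition~\ref{help_construct_canoni} to obtain the index set $\kappa$ (of size uniquely determined by $\pi$), the invariant decomposition $\cH=\bigoplus_{n\in\kappa}\cH_n$, and the family of isometric isomorphisms $U_n:\cH_n\to L^2(G)$ satisfying $U_n\pi(g)|_{\cH_n}U_n^*=L(g)$. Then I would set $\cY$ to be the discrete measure space $\{n\}_{n\in\kappa}$ equipped with counting measure, so that $L^2(X)=L^2(G\times\cY)$ is naturally isometrically isomorphic to $L^2(G)^{\abs{\kappa}}$ (the square-summable sequences of $L^2(G)$ functions indexed by $\kappa$); this isomorphism sends $\{F_n\}_{n\in\kappa}$ to the function $(g,n)\mapsto F_n(g)$.

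Next I would assemble the direct sum $U=\bigoplus_{n\in\kappa}U_n:\cH\to L^2(G)^{\abs{\kappa}}$, which is an isometric isomorphism because each $U_n$ is, and then compose with the identification $L^2(G)^{\abs{\kappa}}\cong L^2(X)$ to define $\Psi:\cH\to L^2(X)$. The intertwining property $\pi(g)=\Psi^*L_X(g)\Psi$ then follows directly: applying $U$ to $\pi(g)$ decomposes it as $\bigoplus_n U_n\pi(g)|_{\cH_n}U_n^* = \bigoplus_n L(g)$, which is exactly $L(g)^{[\kappa]}$; and under the identification with $L^2(X)$ the operator $L(g)^{[\kappa]}$ becomes $L_X(g)h(g',y)=h(g^{-1}\bullet g',y)$, as already noted in the discussion preceding the theorem statement.

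For the final assertion — that \emph{any} such $\Psi$ (not just the one just constructed) yields a canonical observable $\bT=\Psi^*\bQ_X\Psi$ — I would argue directly from the canonical commutation relation, exactly as in the irreducible computation displayed just before Proposition~\ref{help_construct_canoni}. Starting from $\bT=\Psi^*\bQ_X\Psi$ and $\pi(g)=\Psi^*L_X(g)\Psi$, one computes
\[
\pi(g)^*\bT\pi(g)=\Psi^*L_X(g)^*\bQ_X L_X(g)\Psi,
\]
so it suffices to verify the pointwise identity $L_X(g)^*\bQ_X L_X(g)=g\bullet\bQ_X$ in $L^2(X)$, which is immediate since $(L_X(g)^*\bQ_X L_X(g)h)(g',y)= (g\bullet g')\,h(g',y)$ and the operation $g'\mapsto g\bullet g'$ inside $\bQ_X$ corresponds to the spectral-calculus action $g\bullet(\cdot)$ on the multiplication operator. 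Pulling this back through $\Psi$ and using that $\Psi^*$ is unitary (so it commutes with the scalar-or-translation action $g\bullet(\cdot)$, as in the proof of Proposition~\ref{help_construct_canoni}) gives $\pi(g)^*\bT\pi(g)=g\bullet\bT$.

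The only real subtlety, and thus the main obstacle, is the bookkeeping around the measure-theoretic identification $L^2(G)^{\abs{\kappa}}\cong L^2(G\times\cY)$ when $\kappa$ is countably infinite: one must check that the direct-sum isometry $U=\bigoplus_n U_n$ genuinely lands in (and is onto) the $\ell^2$-sum, that the translation operators piece together to a strongly continuous unitary on the whole space, and that the multiplication operator $\bQ_X$ is well-defined (self-adjoint, densely defined, possibly unbounded) on $L^2(X)$ with the right spectral behavior under $g\bullet(\cdot)$. These are routine verifications given Assumption~\ref{ass_voice} and the structure of the physical quantities, but they are where care is needed; everything else is a direct translation of Proposition~\ref{help_construct_canoni} into the $L^2(X)$ picture, and the ``for any such $\Psi$'' clause is the short commutation-relation computation above, which does not depend on how $\Psi$ was obtained.
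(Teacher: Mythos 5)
Your proposal is correct and follows essentially the same route as the paper: the paper's "proof" is precisely the discussion preceding the theorem statement, which invokes Proposition \ref{help_construct_canoni}, assembles $U=\bigoplus_{n\in\kappa}U_n$, identifies $L^2(G)^{\abs{\kappa}}$ with $L^2(G\times\cY)$ for the discrete measure space $\cY=\{n\}_{n\in\kappa}$ (cf.\ Remark \ref{Remark:pull_trans1}), and justifies the "for any such $\Psi$" clause by the same pointwise verification that $L_X(g)^*\bQ_X L_X(g)=g\bullet\bQ_X$ pulled back through the unitary $\Psi$. Your flagged bookkeeping issues (the $\ell^2$-sum, strong continuity, self-adjointness of $\bQ_X$) are indeed the only points the paper glosses over, and they are routine as you say.
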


\begin{remark}
\label{Remark:pull_trans1}
In the analysis preceding Theorem \ref{Theorem:pull_trans1}, it was shown that there exists a discrete manifold ${\cal Y}$ corresponding to Theorem \ref{Theorem:pull_trans1}. In practice, it is beneficial to consider also non-discrete manifolds ${\cal Y}$. We illustrate how a non-discrete manifold can be constructed in the framework of Theorem \ref{Theorem:pull_trans1} in the following exmample. In case $\kappa=\NN$, we have $\bigoplus_{n\in\NN}\cH_n \cong L^2(G\times\NN)$. It is possible to map $L^2(G\times\NN)$ to $L^2(G\times\RR)$ (with an isometric isomorphism), by using the fact that the space $L^2(\NN)=l^2$ is isometrically isomorphic to $L^2(\RR)$ via an orthogonal basis expansion. In this construction, we consider an orthogonal basis $\{\eta_n\}_{n\in\NN}\subset L^2(\RR)$, and consider the isometric isomorphism
\[W:L^2(G\times\NN)\rightarrow L^2(G\times\RR), \quad ,\quad W\ \{f_n\}_{n\in\NN}=\sum_{n\in\NN}f_n\otimes \eta_n\]
where $[f_n\otimes \eta_n](g,y)=f_n(g)\eta_n(y)$. The $quantity_G$ transform is then related to Proposition \ref{help_construct_canoni} by $\Psi=W\circ U$, where $U$ is defined in (\ref{eq:isoiso3f4}).
\end{remark}
Using non-discrete ${\cal Y}$ spaces simplify the construction in the Curvelet transform and the Shearlet transforms of Subsections \ref{The Curevelet transform} and \ref{The Shearlet transform}.

\subsection{Characterization of the set of canonical observables}

Note that for a representation $\pi$, a canonical representation containing $\{G,\pi\}$ is not uniquely defined. Therefore, a canonical observable is not uniquely defined for a given $\pi$.
The following theorem characterizes the set of all possible canonical observables of a given representation of a physical quantity.

\begin{proposition}
\label{Class_Pirre}
 Consider a canonical system $\{G,\pi,T,\hat{G},\breve{\pi},\bT\}$, represented in $\cH$. Let $\breve{\cal T}$ be the set of observables $\bT'$ in $\cH$ that belong to some other canonical system of the form $\{G,\pi,T,\hat{G},\breve{\pi}',\bT'\}$.
Then
\[\breve{\cal T}=\{U^* \breve{T}  U\ |\ U\in{\mathcal U}(\cH)\ {\rm commutes\ with}\ T\}.\]
\end{proposition}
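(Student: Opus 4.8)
The claim is an orbit description: the set $\bTT$ of all canonical observables arising in \emph{some} canonical system with the \emph{same} $\pi$ (and hence the same generator $T$) coincides with the conjugation orbit of a single fixed $\bT$ under the commutant of $T$. I would prove the two inclusions separately, and the main work is the harder inclusion $\bTT\subseteq\{U^*\bT U\mid U\in{\cal U}(\cH),\ UT=TU\}$.

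For the easy inclusion, suppose $U\in{\cal U}(\cH)$ commutes with $T$, hence with every element of $\pi(G)$, i.e.\ $U^*\pi(g)U=\pi(g)$ for all $g$. Set $\bT'=U^*\bT U$. Then
\[
\pi(g)^*\bT'\pi(g)=U^*\pi(g)^*U\,U^*\bT U\,U^*\pi(g)U=U^*\big(\pi(g)^*\bT\pi(g)\big)U=U^*(g\bullet\bT)U=g\bullet(U^*\bT U)=g\bullet\bT',
\]
using that $g\bullet(\cdot)$ is either scalar multiplication or addition of a scalar operator and therefore commutes with the unitary conjugation. One also checks $\bT'$ is again self-adjoint/unitary (conjugation preserves this), so $\bT'$ is a canonical observable of $\pi$; setting $\bp'(q)=U^*\bp(q)U$ gives the companion representation of $\hat G$, and ${\rm spec}(\bT')={\rm spec}(\bT)=G$, ${\rm spec}(T)=\hat G$ are preserved, so $\{G,\pi,T,\hat G,\bp',\bT'\}$ is indeed a canonical system. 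Hence $\bT'\in\bTT$.

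For the hard inclusion, take any $\bT'\in\bTT$, i.e.\ $\bT'$ is a canonical observable of $\pi$ belonging to a canonical system $\{G,\pi,T,\hat G,\bp',\bT'\}$. I would invoke Proposition \ref{help_construct_canoni} (equivalently Corollary \ref{From_Stone} via the Schr\"odinger representations $\Pi$ and $\Pi'$ built from $(\pi,\bp)$ and $(\pi,\bp')$). Both $\Pi$ and $\Pi'$ are representations of the Heisenberg group $J$ of $G$ restricting to the \emph{same} $\pi$ on the translation subgroup $G\le J$, so by Proposition \ref{unique_decompose_size} their Stone--von Neumann--Mackey decompositions have index sets $\kappa,\kappa'$ of the same cardinality. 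Choose decompositions $\cH=\bigoplus_{n\in\kappa}\cH_n$ with isometric isomorphisms $U_n:\cH_n\to L^2(G)$ intertwining $\pi|_{\cH_n}$ with $L(g)$ and $\bT|_{\cH_n}$ with $\bQ_G$, and likewise $\cH=\bigoplus_{n\in\kappa}\cH_n'$ with $U_n':\cH_n'\to L^2(G)$ for $\bT'$. Form $\widetilde U=\bigoplus U_n:\cH\to L^2(G)^{|\kappa|}$ and $\widetilde U'=\bigoplus U_n':\cH\to L^2(G)^{|\kappa|}$, so that $\widetilde U\pi(g)\widetilde U^{*}=L(g)^{[\kappa]}=\widetilde U'\pi(g)\widetilde U'^{*}$ and $\widetilde U\bT\widetilde U^{*}=\bQ_G^{[\kappa]}=\widetilde U'\bT'\widetilde U'^{*}$. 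Now put $U=\widetilde U'^{*}\widetilde U$. Then $U$ is unitary on $\cH$, and $U\pi(g)U^{*}=\widetilde U'^{*}\widetilde U\pi(g)\widetilde U^{*}\widetilde U'=\widetilde U'^{*}L(g)^{[\kappa]}\widetilde U'=\pi(g)$, so $U$ commutes with all of $\pi(G)$ and hence with $T$. Finally $U\bT U^{*}=\widetilde U'^{*}\widetilde U\,\bT\,\widetilde U^{*}\widetilde U'=\widetilde U'^{*}\bQ_G^{[\kappa]}\widetilde U'=\bT'$, i.e.\ $\bT'=U\bT U^{*}=(U^{*})^{*}\bT\,U^{*}$, which exhibits $\bT'$ in the required form (replace $U$ by $U^{*}$, which also commutes with $T$). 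This proves $\bTT\subseteq\{U^*\bT U\mid UT=TU\}$ and completes the argument.

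\textbf{Main obstacle.} The delicate point is getting the two different SVNM decompositions to ``line up'': one must know the index sets have equal size (this is exactly Proposition \ref{unique_decompose_size}, which in turn rests on uniqueness of multiplicities in direct-integral decompositions), and one must be careful that the intertwiner $U=\widetilde U'^{*}\widetilde U$ simultaneously conjugates $\pi$ to itself \emph{and} $\bT$ to $\bT'$ — this works precisely because in the model $L^2(G)^{|\kappa|}$ both $\pi$ and $\bT$ are sent to the \emph{canonical} pair $(L(g)^{[\kappa]},\bQ_G^{[\kappa]})$ that does not depend on which canonical system we started from. A secondary subtlety is domain/self-adjointness bookkeeping when $\bT$ is unbounded, which is handled by the functional-calculus/band-limited density arguments already used in Claim \ref{claim_replace4}.
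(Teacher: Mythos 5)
Your proposal is correct and follows essentially the same route as the paper: the easy inclusion by direct conjugation of the commutation relation, and the hard inclusion via the two Schr\"odinger representations, the Stone--von Neumann--Mackey decomposition, Proposition \ref{unique_decompose_size} to match the index sets, and the intertwiner $U=\widetilde U'^{*}\widetilde U$ built from the block isomorphisms onto the common model $L^2(G)^{|\kappa|}$. Your extra check that $U^*\bT U$ genuinely sits inside a canonical system (via $\bp'=U^*\bp U$ and preservation of spectra) is a small point the paper leaves implicit, but it does not change the argument.
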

Note that $U$ commutes with $T$ if and only if  $U$ commutes with $\pi(g)$ for any $g\in G$.

\begin{proof}
For the first direction,
if $U$ commutes with $\pi(g)$ for any $g\in G$, then
\[\pi(g)^*U^*\bT U \pi(g)=U^*\pi(g)^*\bT  \pi(g)U =U^*g\bullet\bT  U=g\bullet U^*\bT U\]

For the other direction, denote by $\{G,\pi,T,\hat{G},\breve{\pi},\bT\}$ the given canonical system, and by $\Pi$ the corresponding Schr\"odinger representation.
Let $\bT'$ be another canonical observable with canonical system $\{G,\pi,T,\hat{G},\breve{\pi}',\bT'\}$ and Schr\"odinger representation $\Pi'$.  By 
Corollary \ref{From_Stone} and by Proposition \ref{unique_decompose_size}, there are two orthogonal sum decomposition of $\cH$
\[\cH=\bigoplus_{n\in\kappa} \cH_n = \bigoplus_{n\in\kappa} \cH'_n \]
with the same index set $\kappa$,
and sequences of isometric isomorphisms $W_n,W'_n$ such that
\[W_n\ \Pi(h)|_{\cH_n}\ W_n^*=\gamma(h) \quad ,\quad W_n'\ \Pi'(h)|_{\cH_n'}\ W_n'^*=\gamma(h)  \]
for any $h\in J$. 
Thus, for any $n$
\begin{equation}
\Pi'(h)|_{\cH_n'}=W_{n}^{\prime *}  W_n\ \Pi(h)|_{\cH_n}\ W_n^*  W'_{n} \quad  \ \ {\rm for\ any\ }\ h\in J.
\label{eq:in_class_obse}
\end{equation}
Here $W_{n}^{\prime *}  W_n$ are isometric isomorphisms $\cH_n \rightarrow \cH'_n$.
Note that for $g$ in the subgroup $G\subset J$, we have $\Pi(g)=\Pi'(g)=\pi(g)$. Thus,
restricting (\ref{eq:in_class_obse}) to the subgroup $G$, we get
\[\pi(g)|_{\cH'_n}=W_{n}^{\prime *}  W_n\ \pi(g)|_{\cH_n}\ W_n^*  W'_{n} \quad  \ \ {\rm for\ any\ }\ g\in G\] 
Consider the unitary operator $U=\bigoplus_{n\in\kappa}  W_{n}^{\prime *}  W_n$. We have
\[\pi(g)=U \pi(g)U^*  \quad  \ \ {\rm for\ any\ }\ g\in G\] 
so $U$ commutes with $\pi$.
Moreover, restricting (\ref{eq:in_class_obse}) to the subgroup $\hat{G}\subset J$ and using $U$, 
we get
\[\breve{\pi}'(g)=U \breve{\pi}(g)U^*  \quad  \ \ {\rm for\ any\ }\ g\in \hat{G}.\] 
This identity holds for the generators as well, and we have
$\bT'=U \bT U^*$.
\end{proof}

\subsection{Examples}

We present three examples of our localization theory. First, the finite STFT (FSTFT) is a version of the STFT used in numerical applications. A standard approach for window design for FSTFT is to consider an optimal window for the continuous STFT, namely a Gaussian, and to discretize it to obtain a window of the finite STFT. Instead, in our approach we formulate the localization framework directly in the finite dimensional signal space.
The second example is the 1D wavelet transform, and is given to motivate the construction in Section \ref{The global localization framework}. Last, we give a localization framework for the Curvelet transform.

\subsubsection{The finite short time Fourier tansform}
\label{Localization framework for finite short-time-Fourier-tansform}

Consider the Heisenberg group $J$ corresponding to $G=\{e^{2\pi i \ZZ/N},\cdot\}$ (see e.g \cite{FSTFT1}). We call $G$ $time$, and $\hat{G}$ $frequency$. We call the center of $J$, which is isomorphic to $\{e^{i\RR},\cdot\}$ $phase$.  Let $\cH=L^2(G)$. Consider the subgroup $J'$ of $J$ having phase in $\{e^{2\pi i \ZZ/N},\cdot\}$, called $reduced\ phase$. We call $J'$ the (classical) \textbf{finite Heisenberg group}. The group $J'$ is isomorphic to the semi-direct product $(reduced\  phase \times time )\rtimes frequency$. Consider the canonical SCU faithful representation of $J'$ in $\cH$
\[\pi(z,g,q)=z\pi_1(g)\pi_2(q) \quad , \quad z,g_1,g_2\in e^{2\pi i \ZZ/N}\]
where $\pi_1(g_1)f(x)=f(g_1^{-1}x)$, $\pi_2(g_2)f(x)=\chi_{g_2}(x)f(x)$. Here 
\[\chi_{e^{2\pi i m /N}}(e^{2\pi i k /N})=e^{2\pi i mk /N},\] 
where $e^{2\pi i m /N},e^{2\pi i k /N} \in e^{2\pi i \ZZ/N}$ are generic elements. The representation $\pi$ is irreducible.
By the fact that $J'$ is unimodular,
the space of admissible functions is $L^2(G)$, and $A=I$ (see e.g \cite{Fuhr_wavelet} 
Theorem 2.25, and Assumption \ref{ass_voice} for square integrable representations).

The natural choices for canonical time and frequency observables are
\[\bQ f(x)=xf(x) \quad , \quad \bP f(x)= f(e^{2\pi i /N}x).\]
Note that $\cF[\bP f](\w)= \bQ \hat{f}(\w)=\w\hat{f}(\w)$.
We define the uncertainty 
\begin{equation}
S(f)=w_1\s_f(\bQ)+w_2\s_f(\bP)
\label{eq:unc_FSTFT}
\end{equation}
where $w_1,w_2\in\RR_+$ are weights.

\subsubsection{The 1D wavelet transform}
\label{1DWaveExample}

The 1D wavelet transform comprises dilations and time translations of a window in $L^2(\RR)$. In this section we recall the canonical observables developed in \cite{Adjoint}.
Positive dilations and time tanslations are defined by
\begin{equation}
\pi_1(g_1)f(t)  =  f(t-g_1) 
\label{eq:timeshift} 
\end{equation}
\begin{equation}
\pi_2(g_2)f(t)  =  e^{-g_2 /2} f\left(e^{-g_2}t\right) 
\label{eq:dil} 
\end{equation}
for $g_1\in G_1=\RR$ and $g_2\in G_2=\RR$.
To include also negative dilations, we introduce the reflection physical quantity $G_2=\{-1,1\}$ with representation $\pi_3(g_3)f(x)= f(g_3 x)$.
Note that often the wavelet transform is defined only with positive dilations, in which case it is not based on a direct sum of two irreducible representation.
The group 
\[G=(time \ translations)\rtimes( positive \ dilations \times  reflections)= \RR\rtimes (\RR\times\{-1,1\})\]
 is the 1D affine group, represented by $\pi(g)=\pi_1(g_1)\pi_2(g_2)\pi_3(g_3)$ in the 1D wavelet transform.
The representation $\pi$ is square integrable (and specifically irreducible) in $L^2(\RR)$.
The Duflo-Moore operator $A$ in  $L^2(\RR)$ is given by
\[[\cF Af](\w) = \frac{1}{\sqrt{\w}}\hf(\w)\ ,\]
and the space of admissible functions  is
\[{\cal A}=\left\{f\in L^2(\RR)\ \Big|\ \int_{-\infty}^{\infty}\frac{1}{\w}\abs{\hf(\w)}^2 d\w \leq\infty\right\}.\]

Let us introduce canonical observables.
The canonical observable for $\pi_3$ can be chosen to be $\cF[\bT_3 f](\w) = {\rm sign(\w)}[\cF f](\w)$. A perfectly localized window $f$ with respect to $\bT_3$ is one with support of $\hf$ in $\RR_+$ or $\RR_-$. Since $\bT_3$ measures the ``weight of the support of $\hf$ in $\RR_{\pm}$'', and since functions with frequency support in $\RR_{\pm}$ correspond to time signals with counterclockwise and clockwise phase respectively, we call $\bT_3$ the phase direction observable.
A natural choice for the canonical observable $\breve{T}_1$ is the time observable
	$\breve{T}_1 f(t) = t f(t)$.
Next, it is accustomed to call the physical quantity represented by dilations \textit{scale}. In \cite{Adjoint} a scale canonical observable $\breve{T}_2$ was defined by
\begin{equation} \label{eq:AdjT2}
	\cF \breve{T}_2 \cF^* \hat{f}(\omega) = - \ln(\abs{\omega}) \hat{f}(\omega).
\end{equation}
 Note that this choice of $\breve{T}_2$ is plausible from a physical point of view. Scale is related to wavelength, so a multiplication operator in the frequency domain is a suitable choice. 

Next we show that our definition of $\bT_2$ is based on a canonical system. The isometric isomorphism of Proposition \ref{help_construct_canoni}  is constructed as follows.
The invariant subspaces of Proposition \ref{help_construct_canoni}  are
\[L_{\pm}^2(\RR)=\{f\in L^2(\RR)\ |\ {\rm support}(\hf) \subset \RR_{\pm}\},\]
where $L^2(\RR)=L^2_+(\RR)\oplus L^2_{-}(\RR)$.
Consider the two warping transforms ${\cal W}_{\pm}: L_{\pm}^2(\RR)\rightarrow L^2(\RR)$ defined by
\begin{equation}
\tilde{f}_{\pm}(c)=[{\cal W}_{\pm}\hf](c)=e^{-c/2}\hf(\pm e^{-c}).
\label{eq:Warping1}
\end{equation}
The inverse warping transforms ${\cal W}^{-1}_{\pm}: L^2(\RR)\rightarrow L_{\pm}^2(\RR)$ are given by 
\[L^2_{\pm}(\RR)\ni\hat{f}_{\pm}(\w)=[{\cal W}^{-1}\tilde{f}_{\pm}](\pm\w)=\abs{\w}^{-\frac{1}{2}}\tilde{f}_{\pm}(-\ln(\abs{\w})).\]
Define the positive and negative scale transforms by
\[U_{\pm}: L_{\pm}^2(\RR)\mapsto L^2(\RR) \quad , \quad U_{\pm}={\cal W}\cF,\]
and define the scale transform, that maps functions in the time domain to the scale domain  by
\[U:L^2(\RR)\rightarrow L^2(\RR)^2 \quad , \quad U=U_+\oplus U_-.  \]
  Define the standard observable in the scale domain
	\[\bQ^{[2]}:L^2(\RR)^2\rightarrow L^2(\RR)^2 \quad , \quad \bQ^{[2]}(\tilde{f}_+(c),\tilde{f}_-(c))=(c\tilde{f}_+(c),c\tilde{f}_-(c)).\]
It is now straight forward to show that $\bT_2=U^*\bQ^{[2]}U$.

\begin{remark}
Let us explain our choice of the physical quantity $G_2$. 
It is possible to define the wavelet transform using dilations defined by
\[\pi'_2(g_2)f(x)=g_2^{-\frac{1}{2}}f(g_2^{-1}x)\]
for $g_2'$ in the group $G_2'=\{\RR_+,\cdot \}$. A canonical scale observable $\bT'$ in this case can be defined by $\cF\bT_2'\cF^{-1}\hf(\w)=\w^{-1}\hf(\w)$. Indeed, by $\cF\pi'_2(g_2)\cF^{-1}\hf=\pi'_2(g_2^{-1})\hf$, we have 
\begin{equation}
\begin{split}
\big[\cF\pi_2'(g_2)^*\bT_2'\pi_2'(g_2)\cF^{-1}\hf\big](\w) =  &  \cF\pi_2'(g_2)^* \cF^{-1}\Big(\w^{-1} g_2^{\frac{1}{2}}\hf(g_2\w)\Big)\\
 =&g_2\w^{-1}\hf(\w) = \big[\cF g_2\bullet\bT_2'\cF^{-1}\hf\big](\w).
\end{split}
\label{eq:}
\end{equation}
However, this construction is not based on a canonical system. For canonical systems, using Proposition \ref{help_construct_canoni}, the discussion can be pulled forwards to $L^2(G_2)^{2}$, where the canonical observable is defined as $\bQ^{[2]}$. Intuitively, it is sensible to define the integral over $G_2$ in the calculation of the expected values and variances, using the Haar measure of $G_2$. To see this, for $\tilde{f}\in L^2(G_2)^2$, we think of $|\tilde{f}(g_2)|^2$ as the signal content at scale $g_2$, we think of $e_{\tilde{f}}(\bQ^{[2]})$ as the center of mass of scales, and think of $\s_{\tilde{f}}(\bQ^{[2]})$ as the spread about the center of mass. Defining $e_{\tilde{f}}(\bQ^{[2]})$ and $\s_{\tilde{f}}(\bQ^{[2]})$ using the Haar measure of $L^2(G_2)^2$ assures that the integral has the interpretation of a sum, or a weighted average, over the group $G_2$. If we use $\bT_2'$ as a canonical observable, the integration in $e_f(\bT_2')$ and $\s_f(\bT_2')$ is not based on the Haar measure. 
\end{remark}

The following list collects some translation laws of the observables $\breve{T}_1$ and $\breve{T}_2$.

\begin{eqnarray}
e_{\pi_1(g_1)f}(\breve{T}_2) & = & e_{f}(\breve{T}_2)  \label{eq:expValaff2} \\
\s_{\pi_1(g_1)f}(\breve{T}_2) & = & \s_{f}(\breve{T}_2)  \label{eq:VarTransaff1} \\
e_{\pi_2(g_2)f}(\breve{T}_1) & = & e^{g_2} e_{f}(\breve{T}_1)  \label{eq:expValaff3} \\
\s_{\pi_2(g_2)f}(\breve{T}_1) & = & e^{2 g_2} \s_{f}(\breve{T}_1)  \label{eq:VarTransaff2} 
\end{eqnarray}

If we ignore the less important phase direction observable, the uncertainty $S(f)$ of a mother wavelet $f$ is defined as the sum or the product of $\sigma_f(\breve{T_1}),\sigma_f(\breve{T_2})$.
Next we recall an asymptotic minimizer of $S$, namely a sequence of windows $f_n$ with uncertainty converging to zero as $n\rightarrow\infty$ \cite{Adjoint}.
The construction is as follows:
\begin{itemize}
\setlength\itemsep{0em}
	\item Choose a two times differentiable bump function $\hat{f}(\omega)$ supported in $(0,1)$. An example is a cubic B-spline
  \item Choose $\kappa(n)$ such that $n=o(\kappa(n))$. Example: $\kappa(n)=n^2$.
  \item Define $f_n$ by 
  \begin{equation}
\hat{f_n}(\omega)=\frac{1}{\sqrt{n}}\hat{f}\left(\frac{\omega - \kappa(n)}{n}\right)
\label{eq:minimiz}
\end{equation}
  and normalize to $\|f_n\|=1$.
\end{itemize}
The following proposition holds:
\begin{proposition}
\label{prop:unc0}
The function system $f_n$ satisfies
\begin{eqnarray}
e_{f_n}(\breve{T}_1) =0\quad\quad &\ &  \s_{f_n}(\breve{T}_1) \stackrel{n \rightarrow \infty}{\longrightarrow} 0 \label{wave_min1}\\
 e_{f_n}(\breve{T}_2) \stackrel{n \rightarrow \infty}{\longrightarrow} -\infty &\ & \s_{f_n}(\breve{T}_2) \stackrel{n \rightarrow \infty}{\longrightarrow} 0. \label{wave_min2}
\end{eqnarray}
\end{proposition}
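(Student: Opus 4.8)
The plan is to verify the four claims directly from the definitions, using the concrete frequency-domain descriptions of $\bT_1$ and $\bT_2$ together with the explicit formula (\ref{eq:minimiz}) for $\hat{f_n}$. Throughout, I will work on the Fourier side, where $\cF\bT_1\cF^*$ acts as $i\frac{d}{d\w}$ (so that $\bT_1$ is multiplication by $t$ in the time domain) and $\cF\bT_2\cF^*$ acts as multiplication by $-\ln\abs{\w}$ by (\ref{eq:AdjT2}); since $\hat{f_n}$ is supported in $(\kappa(n),\kappa(n)+n)\subset\RR_+$ for $n$ large, the sign issues and the distinction between $L^2_+$ and $L^2_-$ play no role.

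First I would treat the time observable $\bT_1$. Because $\hat{f}$ is real-valued (a real bump such as a cubic B-spline), $\hat{f_n}$ is real-valued, hence $\overline{\hat{f_n}(-\w)}$ is \emph{not} relevant but rather $f_n(-t)=\overline{f_n(t)}$, so $\abs{f_n(t)}^2$ is an even function of $t$; therefore $e_{f_n}(\bT_1)=\int t\abs{f_n(t)}^2\,dt=0$ exactly. For the variance, I would use Parseval to write $\s_{f_n}(\bT_1)=\norm{\bT_1 f_n}^2=\norm{\tfrac{d}{d\w}\hat{f_n}}^2$ (up to the standard $2\pi$ constant), and then differentiate (\ref{eq:minimiz}): $\frac{d}{d\w}\hat{f_n}(\w)=\frac{1}{n^{3/2}}\hat{f}'\!\big(\tfrac{\w-\kappa(n)}{n}\big)$. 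A change of variables gives $\norm{\tfrac{d}{d\w}\hat{f_n}}^2 = n^{-2}\norm{\hat{f}'}^2$, which tends to $0$; I should also note that after the normalization $\norm{f_n}=1$ the normalizing constant tends to $1$ (indeed $\norm{f_n}^2$ before normalization is already exactly $\norm{\hat f}^2/(2\pi)$, a constant), so the normalization does not affect the limit. This settles (\ref{wave_min1}).

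Next, for the scale observable $\bT_2$, I would compute $e_{f_n}(\bT_2)=\int (-\ln\abs{\w})\,\abs{\hat{f_n}(\w)}^2\,d\w \big/ \norm{\hat{f_n}}^2$. Substituting $\w=\kappa(n)+n u$ with $u$ ranging over $\mathrm{supp}(\hat f)\subset(0,1)$, the weight becomes $-\ln(\kappa(n)+nu)\to-\infty$ uniformly in $u$ since $n=o(\kappa(n))$ forces $\kappa(n)\to\infty$; dividing by the (constant) mass gives $e_{f_n}(\bT_2)\to-\infty$. For the variance I would write $\s_{f_n}(\bT_2)=\int\big(-\ln\abs{\w}-e_{f_n}(\bT_2)\big)^2\abs{\hat{f_n}(\w)}^2\,d\w\big/\norm{\hat{f_n}}^2$ and again substitute; the integrand's weight becomes $\big(\ln(\kappa(n)+nu)-\text{avg}\big)^2$, and I would bound $\abs{\ln(\kappa(n)+nu)-\ln(\kappa(n)+nu')}\le \abs{\ln\frac{\kappa(n)+n}{\kappa(n)}}=\abs{\ln(1+n/\kappa(n))}\to 0$. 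Hence over the support the log varies by a vanishing amount, so $\s_{f_n}(\bT_2)\to 0$; this gives (\ref{wave_min2}).

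The main obstacle — really the only subtlety — is the scale variance bound: one must see that although $e_{f_n}(\bT_2)\to-\infty$, the \emph{oscillation} of $-\ln\abs{\w}$ across the support of $\hat{f_n}$ shrinks, because the support has width $n$ sitting at height $\kappa(n)\gg n$, making the logarithm nearly constant there. The hypothesis $n=o(\kappa(n))$ is exactly what delivers $n/\kappa(n)\to 0$ and hence $\ln(1+n/\kappa(n))\to0$. Everything else is a routine change of variables plus the observation that $\hat f\in C^2_c(0,1)$ guarantees $\norm{\hat f'}<\infty$ (needed for the $\bT_1$ variance) and that all mass integrals are finite constants so the normalization is harmless.
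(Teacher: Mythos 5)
Your proof is correct. Note that the paper itself does not prove Proposition \ref{prop:unc0}; it only recalls the result from the cited reference \cite{Adjoint}, so there is no in-paper argument to compare against — but your direct verification is exactly the natural computation: evenness of $\abs{f_n}^2$ (from the real bump $\hat f$) gives $e_{f_n}(\bT_1)=0$, the scaling $\frac{d}{d\w}\hat{f_n}=n^{-3/2}\hat f'(\cdot)$ gives $\s_{f_n}(\bT_1)=O(n^{-2})$, and the key observation that the oscillation of $-\ln\abs{\w}$ over the support $(\kappa(n),\kappa(n)+n)$ is $\ln(1+n/\kappa(n))\to 0$ correctly delivers both limits in (\ref{wave_min2}). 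All steps check out.
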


We draw the following qualitative conclusion from this example: the smaller the scale of a window is, the more simultaneous time-scale localization is possible. 

Here, we want to discuss the shortcomings and limitations of the 1D wavelet uncertainty $S(f)$ as defined above. Note that for large $n$, $e_{f_n}(\breve{T}_2)$ is large and negative, so $f_n$ measures small scales. The measurements of ``macroscopic'' scales in the wavelet transform with the mother wavelet $f_n$ is performed using $\pi_2(g_2)f_n$ with large $g_2$. Note that for measuring macroscopic scales we use $g_2\rightarrow\infty$ as $n\rightarrow\infty$, and in this case $\s_{\pi_2(g_2)f_n}(\breve{T}_1)\rightarrow\infty$. Moreover, $\s_{\pi_2(g_2)f_n}(\breve{T}_2)$ stays constant, so the uncertainty in measuring macroscopic scales of signals using $f_n$ tends to infinity as $n\rightarrow\infty$. To conclude, $f_n$ with large $n$ is a bad mother wavelet for measuring signals having macroscopic scales. 

Let us explain the reason for this bad result. When we construct a wavelet transform we choose a mother wavelet $f$, and take the inner product of the signal $s$ with the set $\{\pi(g)f\ |\ g\in G\}$, called the \textbf{orbit} of $f$. For any other mother wavelet of the form $\pi(g')f$, where $g'\in G$, the wavelet transform is the same up to a right translation in the domain $G$. Indeed
\[V_{\pi(g')f}[s](g)=\ip{s}{\pi(g)\pi(g')f}=V_{f}[s](g\bullet g').\]
 Thus, when analyzing a wavelet transform, the object of interest is not the mother wavelet itself, but the orbit of the mother wavelet.
The standard uncertainty $S(f)=\s_f(\breve{T}_1) \s_f(\breve{T}_2)$ (or $\s_f(\breve{T}_1) + \s_f(\breve{T}_2)$) is a measure of the uncertainty of an individual window $f$, and it is not invariant under the group action of $\pi$ on $f$. Hence, it is not suitable as an uncertainty measure of a wavelet transform as a whole. In Section \ref{The global localization framework} we present a generalization of the time-frequency Heisenberg uncertainty to generalized wavelets, that encapsulates the global uncertainty of the orbit of a window. Such a quantity captures the uncertainty in measuring physical quantities with the signal transform as a whole.

\subsubsection{The Curevelet transform}
\label{The Curevelet transform}

In this subsection boldface lower case letters, e.g ${\bf x}$, denote vectors in $\RR^2$.
The Curvelet transform comprises translations, rotations, and anisotropic dilations of a window in $L^2(\RR^2)$ \cite{Curvelet}.
Translation by ${\bf g}_1\in \RR\times\RR$ is defined as usual by $\pi_1({\bf g}_1)f({\bf x})=f({\bf x}-{\bf g}_1)$. Consider the rotation matrix operator, with $g_2\in e^{i\RR}$,
\[R_{g_2}= \left(
\begin{array}{cc}
	\Re (g_2) & \Im (g_2)\\
	-\Im (g_2) & \Re (g_2)
\end{array}
\right).\] 
Rotaton by $g_2\in\RR$ of $L^2(\RR^2)$ functions is defined by
\[\pi_2(g_2)f({\bf x}) = f(R_{g_2}^{-1}{\bf x}).\]
Consider the anisotropic dilation matrix operator, with $g_3\in\RR$,
\[D_{g_3}= \left(
\begin{array}{cc}
	e^{g_3} & 0\\
	0 & e^{\frac{1}{2}g_3}
\end{array}
\right).\] 
Anisotropic dilation by $g_3\in\RR$ of $L^2(\RR^2)$ functions is defined by
\[\pi_3(g_3)f({\bf x}) = e^{-\frac{3}{4}g_3} f(D_{g_3}^{-1}{\bf x}).\]
Last, reflections by $g_4\in\{-1,1\}$ are represented by
\[\pi_4(g_4)f({\bf x})=g_4f({\bf x}).\]
The Curvelet transform is based on the operators
\[\pi(g)=\pi_1({\bf g}_1)\pi_2(g_2)\pi_3(g_3)\pi_4(g_4).\]

To construct canonical observables, we transform the discussion to the frequency domain. We have
\[\hat{\pi}_1({\bf g}_1)\hat{f}(\boldsymbol{\w}) := \cF\pi_1({\bf g}_1)\cF^{-1} \hat{f}(\boldsymbol{\w}) = e^{-i\boldsymbol{\w} \cdot {\bf g}_1} \hat{f}(\boldsymbol{\w}).\]
To derive $\hat{\pi}_2$ and $\hat{\pi}_3$, note that for a general invertible matrix $B\in \RR^{2\times 2}$,
\begin{align}
[\cF f(B{\bf x})](\boldsymbol{\w})=  & \iint_{\RR^2}e^{-2\pi i \boldsymbol{\w}\cdot{\bf x} }f(B{\bf x}) d{\bf x}\\
= & J(B)^{-1}\iint_{\RR^2}e^{-2\pi i \boldsymbol{\w}\cdot B^{-1}{\bf x} }f({\bf x}) d{\bf x}=J(B)^{-1}\hat{f}(B^{-T}\boldsymbol{\w}),
\label{eq:General_B_Fourier}
\end{align}
where $J(B)$ is the Jacobian of $B$, and $B^{-T}$ is the transpose of $B^{-1}$.
Therefore,
\[\hat{\pi}_2(g_2)\hat{f}(\boldsymbol{\w}):= \cF\pi_2(g_2)\cF^{-1} \hat{f}(\boldsymbol{\w}) = \hat{f}({R}_{g_2}\boldsymbol{\w})\]
\[\hat{\pi}_3(g_3)\hat{f}(\boldsymbol{\w}):= \cF\pi_3(g_3)\cF^{-1} \hat{f}(\boldsymbol{\w}) = e^{\frac{3}{4}g_3}\hat{f}({D}_{g_3}\boldsymbol{\w})\]
Let us define the canonical observables directly in the frequency domain. 
For translation, the natural definition is the position observable
\[\big( \bT^1_1 \hf(\boldsymbol{\w}),\bT^2_1 \hf(\boldsymbol{\w})\big) = \big(i\frac{\partial}{\partial \w_1}\hf(\boldsymbol{\w}),i\frac{\partial}{\partial \w_2}\hf(\boldsymbol{\w}) \big).\]
For rotations define the angle observable
\[\bT_2 \hf(\boldsymbol{\w})={\rm Arg}(\boldsymbol{\w}) \hf(\boldsymbol{\w}),\]
where ${\rm Arg}:\RR^2\rightarrow\CC$ is defined by ${\rm Arg}(\w_1,\w_2)=e^{i\theta}$ for $\theta$ satisfying $(\w_1+i\w_2)=\abs{\w_1+i\w_2}e^{i\theta}$.
For dilations, define the anisotropic scale observable by the arithemetic average of scales along the axis,
\[\bT_3 \hf(\boldsymbol{\w})=\big(-\frac{1}{2}\ln(\abs{\w_1})-\ln(\abs{\w_2}) \big)\hf(\boldsymbol{\w}),\]
which is equal to the geometric average 
\begin{equation}
\bT_3 \hf(\boldsymbol{\w})=\big(-\ln(\abs{\w_1}^{\frac{1}{2}}\abs{\w_2})\big)\hf(\boldsymbol{\w}).
\label{eq:geo_un_scale}
\end{equation}
As before, we define the uncertainty by 
\[S(f)= w_1\big(\s_f(\bT^1_1)+\s_f(\bT^2_1)\big) + w_2\s_f(\bT_2)+ w_3\s_f(\bT_3),\]
for some choice of weights $w_1,w_2,w_3>0$.

Last we introduce the angle transform and the anisotropic scale transform, as described in Theorem \ref{Theorem:pull_trans1}.
The angle transform $\Theta:L^2(\RR^2)\rightarrow L^2(e^{i\RR}\times\RR_+)$ is defined to be
\[[\Theta f](g_2,r) = \hf(r\Re(g_2),r\Im(g_2)).\]
Note that $\Theta$ intertwines rotations with translations along the $G_2$ axis.

For the anisotropic scale transform, 
note that anisotropic dilations keep the variable $q=\ln\Big(\frac{\abs{\w_1}^{1/2}}{\abs{\w_2}}\Big)$ constant, and translates the variable $g_2=-\ln(\abs{\w_1}^{\frac{1}{2}}\abs{\w_2})$. Inverting this gives
\[\w_1=\pm e^qe^{-g_3} \quad , \quad \w_2=\pm e^{-\frac{1}{2}q}e^{-\frac{1}{2}g_3}.\]
This leads to the following construction of the anisotropic scale transform.
Consider the four subspaces of $L^2(\RR^2)$ with frequency supports in each of the four quadrants of $\RR^2$
\[L_{n}^2(\RR^2) = \{\hf\in L_2(\RR) \ |\ {\rm support}(\hf)\subset \RR_{\pm}\times\RR_{\pm}\},\]
and $n=1,\ldots,4$ is some ordering of the quadrants.
Consider the four anisotropic warping transforms ${\cal W}_{n} :L_{n}^2(\RR^2)\rightarrow L^2(\RR^2)$, $n=1,\ldots,4$, defined by
\begin{equation}
[{\cal W}_{n}\hf](g_3,q)=e^{\frac{1}{4}q}e^{-\frac{3}{4}g_3}\hf(\pm e^q e^{-g_3},\pm e^{-\frac{1}{2}q} e^{-\frac{1}{2}g_3}).
\label{eq:Warping2}
\end{equation}
Let us define 
\[{\cal W}=\bigoplus_{n=1}^4{\cal W}_{n}:L^2(\RR^2)\rightarrow L^2(\RR^2)^4 \quad , \quad [{\cal W}\hf](g_3,q;n)= [{\cal W}_{n}\hf](g_3,q).\]
We define the anisotropic scale transform by ${\cal C}={\cal W}\cF$, and call $L^2(\RR^2)^4$ the anisotropic scale domain. Note that ${\cal C}$ intertwines anisotropic dilations with translations along the $G_3$ axis.

\section{The global localization framework}
\label{The global localization framework}
In this section we construct a framework for defining all of the canonical observables of a generalized wavelet transform ``at once''. {This framework is a special case of the one parameter localization framework, where $G$ is a group with a nested semi-direct product structure. It is thus also a special case of the classical general wavelet theory of square integrable representations. 
In this global framework, it is possible to define variances that are constant on orbits. This means that the corresponding uncertainty describes the localization behavior of the wavelet transform as a whole, instead of describing the individual localization of a window.}
For motivation, we start with the example of the 1D wavelet transform.  This transform is based on the affine group, which has a semi-direct product structure.

\subsection{Semi-direct products}

A group $G$ is called a \textbf{semi-direct product} of a normal subgroup $N\triangleleft G$ and a subgroup $H\subset G$, if $G=NH$ and $N\cap H=\{e\}$. This is denoted by $G=N\rtimes H$. If $G=N\rtimes H$, then each element $g\in G$ can be written in a unique way as $nh$ where $n\in N$, $h\in H$. Thus we can identify elements of $G$ with ordered pairs, or coordinates $(n,h)\in N\times H$. In the coordinate representation, the group multiplication takes the form
\[(n,h)(n',h')\sim nhn'h'=n\ hn'h^{-1}\ hh'\sim (n\ hn'h^{-1},hh').\]
Since $N$ is a normal subgroup, $A_h(n')=hn'h^{-1}$ is in $N$. Moreover, $A_h$ is a smooth group action of $H$ on $N$, and a smooth automorphism of $N$ for each $h\in H$.

When $N,H$ are isomorphic to physical quantities, $G=N\times H$ is interpreted as the group of ordered pairs of $quantity_1,quantity_2$. Each coordinate of $G$ corresponds to the physical dimension of the corresponding physical quantity, and the value at this coordinate corresponds to the value of the physical quantity.
Thus, the semi-direct product structure allows us to make the following philosophical argument apply to groups: ``physical quantities may change their values under the application of transformations, but they retain their dimensions.'' Namely, multiplying a group element $g$ of $G$ with another, may change the values of the coordinates of $g$, but may not change the ordered pair structure itself. Recall that this philosophical statement was employed only for observables up until now. This interpretation holds also in the case where $N$ and $H$ are direct products of physical quantities.

\begin{example}
In the case of the affine group, we have $G=N\rtimes H$, where $N\sim \{\RR,+\}$ is the subgroup of translations and $H\sim\{\RR,+\}\times\{-1,1\}$ is the subgroup of dilations and reflections. The group product takes the following form in coordinates
\[(n,h_1,h_2)\bullet(n',h'_1,h_2')=(n+h_2e^{h_1}n',h_1+h_1',h_2h_2').\]
Namely, $A_{(h_1,h_2)}(n')=h_2e^{h_1}n'$.
\end{example}

\subsection{The global localization framework for the wavelet transform}
\label{The global localization framework for the wavelet transform}

Let us now motivate the construction of the global localization framework for the case of the 1D wavelet transform. In this anlysis we ignore the less important phase direction observable, and consider the subgroup $time\  translations\rtimes dilations$ of the affine group, called the reduced affine group.
By (\ref{eq:VarTransaff2}), the time variance of a window $f$ is multiplied by $e^{2g_2}$ when the window is dilated by $g_2$. This agrees with the observation in Subsection \ref{1DWaveExample} that the smaller the mean
scale of a window is, the more simultaneous time-scale localization is possible when using the standard uncertainty $\s_f(\breve{T}_1)\s_f(\breve{T}_2)$. As is noted in Subsection \ref{1DWaveExample}, in generalized wavelet transforms we are interested in an uncertainty which is invariant under the action $\pi$ on the window $f$. In the 1D wavelet transform we may define the global variances as $\Sigma_f(\breve{T}_1)=e^{-2e_f(\breve{T}_2)}\s_f(\breve{T}_1)$ and $\Sigma_f(\breve{T}_2)=\s_f(\breve{T}_2)$. As required, the global variances are constant on orbits. Namely, $\Sigma_{\pi(g)f}(\breve{T}_1)=\Sigma_f(\breve{T}_1)$ for any $g\in G$, and similarly for $\Sigma_f(\breve{T}_2)$. We define the global uncertainty as
\begin{equation}
S(f)=w_1\Sigma_f(\breve{T}_1)+w_2\Sigma_f(\breve{T}_2)=w_1e^{-2e_f(\breve{T}_2)}\s_f(\breve{T}_1)\ +\ w_2\s_f(\breve{T}_2)
\label{eq:min_wav}
\end{equation}
for weights $w_1,w_2\in\RR_+$.
By finding a minimizer to the global uncertainty we avoid the misleading result discussed in Subsection \ref{1DWaveExample}.

Let us formulate this example in a way that allows generalization. We are interested in the transformations of $\sigma_f(\bT_1),\sigma_f(\bT_2)$ under the application of $\pi(g)$ on $f$. From the Heisenberg point of view , it is enough to know the transformations of $\bT_1,\bT_2$ under conjugation with $\pi(g)$. 
Indeed, by (\ref{Hpov1}) and (\ref{Hpov2}) we have for $m=1,2$
\begin{equation}
e_{\pi(g)f}(\bT_m) = e_f\big(\pi(g)^*\bT_m \pi(g)\big) \quad , \quad
\s_{\pi(g)f}(\bT_m) =\s_{f}\big(\pi(g)^*\bT_m \pi(g)\big).
\label{eq:H_POV_wavelet}
\end{equation}

Let us define the multi-observable ${\bf \bT}: \cH\rightarrow \cH^2$ by ${\bf \bT}f=(\bT_1f,\bT_2f)$, and define conjugation by $\pi(g)^*{\bf \bT}\pi(g)=(\pi(g)^*\bT_1\pi(g),\pi(g)^*\bT_2\pi(g))$. It is readily verified that 
\begin{equation}
\pi(g)^*{\bf \bT}\pi(g) = (g_1I+e^{g_2}\bT_1,g_2I+\bT_2).
\label{eq:wave_com}
\end{equation}
By the fact that $A_{g_2}(g_1')= e^{g_2}g_1'$, (\ref{eq:wave_com}) can be written in the form
\begin{align}
\label{global_canoni1}
\pi(g)^*{\bf \bT}\pi(g)=g\bullet {\bf \bT}
\end{align}
where the right hand side of (\ref{global_canoni1}) is given by functional calculus (Remark \ref{Band_limit_poly}) as 
\[g\bullet {\bf \bT} = \Big(\int \big(g_1+A_{g_2}(\lambda_1)\big)dP(\lambda_1) , \int \big(g_2+\lambda_2\big)dP(\lambda_2)\Big).\]
We interpret (\ref{global_canoni1}) as a canonical commutation relation, relating the multi-observable ${\bf \bT}$ with the representation $\pi$. We call (\ref{global_canoni1}) the multi-canonical commutation relation, and call ${\bf \bT}$ a canonical multi-observable.

We define expected values and variances of ${\bf \bT}$ by ${\bf e}_{f}({\bf \bT}) = \Big(e_f(\bT_1),e_f(\bT_2)\Big)$ and $\boldsymbol{\s}_{f}({\bf \bT}) = \Big(\s_f(\bT_1),\s_f(\bT_2)\Big)$.
Observe that the transformation rules (\ref{eq:expValaff2})-(\ref{eq:VarTransaff2}) can be derived from the Heisenberg point of view (\ref{eq:H_POV_wavelet}) and the commutation relation (\ref{global_canoni1}), and written as
\begin{equation}
{\bf e}_{\pi(g)f}({\bf \bT})=  g\bullet e_{f}({\bf \bT})=\Big(g_1+e^{g_2}e_f(\bT_1),g_2+e_f(\bT_2)\Big),
\label{eq:wave_exp_trans}
\end{equation}
\begin{equation}
 \boldsymbol{\s}_{\pi(g)f}({\bf \bT}) =\Big(e^{2g_2}\s_f(\bT_1),\s_f(\bT_2)\Big) = \Big(A_{g_2}^2\s_f(\bT_1),\s_f(\bT_2)\Big).
\label{eq:wave_var_trans}
\end{equation}
where $A_{g_2}^2=A_{g_2}\circ A_{g_2}$ is defined by composition.

Our goal is to design a ``global variance'' which is constant on orbits $\pi(G)f$. Given $f$ and its orbit $\pi(G)f$, there is always an element $y\in\pi(G)f$ with $e_y(\bT_1)=e_y(\bT_2)=0$. Namely, for $g$ given in coordinates by $(g_1,g_2)=\big(e_f(\bT_1),e_f(\bT_2)\big)$, $y=\pi(g^{-1})f$. Indeed, by (\ref{eq:wave_exp_trans}),
\begin{equation}
{\bf e}_{y}({\bf \bT})={\bf e}_{\pi(g^{-1})f}({\bf \bT})=g^{-1}\bullet {\bf e}_{f}({\bf \bT})=g^{-1}\bullet g=(0,0).
\label{eq:g7kasdasdsd4}
\end{equation}
Now, by (\ref{eq:wave_var_trans}) and (\ref{eq:g7kasdasdsd4}),
\[\Sigma_f(\breve{T}_1)=e^{-2e_f(\breve{T}_2)}\s_f(\breve{T}_1)=e^{-2g_2}\s_f(\breve{T}_1)=\s_{\pi(g^{-1})f}(\bT_1)=\s_y(\bT_1),\]
\[\Sigma_f(\breve{T}_2)=\s_f(\breve{T}_2)=\s_{\pi(g^{-1})f}(\bT_2)=\s_{y}(\bT_2).\]
This shows that $\Sigma_f(\breve{T}_1),\Sigma_f(\breve{T}_1)$ are the variances of the unique element $y\in\pi(G)f$ having zero expected values, and thus $S(f)$, as defined in (\ref{eq:min_wav}), is constant on orbits.

\subsection{Semi-direct product wavelet transforms}
\label{The general global localization framework}

We are now ready to introduce the general setting of the global localization framework. The following assumption strengthen Assumption \ref{ass_voice}.

\begin{assumption}[Semi-direct product wavelet transform]
\label{ass_voice2}
A Semi-direct product wavelet transform is constructed by, and assumed to satisfy, the following.
\begin{enumerate}
\item
The group $G$ is a nested semi-direct product group, namely
\begin{align}
\label{eq:general_semi2}
&G = H_0\\
&H_0=  (N_0\times N_1)\rtimes H_1 \\
&H_m  =  N_{m+1} \rtimes H_{m+1} \quad , \quad m=1,\ldots, M-2 \\
&H_{M-1}=N_M .
\end{align}
Here, $N_0$ is the center of $G$.  
For $m=0,\ldots,M$,
 $N_m$ is a group direct product of physical quantities, $G_m^1\times \ldots\times G_m^{K_m}$, where $K_m\in\NN$. 
We denote elements of $N_m$ in coordinates by ${\bf g}_m=(g_m^1,\ldots,g_m^{K_m})$, and elements of $H_m$ by ${\bf h}_m$.
Note that ${\bf h}_m=({\bf g}_{m+1},\ldots,{\bf g}_M)$. For the center, we also denote $Z=N_0$, and $K_z=K_0$, and denot elements of $Z$ in coordinates by ${\bf z}=(z^1,\ldots,z^{K_z})$. 
  \item
  We consider the representations $\pi_m({\bf g}_m)=\pi_m^1(g_m^1) \circ\ldots \circ \pi_m^{K_m}(g_m^{K_m})$ of $N_m$, $m=0,\ldots,M$ in $\cH$. We assume  that
	$\pi(g)=\pi_0({\bf g}_0)\circ\ldots \circ \pi_M({\bf g}_M)$ is a square integrable representation of $G$. Namely, $\pi$ is a SCU irreducible representation, and there is a vector $f\in\cH$ such that $V_f[f]\in L^2(G)$. Here, $V_f[f]$ is defined in (\ref{eq:1}). %
	\item
	The semi-direct product wavelet transform based on $\pi$ and on the window $f\in\cH$, satisfying $V_f[f]\in L^2(G)$, is defined to be $V_f:\cH\rightarrow L^2(G)$, as defined in (\ref{eq:1}).
\end{enumerate}
\end{assumption}

We abbreviate semi-direct product wavelet transforms by SPWT.	
	By the theory of square integrable representations of locally compact topological groups, we have the following theorem (for example see \cite{gmp}).

\begin{proposition}
Any SPWT also satisfies Assumption \ref{ass_voice}.
\end{proposition}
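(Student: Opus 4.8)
The plan is to verify the seven items of Assumption \ref{ass_voice} one by one, showing each follows either directly from Assumption \ref{ass_voice2} or from the classical theory of square integrable representations of locally compact groups. Items 1--5 are essentially bookkeeping: a nested semi-direct product of manifolds that are direct products of physical quantities is, as a manifold, a direct product of physical quantities (forgetting the group law), so items 1--3 of Assumption \ref{ass_voice} hold with the Radon measure $dg$ taken to be the left Haar measure of $G$; items 4--5 of Assumption \ref{ass_voice} are literally the content of item 2 of Assumption \ref{ass_voice2}, since each $\pi_m^k$ is a SCU representation of the physical quantity $G_m^k$ and $\pi(g)$ is defined as the ordered composition $\pi_0({\bf g}_0)\circ\cdots\circ\pi_M({\bf g}_M)$, which unwinds to a composition $\pi_0^1(g_0^1)\circ\cdots\circ\pi_M^{K_M}(g_M^{K_M})$ of the one-parameter pieces.

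The substantive part is items 6 and 7, and here I would simply invoke the Duflo--Moore theory. Since $\pi$ is assumed (item 2 of Assumption \ref{ass_voice2}) to be a square integrable irreducible representation of the locally compact group $G$ with its left Haar measure, the Duflo--Moore theorem (as cited in Remark \ref{ass_voice_square}, e.g. \cite{gmp0}\cite{gmp}) provides a densely defined positive self-adjoint operator $A$ with densely defined inverse such that $f$ is admissible --- meaning $V_f[f]\in L^2(G)$ --- if and only if $f$ is in the domain of $A$, and such that the orthogonality relation $\ip{Ah}{Af}_{\cH}\ip{s}{q}_{\cH}=\ip{V_f[s]}{V_h[q]}_{L^2(G)}$ holds for all admissible $f,h$ and all $s,q\in\cH$. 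These are exactly items 6 and 7 of Assumption \ref{ass_voice}. One should check that at least one admissible vector exists so the theory is non-vacuous, but this is guaranteed by item 2 of Assumption \ref{ass_voice2}, which posits a vector $f$ with $V_f[f]\in L^2(G)$.

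The main obstacle --- really the only point requiring care --- is making sure the hypotheses of the classical theorem are genuinely met: $G$ must be locally compact and (second countable / $\sigma$-compact, depending on the reference) so that Haar measure and the direct-integral machinery behave, and $\pi$ must be the right kind of representation (strongly continuous, unitary, irreducible, square integrable). All of these are either part of Assumption \ref{ass_voice2} or automatic from the nested semi-direct product construction over finitely many physical-quantity factors, each of which is a Lie group of one of the four listed types; a finite iterated semi-direct product of such groups is again a locally compact second countable group. So the proof is short: unpack the definitions for items 1--5, cite Duflo--Moore for items 6--7.

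\begin{proof}
We verify the seven items of Assumption \ref{ass_voice}. By (\ref{eq:general_semi2}), as a manifold $G$ is the direct product of the manifolds $N_0,N_1,\ldots,N_M$, and each $N_m$ is in turn the direct product of the physical quantities $G_m^1,\ldots,G_m^{K_m}$; relabelling the $G_m^k$ as $G_1,\ldots,G_n$ with $n=\sum_{m=0}^M K_m$ gives items 1 and 2 of Assumption \ref{ass_voice}. Item 3 holds by taking $dg$ to be the left Haar measure of $G$, which exists since $G$, being a finite iterated semi-direct product of Lie groups of the four types listed in Section \ref{A one parameter localization framework}, is a locally compact (indeed second countable) group. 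For item 4, each $\pi_m^k$ is by hypothesis (item 2 of Assumption \ref{ass_voice2}) a SCU representation of $G_m^k$ in $\cH$. For item 5, expanding $\pi(g)=\pi_0({\bf g}_0)\circ\cdots\circ\pi_M({\bf g}_M)$ and each $\pi_m({\bf g}_m)=\pi_m^1(g_m^1)\circ\cdots\circ\pi_m^{K_m}(g_m^{K_m})$ exhibits $\pi(g)$ as the composition of the one-parameter operators $\pi_1(g_1)\circ\cdots\circ\pi_n(g_n)$ in the relabelled indexing.

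It remains to establish items 6 and 7. By item 2 of Assumption \ref{ass_voice2}, $\pi$ is a strongly continuous, unitary, irreducible, square integrable representation of the locally compact second countable group $G$, and there exists $f\in\cH$ with $V_f[f]\in L^2(G)$. By the Duflo--Moore theorem for square integrable representations (see Remark \ref{ass_voice_square} and \cite{gmp0}\cite{gmp}), there is a densely defined positive self-adjoint operator $A$ on $\cH$ with densely defined inverse, such that a vector $f$ satisfies $V_f[f]\in L^2(G)$ if and only if $f$ lies in the (dense) domain $\cA$ of $A$; this is item 6. The same theorem gives the orthogonality relation
\[
\ip{Ah}{Af}_{\cH}\,\ip{s}{q}_{\cH}=\ip{V_f[s]}{V_h[q]}_{L^2(G)}
\]
for all $f,h\in\cA$ and all $s,q\in\cH$, which is item 7. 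Hence every SPWT satisfies Assumption \ref{ass_voice}.
\end{proof}
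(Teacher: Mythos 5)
Your proof is correct and takes essentially the same route as the paper, which dispatches this proposition with a one-line appeal to the theory of square integrable representations of locally compact groups (citing \cite{gmp}); your verification of items 1--5 as bookkeeping and invocation of the Duflo--Moore theorem for items 6--7 is just a fleshed-out version of that same argument. No issues.
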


	Note that this assumption includes Schr\"odinger representations of Heisenberg groups based on tuples of physical quantities, like the STFT and the FSTFT. Indeed, Heisenberg groups can be written as 
	$J=(phase\ rotations \times translations)\rtimes modulations$. The 1D wavelet transform and the Shearlet transform are also SPWTs.

	\begin{remark}
	\label{ass_inversion2}
	Consider a SPWT.
	\begin{enumerate}
		\item
	As a result of the semi-direct product structure, the group multiplication has the following form in coordinates
	\begin{equation}
	\begin{split}
	g\bullet g' = & ({\bf z},{\bf g}_1,\ldots,{\bf g}_M)\bullet ({\bf z}',{\bf g}_1',\ldots,{\bf g}_M') \\
	 =    & \Big(({\bf z},{\bf g}_1) \bullet A_{z,1}\big({\bf h}_1;({\bf z}',{\bf g}_1')\big) \ ,\  {\bf g}_{2} \bullet A_{2}({\bf h}_{2};{\bf g}_{2}')    \ ,\ \ldots \ , \ {\bf g}_{M-1} \bullet A_{M-1}({\bf h}_{M-1};{\bf g}_{M-1}')\ ,\ {\bf g}_M \bullet {\bf g}_M'\Big)
	\end{split}
	\label{eq:vvvvvvvvvvv}
	\end{equation}
	where $A_m$ are smooth automorphisms with respect to ${\bf g}_m'$ if $m\geq 2$, and with respect to $({\bf z}',{\bf g}_1')$ for $m=(z,1)$. Moreover, $A_m$ with respect to ${\bf h}_m$,  are smooth group actions of $H_m$ on $N_m$  for $m\geq 2$, and on $Z\times N_1$ for $m=(z,1)$. Here, ${\bf h}_m$ ($m=1,\ldots,M$) are coordinates corresponding to $g$.	
		\item
		We can write a formula for the group inverse of $g\in G$ in coordinates. For $m=2,\ldots,M$, let $({\bf g}_m',{\bf h}_m^{-1})$ be the inverse of $({\bf g}_m,{\bf h}_m)$. We use
\[({\bf g}_m,{\bf h}_m)\bullet ({\bf g}_m',{\bf h}_m^{-1})= ({\bf g}_m\bullet A_m({\bf h}_m;{\bf g}_m'), {\bf h}_m\bullet{\bf h}_m^{-1})=(\bf{e},\bf{e})\]
to get ${\bf g}_m'= A_m({\bf h}_m^{-1};{\bf g}_m^{-1})$. The inverse of $({\bf z},{\bf g}_1,{\bf h}_1)$ is given by $\left(A_{z,1}\big({\bf h}_1^{-1};({\bf z}^{-1},{\bf g}_1^{-1})\big),{\bf h}_1^{-1}\right)$.
\end{enumerate}	 
\end{remark}

Next we explain how the center $Z$ of $G$ may be omitted in a SPWT.
Assume that $Z= G^{1}_z\times\ldots\times G^{K_z}_z$, where $G^k_z$ is a physical quantity. Denote $\pi_{z,1}({\bf z},{\bf g}_1)=\pi_z({\bf z})\circ\pi_1({\bf g}_1)$, where $\pi_z({\bf z})=\pi_0({\bf g}_0)$ is the representation of the center. 
		A character of a group is a unitary representation of the group in $\CC$.
		Any irreducible representation $\pi$ of $G$, restricted to the center of the group $Z$, is a character times the identity operator.  
		Therefore, $\pi_z({\bf z})=\chi(z)I$ for some character $\chi$ of $Z$, and $I$ the identity operator in $\cH$.
		As a result, wavelet transform $V_f[s]$ of any $s\in\cH$ is completely determined by the values of $V_f[s]$ on the cross section
		\begin{equation}
		G_z=\{g\in G\ |\ {\bf z} = {\bf e}\} \cong G/Z
		\label{eq:center_cross}
		\end{equation}
		where $\bf{e}$ is the unit element of $Z$ in coordinates, and $G/Z$ is the quotient group of $G$ relative to $Z$.
		Indeed, for any $g\in G$, we have in coordinates $\pi({\bf z},{\bf g}_1,{\bf h}_1) = \chi({\bf z})\pi({\bf e},{\bf g}_1,{\bf h}_1)$, so 
		\begin{equation}
		V_f[s]({\bf z},{\bf g}_1,{\bf h}_1)= \overline{\chi({\bf z})}V_f[s]({\bf e},{\bf g}_1,{\bf h}_1).
		\label{eq:no_Z_localization}
		\end{equation}
		Thus in a SPWT, restricting $V_f[s]$ to the domain $G_z$, preserves the invertibility of the SPWT. For this reason, in SPWT like the STFT, $V_f[s](g)$ is calculated only for $g\in G_z$.
		
We can now show that $G_z$ is a nested semi-direct product group with trivial center. For this, 
let us analyze the automorphism ${A}_{z,1}({\bf h}_1;\cdot):Z\times N_1\rightarrow Z\times N_1$. Let $z'\in G$ be an element in the subgroup $Z$, and $g_1'\in G$ be an element in the subgroup $N_1$. Any generic element in the subgroup $Z\times N_1$ can be written uniquely as $g_{z,1}'=z'g_1'$.
	Let $g_{z,1}$ be another element in $Z\times N_1$, and $h_1,h_1'$ elements in $H_1$. Any two generic elements in $G$ can be written as $g_{z,1}'h_1'=z'g_1'h_1'$ and $g_{z,1}h_1$. By the semi-direct product structure we have 
	$g_{z,1}h_1g_{z,1}'h_1'= g_{z,1} \ h_1g_{z,1}'h_1^{-1} \ h_1 h_1'$, where $h_1g_{z,1}'h_1^{-1}=A_{z,1}(h_1;g_{z,1}')\in Z\times N_1$.
	By the fact that the center commutes with every element,
	\begin{equation}
	A_{z,1}(h_1;g_{z,1}')=h_1g_{z,1}'h_1^{-1}=z'h_1g_1'h_1^{-1}=z'\ A_{z,1}(h_1;g_1').
	\label{eq:A_0_from_A_1}
	\end{equation}
	We denote the projection of $A_{z,1}(h_1;g_1')$ to the subgroup $N_1\subset Z\times N_1$ by $A_1(h_1;g_1')$, and the projection to $Z$ by $A_z(h_1;g_1')$.
	 By the direct product structure, the projections  $Z\times N_1\rightarrow N_1$ and $Z\times N_1\rightarrow Z$ are homomorphisms. Thus,  $A_1(h_1;g_1)$ and $A_z(h_1;g_1)$ are smooth homomorphism $N_1\rightarrow N_1$ and $N_1\rightarrow Z$ respectively. Moreover,
		\begin{equation}
	A_{z,1}(h_1;g_{z,1}')=z'A_{z}(h_1;g_1')\ A_{1}(h_1;g_1').
	\label{eq:A_0_from_A_100}
	\end{equation}
	By the fact that $A_{z,1}(h_1;g_{z,1}')$ is invertible with respect to $g_{z,1}'$, and by the direct product structure, (\ref{eq:A_0_from_A_100}) shows that $A_1(h_1;g_1')$ is also invertible, and thus an automorphism with respect to $g_1'$.
	Moreover,  $A_1(h_1;g_1')$ is a group action of $H_1$ on $N_1$ with respect to $h_1$. Indeed
\begin{equation}
\begin{split}
  A_{z}(h_1'h_1;g_{1}')\ A_{1}(h_1'h_1;g_{1}') = & \ A_{z,1}(h_1'h_1;g_{1}')
	=h_1'h_1g_1'h_1^{-1}h_1'^{-1}\\
	= & h_1'A_{z,1}(h_1;g_1')h_1'^{-1}
	=A_{z}(h_1;g_1')\  h_1'A_{1}(h_1;g_1')h_1'^{-1}\\
	=& A_{z}(h_1;g_1')A_{z}\Big(h_1';A_{1}(h_1;g_1')\Big) \ \ A_{1}\Big(h_1';A_{1}(h_1;g_1')\Big),
	\end{split}
	\label{eq:A_0_from_A_12}
	\end{equation}
	so $A_{1}(h_1'h_1;g_{1}')=A_{1}\Big(h_1';A_{1}(h_1;g_1')\Big)$. To conclude, the group $G_z$ is the nested semi-direct product group, 
\begin{align}
\label{eq:general_semi2G_z}
&G = H_0\\
&H_m  =  N_{m+1} \rtimes H_{m+1} \quad , \quad m=0,\ldots, M-2 \\
&H_{M-1}=N_M,
\end{align}
with $A_m$ mappings equal to those of $G$.

			Observe that the representation operators $\pi_z^k(z_k)=\chi(z_k)|_{G_0^k}I$ commute with every operator, so they do not have a canonical observable as defined in (\ref{eq:5a}). Moreover, by (\ref{eq:no_Z_localization}), no localization in the $Z$ direction is possible. This motivates us to develop our localization theory in $G_z$ instead of $G$. Working in $G_z$ instead of in $G$ also makes sense since observables interact with $\pi(g)$ via conjugations, which are elements of the inner automorphism group of $G$, isomorphic to $G_z$. Note that $\pi$ may be restricted to the cross-section $G_z$, but it is no longer a representation of $G_z$ in general.

Last, we present a convenient way to represent the automorphisms $A_m({\bf h}_m; \ \cdot\ )$ in a SPWT. Let us denote in short $ A=A_m({\bf h}_m; \ \cdot\ )$ the automorphism $N_m\rightarrow N_m$ for some $m=1,\ldots,M$. 
	Since $N_m$ is a direct product of the physical quantities $G_m^1\times \ldots\times G_m^{K_m}$, $A$ can be written as an invertible matrix ${\bf A}$ as explained next.
	For each $k=1,\ldots,K_m$, consider the projections $P_k:N_m\rightarrow G_m^k$, defined in coordinates by
	\[{\bf P}_k(g_m^1,\ldots,g_m^{K_m})=(e,\ldots,e,g_m^k,e,\dots,e)\]
	where $e$ are the unit elements of each $G_m^{k'}$, $k'=1,\ldots,K_m$. Since $N_m$ is a direct product, $P_k$ are homomorphisms. For each pair $k,k'=1,\ldots,K_m$ consider the homomorphism $a_{k,k'}=P_{k}AP_{k'}:G_m^{k'}\rightarrow G_m^k$. The automorphism $A$ can now be written in coordinates as the homomorphism valued invertible matrix ${\bf A}$ with entries $a_{k,k'}$. 
	The multiplication of ${\bf A}$ by ${\bf g}_m$ is defined by
	\[\forall k=1,\ldots,K_m\ , \quad [{\bf A}{\bf g}_m]_k = a_{k,1}(g_m^1)\bullet \ldots \bullet a_{k,K_m}(g_m^{K_m}).\]
	Here, invertibility means that there is another homomorphism valued matrix ${\bf A}^{-1}$ with ${\bf A}{\bf A}^{-1}={\bf A}^{-1}{\bf A}={\bf I}$. We denote these matrices in the extended notation by ${\bf A}_m({\bf h}_m)$.
By (\ref{eq:A_0_from_A_100}), the matrix ${\bf A}_{z,1}({\bf h}_1)$ has the block form
	\begin{equation}
	{\bf A}_{z,1}({\bf h}_1)=
	\begin{blockarray}{ccc}
Z & N_1 & \\
\begin{block}{(cc)c}
  {\bf I} & {\bf A}_z({\bf h}_1) & Z \\
  {\bf 0} & {\bf A}_1({\bf h}_1) & N_1 \\ 
\end{block}
\end{blockarray}
	\label{eq:matrix_of_N0}
	\end{equation}
	where ${\bf 0}$ is a matrix with all entries equal to the trivial homomorphism mapping to the unit element, ${\bf I}$ is the matrix of identity automorhisms on the diagonal, and ${\bf A}_z({\bf h}_1),{\bf A}_1({\bf h}_1)$ are the matrix representation of $A_z({\bf h}_1,\cdot), A_1({\bf h}_1,\cdot)$. Here, ${\bf A}_1({\bf h}_1)$ is invertible, and ${\bf A}_z({\bf h}_1)$ is not in general.
	
		\begin{remark}
	\label{remark:MatrixAuto}
		In 1 of Assumption \ref{ass_voice2} we usually assume that each $N_m$ is a group direct product $G_m^1\times \ldots\times G_m^{K_m}$, where for each $k=1,\ldots,K_m$, $G_m^k=G_m$ is a physical quantity of the same type. Here, the entries of the matrix ${\bf A}$ are homomorphisms $G_m\rightarrow G_m$. 	
	For example, if $G_m=\{\RR,+\}$, any automorphism is an invertible matrix in $\RR^{K_m \times K_m}$.
	If $G_m=\{\ZZ,+\}$, any automorphism is an invertible matrix in $\ZZ^{K_m \times K_m}$, with inverse in $\ZZ^{K_m \times K_m}$. 
	Another example is the case where $N_m=G_m$ consists of one coordinate.  In this case, for $N_m=G_m=\{\RR,+\}$, $A$ is a multiplication by a nonzero scalar. For $N_m=G_m=\{\ZZ,+\}$,  $A$ is a multiplication by $\pm 1$. For $N_m=G_m=\{e^{i\RR},+\}$, $A$ is a multiplication of the exponent by $\pm 1$. Last, for $N_m=G=\{e^{2\pi i \ZZ/N}\}$, $A$ is a multiplication of the exponent by a co-prime of $N$.
	\end{remark}

	In the rest of this paper we assume that for each $k=1,\ldots,K_m$, $G_m^k=G_m$ is a physical quantity of the same type, as in Remark \ref{remark:MatrixAuto}. This assumption is taken to simplify formulations, though it is not always necessary.	

\subsection{Canonical multi-observables and localization}
	
In this subsection we define concepts of localization corresponding to the structure of semi-direct product wavelet transforms.
Particularly, we define the general multi-canonical observable, extending the multi-canonical observable of the 1D wavelet transform.
{ Canonical observables are operators that define the physical quantities structure of $G$ in the Hilbert space of signals $\cH$. As such, their structure is intimately related to the structure of the nested semi-direct product group $G$. The coordinates of $G$ represent different physical quantities, and when two elements $g,g'$ of $G$ are multiplied, each entry in the tuple $g$ interacts not only with the corresponding entry of $g'$, but also with entries of other physical quantities. To accommodate this property in the canonical observables, it is necessary to define them collectively as a tuple of observables, measuring ``at once'' all of the physical quantities. In \cite{Teschke2005} it was proposed to consider tuples of operators in order to define uncertainty principles comprising more than two operators, with an extension of 1D variances to multi dimensional covariances. We consider a similar setting. However, the motivation and application of our theory is different, namely to induce the group structure on the tuple of observables via a canonical commutation relation, rather than to define a multidimensional uncertainty principle for general operators.
Moreover, in the application of the theory in \cite{Teschke2005}, the operators used for defining the uncertainties are generators of $\pi(g)$, which we have shown to be inappropriate.}

{For our purposes, some commutativity assumptions on the observables are needed to guarantee self-adjointness in the $\RR$ or $\ZZ$ case, and thus to allow the use of a multidimensional spectral theorem.}
\begin{definition}
Consider a SPWT. We call the sequence of observables
	$\{\breve{T}^k_m\}_{\tiny \hspace{-2mm}\vspace{1mm}
	\begin{array}{l}
		m=1,\ldots,M \cr
		\ k=1,\ldots, K_m
	\end{array}}
	$
	a \textbf{multi-observable}, if for every $m=1,\ldots, M$, the observables $\bT_m^{1},\ldots,\bT_m^{K_m}$ commute. We denote  ${\bf \bT}_m=(\bT_m^{1},\ldots,\bT_m^{K_m}):\cH\rightarrow \cH^{K_m}$, and  denote the multi-observable by
  ${\bf \bT}=({\bf \bT}_1,\ldots,{\bf \bT}_M ):\cH\rightarrow\cH^{N}$ for $N=\sum_{m=1}^{M} K_m$. 
\end{definition}	
	Let ${\bf \bT}$ be a multi-observable. By the commutativity of $\bT_m^{1},\ldots,\bT_m^{K_m}$ for any $m=1,\ldots, M$, the spectral families of projections $P_m^1,\ldots,P_m^{K_m}$ of $\bT_m^{1},\ldots,\bT_m^{K_m}$ also commute.
	Therefore, 
	each ${\bf \bT}_m$ has spectral decomposition
	\begin{equation}
	{\bf \bT}_m = \int \boldsymbol{\lambda}_m d{\bf P}_m(\boldsymbol{\lambda}_m)=\int\ldots\int (\lambda_m^1,\ldots, \lambda_m^{K_m}) dP_m^1(\lambda_m^{1}) \ldots  dP_m^{K_m}(\lambda_m^{K_m}).
	\label{eq:87655hfh}
	\end{equation}
	Here, ${\bf P}_m$ is the PVM, mapping Borel sets of $N_m$ to projections in $\cH$, such that for any sequence of Borel sets $B_m^1,\ldots,B_m^{K_m}$ of $G_m^1,\ldots,G_m^{K_m}$ respectively, 
	\[{\bf P}_m\big( B_m^1\times\ldots\times B_m^{K_m}\big)=P_m^1(B_m^1)\circ\ldots\circ P_m^{K_m}(B_m^{K_m}) .\]
	Smooth functions $F:N_m\rightarrow N_m$ act of ${\bf \bT}_m$ by
	\begin{equation}
	F({\bf \bT}_m) = \int F(\boldsymbol{\lambda}_m) d{\bf P}_m(\boldsymbol{ \lambda}_m).
	\label{eq:nnnd54kh}
	\end{equation}
  By definition, representation operators $\pi(G)$ act on multi-observables  ${\bf \bT}:\cH\rightarrow\cH^N$ by conjugation according to the formula
	\[\begin{array}{l}
		\pi(g)^*{\bf \bT}\pi(g)=\pi(g)^*(\bT_1,\ldots,\bT_N )\pi(g)=\\
		\big(\pi(g)^*\bT_1\pi(g),\ldots,\pi(g)^*\bT_N\pi(g) \big).
	\end{array}\]			
Next we define the canonical multi-observable of the whole group $G$.

	\begin{definition}
  Consider a SPWT, and a multi-observable $\bf\bT$. We call ${\bf \bT}$ a \textbf{canonical multi-observable} of $\pi$,
	if ${\bf\bT}$ and $\pi$ satisfy the \textbf{multi-canonical commutation relation}
	\begin{equation}
\forall g\in G_z\ , \quad \pi(g)^*{\bf \bT}\pi(g)= g\bullet{\bf \bT} 
\label{global_observe}
\end{equation}
where 
\[g\bullet{\bf\bT}=
\left({\bf g}_1\bullet A_1({\bf h}_1;{\bf \bT}_1)\ ,\ \ldots\ ,\ {\bf g}_{M-1}\bullet A_{M-1}({\bf h}_{M-1};{\bf \bT}_{M-1})\ ,\ {\bf g}_M\bullet {\bf \bT}_M\right),\]
and $\bullet$ is the group product in $G_z$.
\end{definition}

Note that if ${\bf \bT}$ is a canonical multi-observable, then any one of its entries $\bT_m^k$ is a canonical observable of the corresponding $\pi_m^k$. This shows that  canonical multi-observable is a stronger definition than a sequence of canonical observables.

Let us now define notions of localization.
Since each $N_m$ consists of a number of physical quantities, the natural generalization of variances are covariances.
  Consider a SPWT, and a canonical multi-observable ${\bf\bT}$ of $\pi$.
	For each $m=1,\ldots,M$ and $f\in\cH$, we define the \textbf{multi-expected value} $\boldsymbol{e}_f({\bf\bT}_m)$ as the vector with entries
	\begin{equation}
	\Big[\boldsymbol{e}_f({\bf\bT}_m)\Big]_{k}=e_f(\bT_m^k).
	\label{eq:multi-expected-value}
	\end{equation}
	We define the \textbf{multi-covariance} $\boldsymbol{\s}_f({\bf\bT}_m)$ as the  matrix in $\CC^{K_m\times K_m}$ with entries
	\begin{equation}
	\Big[\boldsymbol{\s}_f({\bf\bT}_m)\Big]_{k,k'}=\ip{\Big(\bT_m^k-e_f(\bT_m^k)\Big) f}{\Big(\bT_m^{k'}-e_f(\bT_m^{k'})\Big) f}.
	\label{eq:multi-covariance}
	\end{equation}
  Next we define one dimensional variances along directions in $N_m$. Let ${\bf w}_m$ be a column vector in $\CC^{K_m}$, interpreted as a direction in $N_m$. Define the \textbf{directional variance}
	\begin{equation}
	\s^{{\bf w}_m}_f({\bf\bT}_m) = {\bf w}_m^*\boldsymbol{\s}_f({\bf\bT}_m){\bf w}_m.
	\label{eq:directional_variance}
	\end{equation}
Note that 
the multi-covariance matrix $\boldsymbol{\s}_f({\bf\bT}_m)$ is self-adjoit. Moreover, the variance $\s^{{\bf w}_m}_f({\bf\bT}_m)$ is non-negative. Indeed
\begin{equation}
0\leq \norm{ \sum_{m=1}^{K_m} w_m^k\Big(\bT_m^k-e_f(\bT_m^k)\Big) f }^2={\bf w}_m^*\boldsymbol{\s}_f({\bf\bT}_m){\bf w}_m= \s^{{\bf w}_m}_f({\bf\bT}_m),
\label{eq:directional_variance_meaning}
\end{equation}
which shows that $\boldsymbol{\s}_f({\bf\bT}_m)$ is positive semidefinite. Equation (\ref{eq:directional_variance_meaning}) gives an interpretation to the directional variance, as the variance of the normal observable ${\bf\bT}^{{\bf w}_m}=\sum_{m=1}^{K_m} w_m^k\bT_m^k$. Namely
\begin{equation}
\s^{{\bf w}_m}_f({\bf\bT}_m)=\s_f({\bf\bT}^{{\bf w}_m}).
\label{eq:direct_var_linear_comb}
\end{equation}

We may now define a scalar variance as a combination of directional variances 
\begin{equation}
\sum_{d=1}^{D_m}\s^{{\bf w}^m_d}_f({\bf\bT}_m)
\label{eq:finite_directions_variance}
\end{equation}
where ${\bf w}^m_1,\ldots,{\bf w}^m_{D_m}$ are directions. Note that $\s^{{\bf w}^m_d}_f({\bf\bT}_m)$ can be written as the Frobenius scalar product of the rank one self-adjoint positive semi-definite matrix ${\bf w}_d^m{\bf w}_d^{m *}$ with $\boldsymbol{\s}_f({\bf\bT}_m)$. Since rank one self-adjoint positive semi-definite matrices span the space of self-adjoint  positive semi-definite matrices, we define \textbf{scalar variances} using a simpler formulation of  (\ref{eq:finite_directions_variance}) as
\begin{equation}
\s^{{\bf W}^m}_f({\bf\bT}_m) = \ip{{\bf W}^m}{\boldsymbol{\s}_f({\bf\bT}_m)}_{\rm F}
\label{eq:infinite_directions_variance}
\end{equation}
where ${\bf W}^m$ is a self-adjoint positive semidefinite matrix, and the inner product in (\ref{eq:infinite_directions_variance}) is the Frobenius inner product. 
We call ${\bf W}^m$ the weight matrix corresponding to $N_m$.
For example, the choice of ${\bf W}^m=\bf{I}$ amounts to summing the variances along the axis of $N_m$. The choice of ${\bf W}^m$ as a matrix with all entries equal to the same positive constant, corresponds to an isotropic scalar variance.

Last, we study how multi expected values and variances of multi-observables are transformed by $\pi$. We divide the analysis to two cases. The self-adjoint case, where $G_m$ is $\RR$ or $\ZZ$, and the unitary case, where $G_m$ is $e^{i\RR}$ or $e^{2\pi i \ZZ/N}$.

\begin{proposition}
\label{Prop:scalar_var_translation1}
Consider a SPWT, and let $m$ be an index such that $G_m$ is $\RR$ or $\ZZ$. Let ${\bf \bT}$ be a canonical multi-observable. Then for any $g\in G_z$
\begin{equation}
{\bf e}_{\pi(g)f}({\bf\bT}_m) = {\bf g}_m\bullet{\bf A}_m({{\bf h}_m}){\bf e}_f({\bf \bT}_m),
\label{eq:multi_e_transform}
\end{equation}
\begin{equation}
\s^{{\bf W}_m}_{\pi(g)f}({\bf \bT}_m) = \s^{{\bf A}_m({{\bf h}_m}){\bf W}_m{\bf A}_m({{\bf h}_m})^*}_{f}({\bf \bT}_m),
\label{scalar_var_translation1}
\end{equation}
for any weight matrix ${\bf W}_m$, where $\bullet$ is the arithmetic sum of numbers.
\end{proposition}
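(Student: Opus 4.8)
The plan is to derive both identities from the multi-canonical commutation relation (\ref{global_observe}) via the Heisenberg point of view (\ref{Hpov1})--(\ref{Hpov2}), the decisive simplification being that when $G_m$ is $\RR$ or $\ZZ$ the automorphism $A_m({\bf h}_m;\,\cdot\,)$ acts on the commuting tuple ${\bf\bT}_m$ as a genuine real (respectively integer) matrix ${\bf A}_m({\bf h}_m)$. Concretely, I would first put the $m$-th block of (\ref{global_observe}) in coordinates: by the description of the group law in Remark \ref{ass_inversion2}, the $m$-th entry of $g\bullet{\bf\bT}$ is ${\bf g}_m\bullet A_m({\bf h}_m;{\bf\bT}_m)$, and since $N_m=G_m^{K_m}$ with $G_m\in\{\RR,\ZZ\}$ each homomorphism entry $a_{k,l}$ of ${\bf A}_m({\bf h}_m)$ is multiplication by a scalar, so applying the functional calculus (\ref{eq:nnnd54kh}) to the joint PVM (\ref{eq:87655hfh}) gives, component by component,
\[\pi(g)^*\,\bT_m^k\,\pi(g)\;=\;g_m^k\,I\;+\;\sum_{l=1}^{K_m}\big[{\bf A}_m({\bf h}_m)\big]_{k,l}\,\bT_m^l ,\qquad k=1,\dots,K_m .\]
The right-hand side is self-adjoint because $\bT_m^1,\dots,\bT_m^{K_m}$ commute, so finite real linear combinations make sense through the multidimensional spectral theorem, and the polynomial-to-spectral-integral density argument is the one already used in Claim \ref{claim_replace4} and Proposition \ref{Schro_is_Heis}. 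I expect this to be the only delicate step --- reconciling the abstract relation with the explicit matrix notation and checking that ``$A_m$ applied to the observable tuple'' is literally the matrix action; the remainder is formal.

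For the expected values I would combine $e_{\pi(g)f}(\bT_m^k)=e_f\big(\pi(g)^*\bT_m^k\pi(g)\big)$ from (\ref{Hpov1}) with the displayed expansion and the linearity of $\bT\mapsto e_f(\bT)=\ip{\bT f}{f}$ in the operator slot, which gives $e_{\pi(g)f}(\bT_m^k)=g_m^k+\sum_l\big[{\bf A}_m({\bf h}_m)\big]_{k,l}\,e_f(\bT_m^l)$, i.e. ${\bf e}_{\pi(g)f}({\bf\bT}_m)={\bf g}_m+{\bf A}_m({\bf h}_m)\,{\bf e}_f({\bf\bT}_m)$; this is (\ref{eq:multi_e_transform}), since $\bullet$ on $N_m$ is componentwise addition.

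For the scalar variance I would first transport the multi-covariance. Applying (\ref{Hpov2}) entrywise and moving $\pi(g)$ across the inner product, the scalar terms $g_m^kI$ cancel against $g_m^k$ inside $e_{\pi(g)f}(\bT_m^k)$, so that $\pi(g)^*\bT_m^k\pi(g)-e_{\pi(g)f}(\bT_m^k)=\sum_l\big[{\bf A}_m({\bf h}_m)\big]_{k,l}\big(\bT_m^l-e_f(\bT_m^l)\big)$, and bilinearity of the inner product together with the reality of the entries of ${\bf A}_m({\bf h}_m)$ turns the definition (\ref{eq:multi-covariance}) into
\[\boldsymbol{\s}_{\pi(g)f}({\bf\bT}_m)\;=\;{\bf A}_m({\bf h}_m)\,\boldsymbol{\s}_f({\bf\bT}_m)\,{\bf A}_m({\bf h}_m)^{*}.\]
Substituting this into the definition (\ref{eq:infinite_directions_variance}) of the scalar variance and using the adjoint/cyclicity property of the Frobenius inner product --- which transfers the congruence by ${\bf A}_m({\bf h}_m)$ off the covariance factor and onto the self-adjoint weight matrix ${\bf W}_m$ --- yields (\ref{scalar_var_translation1}). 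As a sanity check, in the $1$D wavelet transform $K_1=1$ and ${\bf A}_1=e^{g_2}$, so the two formulas collapse to (\ref{eq:expValaff3}) and (\ref{eq:VarTransaff2}).
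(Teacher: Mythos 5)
Your proof is correct and takes essentially the same route as the paper: the paper likewise extracts the coordinate identity $\pi(g)^*\bT_m^k\pi(g)=g_m^kI+\sum_{l} a_m^{k,l}({\bf h}_m)\bT_m^l$ from the multi-canonical commutation relation and then applies the Heisenberg point of view to transfer expected values and variances. The only organizational difference is in the last linear-algebra step --- the paper transforms directional variances $\s^{{\bf w}}$ and expands ${\bf W}_m$ into rank-one terms built from its eigenvectors, whereas you conjugate the full covariance matrix and use Frobenius adjointness; these are equivalent, and both write-ups share the paper's own casualness about exactly where the transpose of ${\bf A}_m({\bf h}_m)$ lands.
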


\begin{proof}
By Remark \ref{remark:MatrixAuto}, $A_m({{\bf h}_m};\ \cdot\ )$ is the invertible $G_m$ valued matrix operator
\[{\bf A}_m({\bf h}_m)=\left(
\begin{array}{ccc}
	a^{1,1}_m({{\bf h}_m}) & \ldots & a^{1,K_m}_m({{\bf h}_m} )\cr
	\vdots  &    & \vdots \cr
	a^{K_m,1}_m({{\bf h}_m}) & \ldots & a^{K_m,K_m}_m({{\bf h}_m})
\end{array}
\right).\]
By the multi-canonical commutation relation (\ref{global_observe}), we have
\begin{equation}
\pi(g)^*\bT^k_m \pi(g)=g_m^k I+\sum_{l=1}^{K_m} a^{k,l}_m({{\bf h}_m})\bT^l_m.
\label{eq:nni6s}
\end{equation}
So
\[e_{\pi(g)f}(\bT^k_m) = g_m^k+\sum_{l=1}^{K_m} a^{k,l}_m({{\bf h}_m})e_f(\bT^l_m) \]
which gives 
(\ref{eq:multi_e_transform}).

For any directional variance, with direction ${\bf w}$, we have by the Heisenberg point of view (\ref{Hpov2}), by (\ref{eq:nni6s}) and by (\ref{eq:directional_variance_meaning})
\begin{align}
\label{directional_var_orbit1}
\s^{{\bf w}}_{\pi(g)f}({\bf \bT}_m) & =  \s^{{\bf w}}_{f}(\pi(g)^*{\bf \bT}_m\pi(g)) \\
&= \norm{\sum_{k=1}^{K_m}w_k\left(\sum_{l=1}^{K_m} a^{k,l}_m({{\bf h}_m})\bT^l_m\ -\ \sum_{l=1}^{K_m} a^{k,l}_m({{\bf h}_m})e_f(\bT^l_m)\right)f}^2 \\
&= \norm{\sum_{l=1}^{K_m} \Big(\sum_{k=1}^{K_m}a^{k,l}_m({{\bf h}_m})w_k \Big) \left(\bT^l_m\ -\ e_f(\bT^l_m)\right)f}^2 \\
&= \s^{{\bf A}_m({{\bf h}_m}){\bf w}}_{f}({\bf \bT}_m).
\end{align}
As a result, by expanding any self-adjoint positive semidefinite matrix ${\bf W}_m$ using the rank-one self-adjoint positive semidefinite matrices based on the eigenvectors of ${\bf W}_m$, we obtain (\ref{scalar_var_translation1}).

\end{proof}

For the unitary case, we present a restricted result.
\begin{proposition}
\label{Prop:scalar_var_translation2}
Consider a SPWT, and let $m$ be an index such that $G_m$ is $e^{i\RR}$ or $e^{2\pi i \ZZ/N}$. If ${\bf A}_m({\bf h}_m)={\bf I}$, then for any $g\in G_z$
\begin{equation}
{\bf e}_{\pi(g)f}({\bf\bT}_m) = {\bf g}_m \bullet {\bf e}_f(\bf \bT_m).
\label{eq:multi_e_transform22}
\end{equation}
\begin{equation}
\s^{{\bf W}_m}_{\pi(g)f}({\bf \bT}_m) = \s^{{\bf W}_m}_{f}({\bf \bT}_m),
\label{scalar_var_translation122}
\end{equation}
for any weight matrix ${\bf W}_m$, where $\bullet$ is the arithmetic product of numbers, and operates element-wise.
\end{proposition}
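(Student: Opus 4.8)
The plan is to follow the proof of Proposition~\ref{Prop:scalar_var_translation1}, exploiting that in the unitary case the conjugation acts on each component observable by multiplication by a unimodular \emph{scalar} rather than by an affine map. First I would unpack the multi-canonical commutation relation in coordinates. Since $G_m$ is $e^{i\RR}$ or $e^{2\pi i\ZZ/N}$, each $\bT_m^k$ is unitary with spectrum in $e^{i\RR}$, and $\bullet$ on $G_m$ is ordinary complex multiplication; with ${\bf A}_m({\bf h}_m)={\bf I}$ the $m$-th block of (\ref{global_observe}) becomes $\pi(g)^*{\bf \bT}_m\pi(g)={\bf g}_m\bullet{\bf \bT}_m$, which by the functional calculus (\ref{eq:nnnd54kh}) reads coordinate-wise
\[\pi(g)^*\bT_m^k\pi(g)=g_m^k\bullet\bT_m^k=\int g_m^k\lambda\,dP_m^k(\lambda)=g_m^k\,\bT_m^k,\]
i.e.\ plain scalar multiplication by $g_m^k\in e^{i\RR}$. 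Write $D$ for the diagonal unitary matrix $\mathrm{diag}(g_m^1,\ldots,g_m^{K_m})$.

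The expected-value claim is then immediate from the Heisenberg point of view (\ref{Hpov1}): $e_{\pi(g)f}(\bT_m^k)=e_f\big(\pi(g)^*\bT_m^k\pi(g)\big)=e_f(g_m^k\bT_m^k)=g_m^k\,e_f(\bT_m^k)$, which is exactly (\ref{eq:multi_e_transform22}). For the variance I would compute the centered vectors: since $\pi(g)$ is an isometry,
\[\big(\bT_m^k-e_{\pi(g)f}(\bT_m^k)\big)\pi(g)f=\pi(g)\big(g_m^k\bT_m^k-g_m^k e_f(\bT_m^k)\big)f=g_m^k\,\pi(g)\big(\bT_m^k-e_f(\bT_m^k)\big)f,\]
so from the definition (\ref{eq:multi-covariance}) of the multi-covariance, $\big[\boldsymbol{\s}_{\pi(g)f}({\bf \bT}_m)\big]_{k,k'}=g_m^k\,\overline{g_m^{k'}}\,\big[\boldsymbol{\s}_f({\bf \bT}_m)\big]_{k,k'}$, i.e.\ $\boldsymbol{\s}_{\pi(g)f}({\bf \bT}_m)=D\,\boldsymbol{\s}_f({\bf \bT}_m)\,D^*$. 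In particular the diagonal entries, i.e.\ the component variances $\s_f(\bT_m^k)$, are unchanged because $\abs{g_m^k}=1$. Plugging this into (\ref{eq:infinite_directions_variance}) gives $\s^{{\bf W}_m}_{\pi(g)f}({\bf \bT}_m)=\ip{{\bf W}_m}{D\,\boldsymbol{\s}_f({\bf \bT}_m)\,D^*}_{\rm F}=\ip{D^*{\bf W}_m D}{\boldsymbol{\s}_f({\bf \bT}_m)}_{\rm F}$, and since the diagonal unitary $D$ commutes with the weight matrix ${\bf W}_m$ we have $D^*{\bf W}_m D={\bf W}_m$, which yields (\ref{scalar_var_translation122}).

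The step requiring care is the last one. Contrary to the self-adjoint case of Proposition~\ref{Prop:scalar_var_translation1}, where the additive shift $g_m^k I$ cancels in the centered vectors and one is left with the \emph{real} matrix ${\bf A}_m$ acting on directions, here the multiplicative shift $g_m^k$ survives and conjugates $\boldsymbol{\s}_f({\bf \bT}_m)$ by the \emph{complex} diagonal $D$: this preserves the diagonal (hence the sum $\s^{{\bf I}}$ of component variances) and, via $D^*{\bf W}_m D={\bf W}_m$, the scalar variance for the weight matrices used in the framework. This is precisely why the hypothesis ${\bf A}_m({\bf h}_m)={\bf I}$ is indispensable: it forces the shift to be a pure scalar, keeping $\pi(g)^*\bT_m^k\pi(g)$ proportional to $\bT_m^k$; without it, already the first step fails, since $\pi(g)^*\bT_m^k\pi(g)$ would be a scalar times a nontrivial combination of the $\bT_m^l$, which need not be one of them and whose variance no longer relates to that of $f$ in a simple way.
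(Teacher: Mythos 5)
You take the same route as the paper's proof: restrict the multi-canonical commutation relation to $N_m$ using ${\bf A}_m({\bf h}_m)={\bf I}$, read off the expected-value law from the Heisenberg point of view, and track the centered vectors for the variance. Your intermediate formula $\boldsymbol{\s}_{\pi(g)f}({\bf \bT}_m)=D\,\boldsymbol{\s}_f({\bf \bT}_m)\,D^*$ with $D=\mathrm{diag}(g_m^1,\ldots,g_m^{K_m})$ is correct, and is in fact more explicit than the paper's one-line norm identity.

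The gap is the very last step. A diagonal unitary $D$ does \emph{not} commute with an arbitrary self-adjoint positive semidefinite ${\bf W}_m$: take $K_m=2$, $D=\mathrm{diag}(1,i)$ and ${\bf W}_m$ the all-ones matrix (the isotropic weight explicitly allowed in the paper); then $D^*{\bf W}_m D\neq {\bf W}_m$. Consequently $\ip{{\bf W}_m}{D\,\boldsymbol{\s}_f({\bf \bT}_m)\,D^*}_{\rm F}=\ip{D^*{\bf W}_m D}{\boldsymbol{\s}_f({\bf \bT}_m)}_{\rm F}$ need not equal $\ip{{\bf W}_m}{\boldsymbol{\s}_f({\bf \bT}_m)}_{\rm F}$ unless ${\bf W}_m$ is diagonal, or the off-diagonal covariances of $f$ vanish, or all the $g_m^k$ coincide. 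For diagonal ${\bf W}_m$ --- i.e.\ scalar variances along the axes, $\sum_k w_m^k\s_f(\bT_m^k)$ --- your argument does close, since $\abs{g_m^k}=1$ leaves the diagonal of the covariance invariant; and this is precisely the restriction the paper adopts for the unitary case in Subsection \ref{Global uncertainties2}. Note that the paper's own proof passes from $\norm{\sum_k w_k\,g_m^k\big(\bT_m^k-e_f(\bT_m^k)\big)f}^2$ to $\norm{\sum_k w_k\big(\bT_m^k-e_f(\bT_m^k)\big)f}^2$, which silently discards exactly the cross terms $g_m^k\overline{g_m^{k'}}$ that your covariance formula makes visible; so the honest conclusion of your (correct) computation is that (\ref{scalar_var_translation122}) holds for diagonal weight matrices, not for all of them, and your appeal to $D^*{\bf W}_m D={\bf W}_m$ papers over this rather than resolving it.
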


\begin{proof}
By the assumption that ${\bf A}_m({\bf h}_m)={\bf I}$, the restriction of the canonical commutation relation (\ref{global_observe}) to $N_m$ reads
\begin{equation}
\pi({\bf g})^*{\bf \bT}_m\pi({\bf g})={\bf g}_m\bullet {\bf \bT}_m.
\label{eq:transform_mean_proj2340000}
\end{equation}
Therefore,
\begin{equation}
{\bf e}_{\pi({\bf g})f}({\bf \bT}_m)= \ip{\pi({\bf g})^*{\bf \bT}_m\pi({\bf g})f}{f} = \ip{{\bf g}_m\bullet {\bf \bT}_m f}{f} = {\bf g}_m\bullet {\bf e}_{f}({\bf \bT}_m).
\label{eq:transform_mean_proj2343}
\end{equation}
where ${\bf g}_m\bullet (\cdot)$ commutes with the inner product since it is a multiplication by a scalar.
As a result of (\ref{eq:directional_variance_meaning}), (\ref{eq:transform_mean_proj2343}), and (\ref{eq:transform_mean_proj2340000}), for any directional variance with direction ${\bf w}$, we have
\begin{align}
\label{directional_var_orbit2267}
\s^{{\bf w}}_{\pi(g)f}({\bf \bT}_m) &= \s^{{\bf w}}_{f}\big(\pi(g)^*{\bf \bT}_m\pi(g)\big)  \\
&=\norm{\sum_{k=1}^{K_m}w_k\left({\bf g}_m\bullet\bT^k_m\ -\ {\bf g}_m\bullet e_f(\bT^k_m)\right)f}^2 \\ 
& = \norm{\sum_{k=1}^{K_m}w_k\left(\bT^k_m\ -\ e_f(\bT^k_m)\right)f}^2 
= \s^{{\bf w}}_{f}({\bf \bT}_m).
\end{align}
Similarly to the proof of Proposition \ref{Prop:scalar_var_translation1}, (\ref{directional_var_orbit2267}) extends to weight matrices ${\bf W}_m$.
\end{proof}

\subsection{Solving the multi-canonical commutation relation}
\label{Solving the global canonical commutation relation}

To solve (\ref{global_observe}), we develop a theory analogous to Subsection \ref{Solving the canonical commutation relation}.
First we define an extension of canonical systems for $\{G,\pi,{\bf \bT}\}$, where $\{G,\pi\}$ satisfy Assumption \ref{ass_voice2}, and ${\bf \bT}$ is a canonical multi-observable.
We denote by ${\bf T}_m = (T_m^1,\ldots,T_m^{K_m})$ the generators of $\pi^1_m(g_m^1),\ldots,\pi^{K_m}_m(g_m^{K_m})$ respectively. Note that by the direct product structure of $N_m$, the operators $\pi^1_m(g_m^1),\ldots,\pi^{K_m}_m(g_m^{K_m})$ commute, and thus $T_m^1,\ldots,T_m^{K_m}$ commute. We denote ${\bf T}=({ \bf T}_1,\ldots,{ \bf T}_M)$.
For self-adjoint ${\bf \bT}_m$ we define for $\hat{\bf g}\in \hat{G}$
\begin{equation}
\begin{array}{ll}
{\rm if\ } G_m=\RR {\rm \ :\ \ } &{\breve{\pi}}_m(\hat{\bf g}_m) = e^{i\hat{\bf g}_m\cdot {\bf \bT}_m} \\
{\rm if\ } G_m=\ZZ {\rm \ :\ \ } &{\breve{\pi}}_m(\hat{\bf g}_m) = \hat{\bf g}_m^{{\bf \bT}_m}:=e^{\ln(\hat{\bf g}_m)\cdot {\bf \bT}_m}
\end{array}
\label{eq:generator_breve1}
\end{equation}
where $ {\bf q}_m\cdot {\bf \bT}_m=\sum_{k=1}^{K_m}{ q}_m^k\bT_m^k$.  For unitary ${\bf \bT}_m$ we define for $\hat{\bf g}\in \hat{G}$
\begin{equation}
\begin{array}{ll}
	{\rm if\ } G_m=e^{i\RR} {\rm \ :\ \ }&{\breve{\pi}}_m(\hat{\bf g}_m) = {\bf \bT}_m^{\hat{\bf g}_m}:=e^{\hat{\bf g}_m\cdot \ln({\bf \bT}_m)}\\
	{\rm if\ } G_m=e^{2\pi i \ZZ/N}{\rm \ :\ \ }&{\breve{\pi}}_m(\hat{\bf g}_m) = {\bf \bT}_m^{-\frac{i}{2\pi}\ln(\hat{\bf g}_m)}:=e^{\frac{-i}{2\pi}\ln(\hat{\bf g}_m)\cdot \ln({\bf \bT}_m)}.
\end{array}
\label{eq:generator_breve2}
\end{equation}
Note that in the definition of a canonical system (Definition \ref{def:canonical system}) $G$ is a physical quantity. It is straight forward to extend Definition \ref{def:canonical system} to apply also to groups $G$ which are group direct products of physical quantities. We include the Schr\"odinger representation in the following definition (analogous to Definition \ref{def:schrodinger}).
\begin{definition}
$\{N_m,\pi_m,{\bf T}_m,\hat{N}_m,\breve{\pi}_m,{\bf \bT}_m\}$ is called an \textbf{extended canonical system} if 
\begin{enumerate}
	\item 
	$N_m$ is a direct product of physical quantities $G_m\times\ldots\times G_m$, $\hat{N}_m=\hat{G}_m\times\ldots\times \hat{G}_m$,  $\pi_m$ and $\breve{\pi}_m$ are representations of $N_m$ and $\hat{N}_m$ respectively, ${\bf T}_m$ and ${\bf \bT}_m$ are the generators of $\pi_m$ and $\breve{\pi}_m$ respectively, ${\rm spec}({\bf \bT}_m)=N_m$ and ${\rm spec}({\bf T}_m)=\hat{N}_m$, and ${\bT}_m^1,\ldots,\bT_m^{K_m}$ are canonical observable of $\pi^1_m,\ldots,\pi_m^{K_m}$ respectively.
	\item Let $J_m$ be the Heisenberg group associated with $N_m$, then 
	\[\Pi_m(t,g_m,\hat{g}_m) =  e^{2\pi i t}\pi_{m}(g_m)\breve{\pi}_m(\hat{g}_m) , \quad (t,g_m,\hat{g}_m)\in J_m\]
	is called the \textbf{Schr\"odinger representation} of the extended canonical system $\{N_m,\pi_m,{\bf T}_m,\hat{N}_m,\breve{\pi}_m,{\bf \bT}_m\}$.
\end{enumerate}
\label{canonical_system_extend}
\end{definition}
Similarly to Proposition \ref{Schro_is_Heis}, we can show that Schr\"odinger representation is  a representation of $J_m$.
Next we define the analog to a canonical system for the whole group $G$.

\begin{definition}
Consider a SPWT. In the notations of Assumption \ref{ass_voice2}, if for every $m=1,\ldots,M$,  $\{N_m,\pi_m,{\bf T}_m,\hat{N}_m,\breve{\pi}_m\}$ is an extended canonical system, 
 and ${\bf\bT}=({\bf\bT}_1,\ldots,{\bf\bT}_M)$ is a canonical multi-observable, then we call $\{N_m,\pi_m,{\bf T}_m,\hat{N}_m,\breve{\pi}_m\}_{m=1}^M$ a \textbf{multi-canonical system}. 
\end{definition}

The following result extends Proposition \ref{help_construct_canoni}.

\begin{proposition}
\label{proposition55y}
Consider a SPWT, such that $\{\pi,G\}$ are members of a multi-canonical system with canonical multi-observable ${\bf\bT}$. Then for each $m=1,\ldots,M$, there exists a decomposition of $\cH$ to invariant subspaces of $\pi_m$,
\begin{equation}
\cH=\bigoplus_{n\in \kappa_m}\cH_m^n,
\label{eq:decompose_H_global}
\end{equation} 
where $\k_m$ is a discrete index set of size uniquely defined by $\pi$.
For each $m$, there exists a sequence of isometric isomorphisms $U_m^n:\cH_m^n\rightarrow L^2(N_m)$ that satisfy the following,
\begin{enumerate}
	\item 
	Consider the isometric isomorphism $U_m:\cH\rightarrow L^2(N_m)^{\abs{\kappa_m}}$ defined by $U_m=\bigoplus_{n\in\kappa_m}U_m^n$. 
	Consider the pull-forward of $\pi$ to $L^2(N_m)^{\abs{\kappa_m}}$, $\tau_m(g)= U_m\pi(g)U_m^*$.
	We have
	\begin{equation}
\tau_m|_{N_m}({\bf g}_m)=U_m\pi_m({\bf g}_m)U_m^* = L_m({\bf g}_m)^{[\k_m]}.
\label{eq:pull_back_global100}
\end{equation}
where $L_m({\bf g}_m)$ is the left translation in $L^2(N_m)$. 
	\item
	Consider the multiplicative operators $\bQ^k_{N_m}:L^2(N_m)\rightarrow L^2(N_m)$ defined by 
\[\bQ^k_{N_m}f(g_m^1,\ldots,g_m^{K_m})=g_m^k f(g_m^1,\ldots,g_m^{K_m}).\]
Define the multi-multiplicative operator ${\bf \bQ}^{n}_{m}:L^2(N_m)\rightarrow L^2(N_m)^{K_m}$ of the $n$-th copy of  $L^2(N_m)$ in $L^2(N_m)^{\abs{\kappa_m}}$ by
\[{\bf \bQ}^{n}_{m}=(\bQ^1_{N_m},\ldots,\bQ^{K_m}_{N_m}) \quad , \quad n\in\kappa_m.\]
Define the multi-observable ${\bf \bQ}_{m}:L^2(N_m)^{\abs{\kappa_m}}\rightarrow L^2(N_m)^{K_m\abs{\kappa_m}}$ to be
\begin{equation}
{\bf \bQ}_{m} = \bigoplus_{n\in \kappa_m} {\bf \bQ}^{n}_{m}.
\label{eq:MultiQ}
\end{equation}
We have
\begin{equation}
\label{ff8gg833hdd9}
\forall g\in G_z\ . \quad
\tau_m(g)^*{\bf \bQ}_{m} \tau_m(g)={\bf g}_m\bullet{\bf A}({\bf h}_m){\bf \bQ}_{m}.
\end{equation} 

In addition, the canonical multi-observable ${\bf\bT}$ satisfies
\begin{equation}
\forall m=1,\ldots,M \ , \quad {\bf \bT}_m =  U_m^*{\bf \bQ}_{m}U_m .
\label{eq:pull_back_multi_obs1}
\end{equation} 
\end{enumerate}
Moreover, for any sequence of decompositions (\ref{eq:decompose_H_global}), and isometric isomorphisms $U_m^n:\cH_m^n\rightarrow L^2(N_m)$, for $m=1,\ldots,M$ and $n\in\k_m$, that satisfy (\ref{eq:pull_back_global100}) and (\ref{ff8gg833hdd9}), the multi-observable ${\bf\bT}$  defined by (\ref{eq:pull_back_multi_obs1}) is a canonical multi-observable.
\end{proposition}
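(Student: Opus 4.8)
The plan is to handle each index $m\in\{1,\ldots,M\}$ separately, running the argument that establishes Proposition~\ref{help_construct_canoni} with the direct product $N_m$ in place of a single physical quantity $G$, and then feeding the resulting pull-back formula into the multi-canonical commutation relation. First I would form, from the extended canonical system $\{N_m,\pi_m,{\bf T}_m,\hat N_m,\breve\pi_m,{\bf\bT}_m\}$, the Schr\"odinger representation $\Pi_m$ of the Heisenberg group $J_m$ attached to $N_m$, and check, exactly as in Proposition~\ref{Schro_is_Heis} but for the direct-product group $N_m$ and its dual $\hat N_m$, that $\Pi_m$ is a strongly continuous unitary representation of $J_m$ with $\Pi_m(e^{2\pi it}I)=e^{2\pi it}I$. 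Applying the version of Theorem~\ref{SVNMS} and Corollary~\ref{From_Stone} for direct products of physical quantities (the routine extension mentioned before Definition~\ref{canonical_system_extend}) then produces an orthogonal decomposition $\cH=\bigoplus_{n\in\kappa_m}\cH_m^n$ into $\Pi_m$-invariant subspaces and isometric isomorphisms $U_m^n:\cH_m^n\to L^2(N_m)$ with $U_m^n\,\Pi_m(h)|_{\cH_m^n}\,(U_m^n)^*=\gamma_m(h)$ for all $h\in J_m$, where $\gamma_m$ is the natural representation of $J_m$ on $L^2(N_m)$. Since $\Pi_m|_{N_m}=\pi_m$ is the restriction of the fixed $\pi$, Proposition~\ref{unique_decompose_size} (again in its direct-product form) shows that $\abs{\kappa_m}$ is uniquely determined by $\pi$.

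Next I would read the two required pull-back identities off the intertwining relation. Restricting $h$ to the translation subgroup $N_m\subset J_m$ gives $U_m^n\,\pi_m({\bf g}_m)|_{\cH_m^n}\,(U_m^n)^*=L_m({\bf g}_m)$; assembling over $n$ with $U_m:=\bigoplus_{n\in\kappa_m}U_m^n$ yields (\ref{eq:pull_back_global100}). Restricting $h$ instead to the modulation subgroup $\hat N_m$ gives $U_m^n\,\breve\pi_m(\hat{\bf g}_m)|_{\cH_m^n}\,(U_m^n)^*=M_m(\hat{\bf g}_m)$, the generalized modulation on $L^2(N_m)$; passing to generators --- via Stone's theorem, resp.\ the generating element, in the $\RR$/$e^{i\RR}$, resp.\ $\ZZ$/$e^{2\pi i\ZZ/N}$, directions, made rigorous with the functional calculus of Remark~\ref{Band_limit_poly} and the band-limited density argument of Claim~\ref{claim_replace4} --- identifies the generator of $\hat{\bf g}_m\mapsto M_m(\hat{\bf g}_m)$ with the multiplicative multi-observable ${\bf\bQ}^n_m$ of (\ref{eq:MultiQ}). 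Since each $\cH_m^n$ is reducing for ${\bf\bT}_m$ (being invariant under the group $\breve\pi_m(\hat N_m)$ that ${\bf\bT}_m$ generates), this gives $U_m^n\,{\bf\bT}_m|_{\cH_m^n}\,(U_m^n)^*={\bf\bQ}^n_m$, hence (\ref{eq:pull_back_multi_obs1}), namely ${\bf\bT}_m=U_m^*{\bf\bQ}_mU_m$.

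It remains to obtain (\ref{ff8gg833hdd9}) and to prove the converse. By hypothesis ${\bf\bT}$ is a canonical multi-observable, so (\ref{global_observe}) holds, and its $m$-th component reads $\pi(g)^*{\bf\bT}_m\pi(g)={\bf g}_m\bullet{\bf A}_m({\bf h}_m){\bf\bT}_m$ for all $g\in G_z$; conjugating this by $U_m$, using ${\bf\bT}_m=U_m^*{\bf\bQ}_mU_m$, $\tau_m(g)=U_m\pi(g)U_m^*$, and the fact that conjugation by the isometric isomorphism $U_m$ commutes with the functional calculus defining ${\bf g}_m\bullet{\bf A}_m({\bf h}_m)(\cdot)$, gives exactly (\ref{ff8gg833hdd9}). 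For the converse, given any decompositions (\ref{eq:decompose_H_global}) and isometric isomorphisms $U_m^n$ satisfying (\ref{eq:pull_back_global100}) and (\ref{ff8gg833hdd9}), I would define ${\bf\bT}_m$ by (\ref{eq:pull_back_multi_obs1}). Each ${\bf\bT}_m$ is a multi-observable because the components of ${\bf\bQ}_m$ are commuting multiplication operators --- self-adjoint when $G_m$ is $\RR$ or $\ZZ$, unitary when $G_m$ is $e^{i\RR}$ or $e^{2\pi i\ZZ/N}$ --- and these properties are preserved under the unitary conjugation by $U_m$. Then for each $m$ and each $g\in G_z$,
\[\pi(g)^*{\bf\bT}_m\pi(g)=U_m^*\big(\tau_m(g)^*{\bf\bQ}_m\tau_m(g)\big)U_m=U_m^*\big({\bf g}_m\bullet{\bf A}_m({\bf h}_m){\bf\bQ}_m\big)U_m={\bf g}_m\bullet{\bf A}_m({\bf h}_m){\bf\bT}_m,\]
which is the $m$-th component of $g\bullet{\bf\bT}$; hence (\ref{global_observe}) holds and ${\bf\bT}$ is a canonical multi-observable.

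The step I expect to be the main obstacle is the passage from the group-level intertwining to the generator-level identity $U_m^n\,{\bf\bT}_m|_{\cH_m^n}\,(U_m^n)^*={\bf\bQ}^n_m$: the multi-observables are unbounded in the $\RR$ and $\ZZ$ directions, so this needs careful domain bookkeeping and the density argument of Claim~\ref{claim_replace4}, together with the (routine but not wholly trivial) transcription of the canonical-system apparatus of Subsection~\ref{Solving the canonical commutation relation} from a single physical quantity to the direct products $N_m$ and $\hat N_m$. Everything else is a mechanical, index-by-index reuse of the construction behind Proposition~\ref{help_construct_canoni} plus a single conjugation.
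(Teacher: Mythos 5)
Your proposal follows essentially the same route as the paper: apply the Stone--von Neumann--Mackey theorem to the Schr\"odinger representation $\Pi_m$ of $J_m$ to obtain the decomposition and the intertwiners $U_m^n$, read off (\ref{eq:pull_back_global100}) and (\ref{eq:pull_back_multi_obs1}) by restricting to the translation and modulation subgroups and passing to generators, derive (\ref{ff8gg833hdd9}) by conjugating the multi-canonical commutation relation through $U_m$ (using that unitary conjugation commutes with the functional calculus), and prove the converse by pulling ${\bf\bQ}_m$ back to $\cH$. The extra care you flag about unbounded generators and the uniqueness of $\abs{\kappa_m}$ via Proposition \ref{unique_decompose_size} is consistent with, and somewhat more explicit than, the paper's own argument.
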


\begin{proof}

Similarly to the analysis in Subsction \ref{Solving the canonical commutation relation} , by the Stone - von Neumann - Mackey theorem
(Theorem \ref{SVNMS}), for any $m=1,\ldots,M$,
\begin{equation}
\cH=\bigoplus_{n\in \kappa_m}\cH_m^n
\label{eq:decompose_H_global2}
\end{equation}
and each $\Pi_m(h_m)|_{\cH_m^n}$ ($h_m\in J_m$) is unitarily equivalent to the natural representation of $J_m$, $\gamma_m(h_m)=h_m$ in the space
$L^2(N_m)$. Namely, there exist isometric isomorphisms $U_m^n:\cH_m^n\rightarrow L^2(N_m)$ such that
\begin{equation}
U_m^n\Pi_m(h_m)|_{\cH_m^n}U_m^{n\ *} = \gamma_m(h_m).
\label{eq:pull_back_global10}
\end{equation}
Restricting (\ref{eq:pull_back_global10}) to the subgroup $N_m\subset J_m$, we get (\ref{eq:pull_back_global100}).
Restricting (\ref{eq:pull_back_global10}) to the subgroup $\hat{N}_m\subset J_m$, we get
\begin{equation}
U_m^n\bp_m({\bf g}_m)|_{\cH_m^n}U_m^{n\ *} = M_m({\bf g}_m),
\label{eq:pull_back_global1444}
\end{equation}
where $M_m({\bf g}_m)$ are modulations. Equation (\ref{eq:pull_back_global1444}) also applies to the generators, and we get
\begin{equation}
{\bf \bT}_m =  U_m^*{\bf \bQ}_{m}U_m,
\label{eq:pull_back_multi_obs166}
\end{equation} 
which shows (\ref{eq:pull_back_multi_obs1}).
By the fact that $U_m$ maps the spectral family of projections of ${\bf\bT}_m$ to the spectral family of projections of ${\bf\bQ}_{m}$, and keeps the values corresponding to each projection, and by the fact that ${\bf\bT}$ is a canonical multi-observable, we get 
\begin{equation}
\begin{split}
\tau_m(g)^*{\bf \bQ}_{m} \tau_m(g)= &
U_m\pi_m(g)^*U_m^*{\bf \bQ}_{m} U_m\pi_m(g)U_m^*\\
 =&  U_m\pi_m(g)^*{\bf\bT}_m \pi_m(g)U_m^*\\
= & U_m {\bf g}_m\bullet{\bf A}({\bf h}_m){\bf \bT}_{m} U_m^*=
{\bf g}_m\bullet{\bf A}({\bf h}_m){\bf \bQ}_{m}.
\end{split}
\label{eq:5thc0000w}
\end{equation}

which shows (\ref{ff8gg833hdd9}).

The last statement of the proposition follows by pulling backwards ${\bf\bQ}_m$ to $\cH$ via $U_m$, and using a similar calculations to (\ref{eq:5thc0000w}).

\end{proof}

In the following discussion we formulate a more accessible version of
Proposition \ref{proposition55y}.
In the setting of Proposition \ref{proposition55y}, the space $L^2(N_m)^{\abs{\kappa_m}}$ is isomorphic to the space $L^2(X_m)=L^2(N_m\times {\cal Y}_m)$, where ${\cal Y}_m$ is the standard discrete measure space $\{n\}_{n\in\kappa_m}$. The representation $L_m({\bf g}_m)^{[\kappa_m]}$ takes the following form in $L^2(X_m)$. For any $h\in L^2(X_m)$,
\begin{equation}
L_{X_m}({\bf g}_m)h({\bf g}_m',y_m)=h({\bf g}_m^{-1}\bullet{\bf g}_m',y_m)
\label{eq:rrrrrnbh9a}
\end{equation}
Moreover, the multi-observables ${\bf\bQ}_m$ takes the following form in $ L^2(X_m)$. For any $h\in L^2(X_m)$,
\begin{equation}
{\bf\bQ}_{X_m}h({\bf g}_m,y_m)=\big(g_m^1 h({\bf g}_m,y_m) ,\ldots,g_m^{K_m} h({\bf g}_m,y_m) \big).
\label{eq:rrrrt777rrnbh9a}
\end{equation}
Consider the isometric isomorphism $\Psi_m:\cH\rightarrow L^2(X_m)$ that corresponds to $U_m$. Consider the pull-forward representation of $\pi$ to $L^2(X_m)$,  $\rho_m(g)=\Psi_m\pi_m(g)\Psi_m^*$. 
By (\ref{ff8gg833hdd9}) we have
\begin{equation}
\rho_m(g)^*{\bf\bQ}_{X_m}\rho_m(g)={\bf g_m}\bullet {\bf A}_m({\bf h}_m){\bf\bQ}_{X_m}.
\label{eq:to_fulffl_in_23}
\end{equation}
The following theorem formulates Proposition \ref{proposition55y} in terms of the above construction.
\begin{theorem}
\label{Theorem:pull_trans2}
Consider a SPWT, and assume that $\{G,\pi\}$ are members of a multi-canonical system. Then for each $m=1,\ldots,M$, there exists a manifold ${\cal Y}_m$ with a Radon measure, where for $X_m=N_m\times{\cal Y}_m$ there exists an isometric isomorphism $\Psi_m:\cH\rightarrow L^2(X_m)$
 that satisfies $\pi_m({\bf g}_m)= \Psi_m^*L_{X_m}({\bf g}_m)\Psi_m$ and (\ref{eq:to_fulffl_in_23}). 
For any such sequence of transforms $\{\Psi_m\}_{m=1}^M$, the multi-observable ${\bf\bT}=({\bf \bT}_1,\ldots,{\bf \bT}_M)$, defined by ${\bf \bT}_m = \Psi_m^*{\bf \bQ}_{X_m} \Psi_m$, is a canonical multi-observable of $\pi$.
\end{theorem}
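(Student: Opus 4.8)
The plan is to reduce the existence part to Proposition \ref{proposition55y}, and to prove the ``moreover'' clause by a direct unitary‑conjugation computation.

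For existence, fix $m$. Proposition \ref{proposition55y} supplies a decomposition $\cH=\bigoplus_{n\in\kappa_m}\cH_m^n$ and isometric isomorphisms $U_m^n:\cH_m^n\to L^2(N_m)$, hence $U_m=\bigoplus_{n}U_m^n:\cH\to L^2(N_m)^{\abs{\kappa_m}}$, satisfying \eqref{eq:pull_back_global100}, \eqref{ff8gg833hdd9} and \eqref{eq:pull_back_multi_obs1}. The key identification is $L^2(N_m)^{\abs{\kappa_m}}\cong L^2(N_m\times\mathcal{Y}_m)$, where $\mathcal{Y}_m=\{n\}_{n\in\kappa_m}$ is the (zero‑dimensional) discrete measure space, which is a manifold with Radon measure in the required sense. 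Set $X_m=N_m\times\mathcal{Y}_m$ and let $\Psi_m:\cH\to L^2(X_m)$ be $U_m$ followed by this identification. Under it, $L_m({\bf g}_m)^{[\kappa_m]}$ becomes $L_{X_m}({\bf g}_m)$ as in \eqref{eq:rrrrrnbh9a} and ${\bf\bQ}_m$ becomes ${\bf\bQ}_{X_m}$ as in \eqref{eq:rrrrt777rrnbh9a}; thus \eqref{eq:pull_back_global100} yields $\pi_m({\bf g}_m)=\Psi_m^*L_{X_m}({\bf g}_m)\Psi_m$ and \eqref{ff8gg833hdd9} yields \eqref{eq:to_fulffl_in_23}. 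Remark \ref{Remark:pull_trans1} indicates how the discrete $\mathcal{Y}_m$ may be traded for a continuous one when convenient; this does not affect the argument.

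For the ``moreover'' clause, let $\{\Psi_m\}_{m=1}^M$ be any transforms satisfying $\pi_m({\bf g}_m)=\Psi_m^*L_{X_m}({\bf g}_m)\Psi_m$ and \eqref{eq:to_fulffl_in_23}, and put ${\bf\bT}_m=\Psi_m^*{\bf\bQ}_{X_m}\Psi_m$. First, each ${\bf\bT}_m$ is a multi‑observable: the components of ${\bf\bQ}_{X_m}$ are mutually commuting multiplication operators, self‑adjoint when $G_m\in\{\RR,\ZZ\}$ and unitary when $G_m\in\{e^{i\RR},e^{2\pi i\ZZ/N}\}$, and conjugation by the unitary $\Psi_m$ preserves commutativity, self‑adjointness and unitarity. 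Next, writing $\rho_m(g)=\Psi_m\pi(g)\Psi_m^*$ for $g\in G_z$, we have $\pi(g)^*{\bf\bT}_m\pi(g)=\Psi_m^*\,\rho_m(g)^*{\bf\bQ}_{X_m}\rho_m(g)\,\Psi_m$, which by \eqref{eq:to_fulffl_in_23} equals $\Psi_m^*\big({\bf g}_m\bullet{\bf A}_m({\bf h}_m){\bf\bQ}_{X_m}\big)\Psi_m$. Finally one observes that the operation ``${\bf g}_m\bullet{\bf A}_m({\bf h}_m)(\cdot)$'' — which is the functional calculus \eqref{eq:nnnd54kh} applied to the smooth map $\boldsymbol{\lambda}_m\mapsto{\bf g}_m\bullet{\bf A}_m({\bf h}_m)\boldsymbol{\lambda}_m$ — commutes with conjugation by the unitary $\Psi_m$, since $\Psi_m$ carries the joint PVM ${\bf P}_m$ of ${\bf\bQ}_{X_m}$ to the joint PVM of ${\bf\bT}_m$ while leaving the spectral values untouched. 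Hence $\pi(g)^*{\bf\bT}_m\pi(g)={\bf g}_m\bullet{\bf A}_m({\bf h}_m){\bf\bT}_m$; collecting these identities over $m=1,\ldots,M$ is exactly the multi‑canonical commutation relation \eqref{global_observe}, so ${\bf\bT}=({\bf\bT}_1,\ldots,{\bf\bT}_M)$ is a canonical multi‑observable of $\pi$.

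The main obstacle is the last step of that computation: justifying that the possibly unbounded, multidimensional functional calculus intertwines with unitary conjugation. This requires the joint spectral theorem for the commuting family $\bT_m^1,\ldots,\bT_m^{K_m}$ (through the commuting PVMs, as in \eqref{eq:87655hfh}--\eqref{eq:nnnd54kh}) and some domain bookkeeping — the identity $\Psi_m^*F({\bf\bQ}_{X_m})\Psi_m=F(\Psi_m^*{\bf\bQ}_{X_m}\Psi_m)$ holds on the natural domains because $B\mapsto\Psi_m^*B\Psi_m$ is a $*$-isomorphism that transports spectral measures. Everything else — the identification $L^2(N_m)^{\abs{\kappa_m}}\cong L^2(X_m)$, the translation of \eqref{eq:pull_back_global100} and \eqref{ff8gg833hdd9} into \eqref{eq:rrrrrnbh9a}--\eqref{eq:to_fulffl_in_23}, and the stability of self‑adjointness, unitarity and commutativity under unitary conjugation — is routine.
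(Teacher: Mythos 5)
Your proposal is correct and follows essentially the same route as the paper: the existence part is exactly the paper's reduction to Proposition \ref{proposition55y} via the identification $L^2(N_m)^{\abs{\kappa_m}}\cong L^2(N_m\times\mathcal{Y}_m)$ with $\mathcal{Y}_m$ the discrete index space, and the ``moreover'' clause is proved by the same pull-back conjugation computation as in (\ref{eq:5thc0000w}), using that the unitary transports the joint spectral measure while preserving spectral values. Your explicit attention to the functional-calculus/unitary-conjugation intertwining is a welcome sharpening of a step the paper leaves implicit.
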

 Similarly to Subsection \ref{Solving the canonical commutation relation}, we call $\Psi_m$ the $quantity_{N_m}$ transform, and call $L^2(X_m)$ the $quantity_{N_m}$ domain.
Non-discrete ${\cal Y}_m$ spaces may be used in Theorem \ref{Theorem:pull_trans2} as explained in Remark \ref{Remark:pull_trans1}.

\subsection{Global uncertainties}
\label{Global uncertainties}

In this subsection we define global variances, invariant on orbits, corresponding to each canonical observable ${\bf \bT}_m$. The global uncertainty is then defined to be the sum of the global variances. In this section we are interested in scalar variances $\s^{{\bf W}_m}_{f}({\bf\bT}_m)$. By (\ref{eq:direct_var_linear_comb}), it is enough to focus on variances of the form $\s_f({\bf\bT}_m^{{\bf w}_m})$, where ${\bf\bT}_m^{{\bf w}_m}=\sum_{k=1}^{K_m}w_m^k\bT_m^k$ is a normal operator.  We study the orbit of variances $\{\s^{{\bf w}_m}_{\pi(g)f}({\bf\bT}_m)\ |\ g\in G\}$.
By the Heisenberg point of view (\ref{Hpov2}), and by (\ref{eq:direct_var_linear_comb}), this orbit of variances is equal to the set $\{\s_{f}(\pi(g)^*{\bf\bT}_m^{{\bf w}_m}\pi(g))\ |\ g\in G\}$.
Recall that $Z\subset G$ is represented by $\pi_z(z)$ as the unit operator times a character, and thus $\pi_z(z)$ commutes with any operator. 
Therefore, for $g$ represented in coordinates by $({\bf z},{\bf g}_1,{\bf h}_1)$, we have 
\[\pi({\bf z},{\bf g}_1,{\bf h}_1)^*{\bf\bT}_m^{{\bf w}_m}\pi({\bf z},{\bf g}_1,{\bf h}_1) = \pi({\bf g}_1,{\bf h}_1)^*{\bf\bT}_m^{{\bf w}_m}\pi({\bf g}_1,{\bf h}_1).\]
As a result, it is enough to study the orbits under $G_z$, namely $\{\s^{{\bf W}_m}_{\pi(g)f}({\bf\bT}_m)\ |\ g\in G_z\}$.
We divide the analysis to two cases. The self-adjoint case, where $G_m$ is $\RR$ or $\ZZ$, and the unitary case, where $G_m$ is $e^{i\RR}$ or $e^{2\pi i \ZZ/N}$.

\subsubsection{The self-adjoint case}
\label{Global uncertainties1}

For motivation, we start by considering a SPWT, where for all $m=1,\ldots,M$, $G_m$ is $\RR$.
The definition of the global scalar variance $\Sigma^{{\bf W}_m}_f({\bf \bT}_m)$,  constant on orbits $\pi(G_z)f$, is explained for this special case.  Given $f$, the following analysis shows that there is some element $y$ in the orbit of $f$, having all of its expected values ${\bf e}_f({\bf\bT}_1),\ldots,{\bf e}_f({\bf\bT}_M)$  equal to ${\bf 0}$, which are the unit elements of $N_m$ respectively. Moreover, it shows the way to calculate the unique group element $g\in G_z$ such that $f=\pi(g)y$. 
By (\ref{eq:multi_e_transform}),
\begin{equation}
{\bf e}_f({\bf \bT}) =  {\bf e}_{\pi(g)y}({\bf \bT})=
\left({\bf g}_{1} \bullet {\bf A}_1({{\bf h}_{1}}){\bf e}_y({\bf \bT}_{1})\ \ ,\ \  \ldots\ \   
  ,\ \  {\bf g}_{M-1} \bullet {\bf A}_{M-1}({{\bf h}_{M-1}}){\bf e}_y({\bf \bT}_{M-1})\ \ ,\ \   {\bf g}_{M} \bullet {\bf e}_y({\bf \bT}_{M})\right).
\label{eq:orbit_y_expected}
\end{equation}
The right hand side of (\ref{eq:orbit_y_expected}) can be viewed as the group product in $G_z$ (represented in coordinates) of the element ${\bf g}$ with the element having coordinates ${\bf e}_{y}({\bf \bT})$. 
Thus we have
\[{\bf e}_f({\bf \bT})={\bf g}\bullet {\bf e}_{y}({\bf \bT})={\bf g}.\]
This construction shows that there exists $y$ and a unique $g\in G_z$ such that $y= \pi(g^{-1})f$ has expected values equal to ${\bf 0}$, and shows that $g$ is the group element with coordinates ${\bf e}_f({\bf \bT})$.
Thus, denoting by ${\bf e}_f({\bf \bT})^{-1}$ the inverse group element of ${\bf e}_f({\bf \bT})$ in coordinates of $G_z$, we have
\begin{equation}
y= \pi\Big({\bf e}_f({\bf \bT})^{-1}\Big)f.
\label{eq:fh67ne6}
\end{equation}
Denote by $\left[{\bf e}_f({\bf \bT})^{-1}\right]_{{\bf h}_m}$ the ${\bf h}_m$ component of ${\bf e}_f({\bf \bT})^{-1}$, and note that by Remark \ref{ass_inversion2},
\begin{equation}
\left[{\bf e}_f({\bf \bT})^{-1}\right]_{{\bf h}_m} = \bigg(-{\bf A}_{m'}\Big({\bf e}_f({\bf \bT}_{m+1}),\ldots, {\bf e}_f({\bf \bT}_{M})\Big)^{-1}{\bf e}_f({\bf \bT}_{m'})\bigg)_{m'=m+1}^M.
\label{eq:inverse_e_h}
\end{equation}
Hence, by Proposition \ref{Prop:scalar_var_translation1} and (\ref{eq:fh67ne6}) we have
\[
\s^{{\bf W}_m}_{y}({\bf \bT}_m) = \s^{{\bf A}_m(\left[{\bf e}_f({\bf \bT})^{-1}\right]_{{\bf h}_m}){\bf W}_m{\bf A}_m(\left[{\bf e}_f({\bf \bT})^{-1}\right]_{{\bf h}_m})^*}_{f}({\bf \bT}_m).
\]
This leads us to define the $m$-th scalar global variance to be
\begin{equation}
\Sigma^{{\bf W}_m}_f({\bf \bT}_m) = \s^{{\bf A}_m(\left[{\bf e}_f({\bf \bT})^{-1}\right]_{{\bf h}_m}){\bf W}_m{\bf A}_m(\left[{\bf e}_f({\bf \bT})^{-1}\right]_{{\bf h}_m})^*}_{f}({\bf \bT}_m).
\label{eq:scalar_global_multi-variance1}
\end{equation}
To conclude, $\Sigma^{{\bf W}_m}_f({\bf \bT}_m)$ calculates the scalar uncertainty of the unique window $y\in \pi(G_z)f$, having expected values ${\bf 0}$, and is thus constant on orbits.
Now, we define the uncertainty of the wavelet transform $V_f$ by
\[S(f)= \sum_{m=1}^M\Sigma^{{\bf W}_m}_f({\bf \bT}_m)\]
for some choice of the weights ${\bf W}_m$.

Let us now define the global variance of a self-adjoint canonical multi-observable in the general case. Assume that for some $m$,  $G_m$ is $\RR$ or $\ZZ$. 
In case $G_{m'}\neq \RR$ for some $m'>m$, ${\bf e}_f({\bf \bT}_{m'})$ is not in $N_{m'}$ in general. Therefore, the expression $[{\bf e}_f({\bf \bT})^{-1}]_{{\bf h}_{m'}}$ is meaningless, and (\ref{eq:scalar_global_multi-variance1}) is not well defined. However, there is a way to project ${\bf e}_f({\bf \bT}_{m'})$ to $N_{m'}$ in a way that is consistent with the action of $\pi_{m'}$ on $G_{m'}$, leading to a definition of $\Sigma^{{\bf W}_m}_f({\bf \bT}_m)$ similar to (\ref{eq:scalar_global_multi-variance1}).

\begin{definition}
\label{projected_expected_values0}
The \textbf{projected expected value} of an observable $\bT$, with ${\rm spec}(\bT)=G$ where $G$ a physical quantity, is defined to be the closest point $E_{f}(\bT)\in G$ to $e_f(\bT)$.
\end{definition}

In case there is more than one closest point in $G$ to $e_f(\bT)$, $E_f(\bT)$ is defined in some consistent way. For example, in $\ZZ$ we may round to the smaller integer of the two. In $e^{2\pi i\ZZ/N}$ we project to the point in the clockwise direction. Last, the projection of $e_f(\bT)=0$ in $e^{i\RR}$ and $e^{2\pi i\ZZ/N}$ is not defined.

\begin{remark}
\label{projected_expected_values}
The projected expected values of an observable $\bT$ are given in each of the four cases of physical quantities as follows.
\begin{enumerate}
\item
The projected expected value of a normalized $f$ with respect to the self-adjoint observable $\bT$ with spectrum ${\rm spec}(\bT)=\RR$ is $E_f(\bT)=e_f(\bT)$.
\item
 The \textbf{rounded expected value} of a normalized $f$ with respect to the self-adjoint observable $\bT$ with spectrum ${\rm spec}(\bT)=\ZZ$ is defined to be
\[E_f(\bT) = \left\lfloor e_f(\bT)\right\rfloor\]
where $\left\lfloor x\right\rfloor$ is the closest integer to $x\in\RR$.
	\item 
	The \textbf{expected argument} of a normalized $f$ with respect to the unitary observable $\bT$ with spectrum ${\rm spec}(\bT)=e^{i\RR}$ is defined to be
\[E_f(\bT) = Arg\Big(e_f(\bT)\Big)\]
where $Arg(z)= e^{i\theta}$ for any $z=re^{i\theta}$ with $r,\theta\in\RR$.
	\item 
	The \textbf{rounded expected argument} of a normalized $f$ with respect to the unitary observable $\bT$ with spectrum ${\rm spec}(\bT)=e^{2\pi i\ZZ/N}$ is defined to be
\[E_f(\bT) = \left\lfloor Arg\Big(e_f(\bT)\Big)\right\rfloor\]
where $\left\lfloor e^{i\theta}\right\rfloor$ is the closest point in $e^{2\pi i\ZZ/N}$ to $e^{i\theta}\in e^{i\RR}$.
\end{enumerate}
In each of these cases, we denote the corresponding projection by $\Lambda$. Namely, $\Lambda(z)=z,\left\lfloor z\right\rfloor,Arg(z),\left\lfloor Arg(z)\right\rfloor$ if ${\rm spec}(\bT)=\RR,\ZZ,e^{i\RR},e^{2\pi i\ZZ/N}$ respectively.
\end{remark}

For a SPWT and ${\bf \bT}$ a canonical multi-observable, we define the \textbf{multi-projected expected value} by
\[{\bf E}_f({\bf \bT})=\Big( {\bf E}_f({\bf \bT}_1),\ldots, {\bf E}_f({\bf \bT}_M)  \Big) \]
where for each $m=1,\ldots,M$,
\[{\bf E}_f({\bf \bT}_m)=\Big( E_f(\bT_m^1),\ldots, E_f(\bT_m^{K_m})  \Big).\]

\begin{proposition}
\label{prop:transform_mean_proj}
Consider a SPWT, and a multi-canonical observable ${\bf\bT}$. Then for each $m=1,\ldots,M$, the projected expected values satisfy the one parameter canonical commutation relation
\begin{equation}
{\bf E}_{\pi_m({\bf g}_m)f}({\bf \bT}_m)={\bf g}_m\bullet {\bf E}_{f}({\bf \bT}_m).
\label{eq:transform_mean_proj}
\end{equation}
Equation (\ref{eq:transform_mean_proj}) is not satisfied if $G_m$ is $e^{i\RR}$ or $e^{2\pi i\ZZ/N}$, and $e_f(\bT_m^k)=0$ for some $k$.
\end{proposition}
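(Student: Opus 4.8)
The plan is to prove the identity one coordinate at a time, reducing it first to the one–parameter canonical commutation relation for $\bT_m^k$ and then to an elementary equivariance property of the projection $\Lambda$ of Remark~\ref{projected_expected_values}. Fix $m$ and $k\in\{1,\ldots,K_m\}$. The element of $G_z$ whose only nontrivial block is ${\bf g}_m$ is represented by $\pi_m({\bf g}_m)$, and for it one has ${\bf h}_m={\bf e}$, so that ${\bf A}_m({\bf h}_m)={\bf I}$ because $A_m(\,\cdot\,;\,\cdot\,)$ is a group action of $H_m$. Restricting the multi-canonical commutation relation (\ref{global_observe}) to this element and reading off the $m$-th block gives $\pi_m({\bf g}_m)^*\bT_m^k\pi_m({\bf g}_m)=g_m^k\bullet\bT_m^k$. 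The only subtlety here is bookkeeping: one must check that the restriction of (\ref{global_observe}) to $N_m$ carries no $A_m$-twist, which is exactly the observation that ${\bf h}_m={\bf e}$.

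Next I would pass to expected values. By the Heisenberg point of view (\ref{Hpov1}) and functional calculus, with $f$ normalized,
\[
e_{\pi_m({\bf g}_m)f}(\bT_m^k)=e_f\big(g_m^k\bullet\bT_m^k\big),
\]
and this equals $g_m^k+e_f(\bT_m^k)$ when $G_m$ is $\RR$ or $\ZZ$ (then $g_m^k\bullet\bT_m^k=g_m^kI+\bT_m^k$ and $e_f$ is linear) and $g_m^k\,e_f(\bT_m^k)$ when $G_m$ is $e^{i\RR}$ or $e^{2\pi i\ZZ/N}$ (then $g_m^k\bullet\bT_m^k=g_m^k\bT_m^k$ and $e_f$ is scalar homogeneous). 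In all four cases $e_{\pi_m({\bf g}_m)f}(\bT_m^k)=g_m^k\bullet e_f(\bT_m^k)$, reading $\bullet$ as the ambient arithmetic operation on $\CC$; one could instead quote (\ref{eq:multi_e_transform}) and (\ref{eq:multi_e_transform22}) restricted to $\pi_m$.

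It then remains to verify the equivariance $\Lambda(g_m^k\bullet z)=g_m^k\bullet\Lambda(z)$ for $g_m^k\in G_m$ and every $z$ for which $\Lambda(z)$ is defined. For $G_m=\RR$ this is trivial since $\Lambda=\mathrm{id}$; for $G_m=\ZZ$ it is $\lfloor n+x\rfloor=n+\lfloor x\rfloor$ for $n\in\ZZ$, which stays valid under the ``round to the smaller of the two'' tie-break because adding an integer commutes with it; for $G_m=e^{i\RR}$, writing $g_m^k=e^{i\alpha}$ and $z=re^{i\theta}$ with $r>0$ gives $Arg(e^{i\alpha}z)=e^{i(\alpha+\theta)}=e^{i\alpha}Arg(z)$; for $G_m=e^{2\pi i\ZZ/N}$, multiplication by $g_m^k$ is a rotation that is a symmetry of the grid $e^{2\pi i\ZZ/N}$, hence commutes with rounding to the nearest grid point, clockwise tie-break included. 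I expect this four-way case check — in particular the compatibility of the tie-breaking conventions with the $G_m$-action — to be the only genuinely nontrivial point. Combining, $E_{\pi_m({\bf g}_m)f}(\bT_m^k)=\Lambda\big(g_m^k\bullet e_f(\bT_m^k)\big)=g_m^k\bullet\Lambda\big(e_f(\bT_m^k)\big)=g_m^k\bullet E_f(\bT_m^k)$, and assembling over $k=1,\ldots,K_m$ yields (\ref{eq:transform_mean_proj}).

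Finally, for the negative statement, suppose $G_m$ is $e^{i\RR}$ or $e^{2\pi i\ZZ/N}$ and $e_f(\bT_m^k)=0$ for some $k$. Then $E_f(\bT_m^k)$ is undefined by Remark~\ref{projected_expected_values}, while the computation above gives $e_{\pi_m({\bf g}_m)f}(\bT_m^k)=g_m^k\cdot 0=0$, so $E_{\pi_m({\bf g}_m)f}(\bT_m^k)$ is undefined as well; hence both sides of (\ref{eq:transform_mean_proj}) are meaningless and the relation does not hold.
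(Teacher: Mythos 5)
Your proposal is correct and follows essentially the same route as the paper: first establish $e_{\pi_m({\bf g}_m)f}(\bT_m^k)=g_m^k\bullet e_f(\bT_m^k)$ (the paper quotes (\ref{eq:multi_e_transform}) and (\ref{eq:multi_e_transform22}); you rederive it from the restricted commutation relation), then show that $\Lambda$ commutes with the $G_m$-action. The only difference is that you spell out the four-case equivariance check and the tie-breaking compatibility, which the paper merely asserts.
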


\begin{proof}
By Remark \ref{projected_expected_values}, and by (\ref{eq:multi_e_transform22}),
\begin{equation}
{\bf E}_{\pi({\bf g}_m)f}({\bf \bT}_m)
=  \Lambda\Big({{\bf e}_{\pi_m({\bf g}_m)f}({\bf \bT}_m)}\Big)  = \Lambda\Big({\bf g}_{m} \bullet  {{\bf e}_{f}({\bf \bT}_m)}\Big)
\label{eq:base_of_indu11}
\end{equation} 
Now, for each of the four cases of $\Lambda$ in Remark \ref{projected_expected_values} we have
\begin{equation}
\begin{split}
\Lambda\Big({\bf g}_{m} \bullet  {{\bf e}_{f}({\bf \bT}_m)}\Big) = {\bf g}_{m} \bullet \Lambda\Big( {{\bf e}_{f}({\bf \bT}_m)}\Big)  =  {\bf g}_{m} \bullet {\bf E}_f({\bf \bT}_m).
\end{split}
\label{eq:base_of_indu12}
\end{equation} 
\end{proof}

In the above notations, note that ${\bf E}_f({\bf \bT})$ is the coordinate representation of some element in $G_z$, so ${\bf E}_f({\bf \bT})^{-1}$ is well defined. Therefore, the following definition of the global scalar variance is legal.

\begin{definition}
\label{def:scalar_global_multi-variance222}
Consider a SPWT, and  a canonical multi-observable ${\bf \bT}$. Let $m$ be an index such that $G_m$ is $\ZZ$ or $\RR$. 
Define the matrix operator 
\begin{equation}
{\bf A}^{-1}_m(f)={\bf A}_m(\left[{\bf E}_f({\bf \bT})^{-1}\right]_{{\bf h}_m}),
\label{eq:A_dilation3}
\end{equation}
where $\left[{\bf E}_f({\bf \bT})^{-1}\right]_{{\bf h}_m}$ is defined as in (\ref{eq:inverse_e_h}), and in case $G_m$ is $e^{i\RR}$ or $e^{2\pi i\ZZ/N}$, and $e_f(\bT)=0$, we define ${\bf A}_m(f)^{-1} = {\bf I}$.
The \textbf{global scalar variance} of ${\bf \bT}_m$ is defined to be
\begin{equation}
\Sigma^{{\bf W}_m}_f({\bf \bT}_m) = \s^{{\bf A}^{-1}_m(f)\ {\bf W}_m\ {\bf A}^{-1}_m(f)^*}_{f}({\bf \bT}_m).
\label{eq:scalar_global_multi-variance222}
\end{equation}
for some weight matrix ${\bf W}_m$. 
\end{definition}

By Proposition \ref{Prop:scalar_var_translation1} and Remark \ref{projected_expected_values}, the value $\Sigma^{{\bf W}_m}_f({\bf \bT}_m)$ is the variance of an element $y\in\pi(G_z)f$ having expected values in the $N_m$ dimensions satisfying
\begin{equation}
{\bf e}_y({\bf \bT})_{{\bf g}_m}=\big[\Lambda \Big({\bf e}_f({\bf \bT})^{-1}\Big)\bullet {\bf e}_f({\bf \bT})\big]_{{\bf g}_m}.
\label{eq:orbit_exp_val0}
\end{equation}
This expected value is in some sense close to the unit element of the group $N_m$. The following  proposition extends this result in the special case of a group $G$, where all of the coordinates ${\bf g}_{m}$ with $G_{m}\neq \RR$, are not dilated in the group product (as defined in the proposition).

\begin{proposition}
\label{ldfgstrh45r7}
Consider a SPWT, such that for every $m$ with $G_{m}\neq\RR$, ${\bf A}_{m}({\bf h}_{m})={\bf I}$. Let ${\bf \bT}$ be a canonical multi-observable. 
Let $f\in\cH$ be a window such that $e_f(\bT_m^k)\neq 0$ for any $m$ such that $G_m$ is $e^{i\RR}$ or $e^{2\pi i\ZZ/N}$, and $k=1,\ldots,K_m$. 
For any $m$ such that $G_m$ is $\RR$ or $\ZZ$, let $\Sigma^{{\bf W}_m}_f({\bf \bT}_m)$ be the global scalar variance of Definition \ref{def:scalar_global_multi-variance222}. 
Define $\Sigma^{{\bf W}_m}_f({\bf \bT}_m)=\s^{{\bf W}_m}_f({\bf \bT}_m)$ for any $m$ such that $G_m$ is $e^{i\RR}$ or $e^{2\pi i\ZZ/N}$.
Then for any index $m$, the global scalar variance $\Sigma^{{\bf W}_m}_f({\bf \bT}_m)$ is constant on orbits $\pi(G_z)f$.
Moreover, 
$\Sigma^{{\bf W}_m}_f({\bf \bT}_m)$ is the variance of the unique element $y\in\pi(G_z)f$ having multi-expected value
\begin{equation}
{\bf e}_y({\bf \bT})={\bf E}_f({\bf \bT})^{-1}\bullet {\bf e}_f({\bf \bT}).
\label{eq:orbit_exp_val}
\end{equation}
\end{proposition}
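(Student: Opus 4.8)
The plan is to produce, inside each orbit $\pi(G_z)f$, a distinguished ``centered'' representative $y=\pi\big({\bf E}_f({\bf \bT})^{-1}\big)f$ whose multi-expected value is the tuple in (\ref{eq:orbit_exp_val}), to identify $\Sigma^{{\bf W}_m}_f({\bf \bT}_m)$ with the plain scalar variance $\s^{{\bf W}_m}_y({\bf \bT}_m)$ of this representative, and to argue that $y$ depends on the orbit alone up to a unimodular phase (which affects neither expected values nor variances). For an index $m$ with $G_m\neq\RR$ the hypothesis forces ${\bf A}_m({\bf h}_m)={\bf I}$, so Definition \ref{def:scalar_global_multi-variance222} gives $\Sigma^{{\bf W}_m}_f({\bf \bT}_m)=\s^{{\bf W}_m}_f({\bf \bT}_m)$, which is already constant on $\pi(G_z)$-orbits by Proposition \ref{Prop:scalar_var_translation1} (case $G_m=\ZZ$) or Proposition \ref{Prop:scalar_var_translation2} (case $G_m$ unitary), and in particular equals $\s^{{\bf W}_m}_y({\bf \bT}_m)$; so the actual work is confined to the $\RR$-slots.

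First I would record two whole-group translation laws for $g\in G_z$. Reading (\ref{eq:multi_e_transform}) of Proposition \ref{Prop:scalar_var_translation1} on the slots with $G_m\in\{\RR,\ZZ\}$ and (\ref{eq:multi_e_transform22}) of Proposition \ref{Prop:scalar_var_translation2} on the unitary slots (applicable there precisely because ${\bf A}_m={\bf I}$), each coordinate reads $\big[{\bf e}_{\pi(g)f}({\bf \bT})\big]_m={\bf g}_m\bullet{\bf A}_m({\bf h}_m){\bf e}_f({\bf \bT}_m)$, which by the coordinate form (\ref{eq:vvvvvvvvvvv}) of the product in $G_z$ is precisely ${\bf e}_{\pi(g)f}({\bf \bT})=g\bullet{\bf e}_f({\bf \bT})$, the right-hand side being the $G_z$-group law evaluated on the complex tuple ${\bf e}_f({\bf \bT})$ as in (\ref{eq:orbit_exp_val}). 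Applying the coordinate-wise projection $\Lambda$ of Remark \ref{projected_expected_values} --- which is the identity on the $\RR$-slots, while on the remaining slots ${\bf A}_m={\bf I}$ and $\Lambda$ intertwines left translation by $N_m$ (equations (\ref{eq:base_of_indu11})--(\ref{eq:base_of_indu12}) in the proof of Proposition \ref{prop:transform_mean_proj}; the hypothesis $e_f(\bT_m^k)\neq0$ on the unitary slots makes $\Lambda$, hence ${\bf E}_f({\bf \bT})$, well defined at the points involved) --- I obtain the companion law ${\bf E}_{\pi(g)f}({\bf \bT})=g\bullet{\bf E}_f({\bf \bT})$ in $G_z$. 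Taking $g={\bf E}_f({\bf \bT})^{-1}$ in the first law gives ${\bf e}_y({\bf \bT})={\bf E}_f({\bf \bT})^{-1}\bullet{\bf e}_f({\bf \bT})$, i.e. (\ref{eq:orbit_exp_val}); taking $g={\bf E}_f({\bf \bT})^{-1}$ in (\ref{scalar_var_translation1}) and comparing with (\ref{eq:A_dilation3}) gives $\Sigma^{{\bf W}_m}_f({\bf \bT}_m)=\s^{{\bf W}_m}_{\pi({\bf E}_f({\bf \bT})^{-1})f}({\bf \bT}_m)=\s^{{\bf W}_m}_y({\bf \bT}_m)$ on the $\RR$-slots, completing the ``variance of $y$'' claim.

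It remains to prove uniqueness and orbit-invariance. The extended $G_z$-action on complex tuples is free at ${\bf e}_f({\bf \bT})$: if $h\bullet{\bf e}_f({\bf \bT})={\bf e}_f({\bf \bT})$ then by a downward induction on $m$ from $M$ to $1$, once ${\bf h}_{m+1},\dots,{\bf h}_M$ are known trivial the automorphism ${\bf A}_m({\bf h}_m)$ reduces to the identity, so the $m$-th coordinate equation is ${\bf h}_m\bullet{\bf e}_f({\bf \bT}_m)={\bf e}_f({\bf \bT}_m)$, forcing ${\bf h}_m$ to be the unit element --- trivially when $G_m$ is additive, and because multiplying the nonzero number $e_f(\bT_m^k)$ by $h_m^k$ must fix it when $G_m$ is multiplicative (the nonvanishing hypothesis again). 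Freeness gives uniqueness at once: if $\pi(g_1)f$ and $\pi(g_2)f$ have the same multi-expected value then $(g_1^{-1}\bullet g_2)\bullet{\bf e}_f({\bf \bT})={\bf e}_f({\bf \bT})$, whence $g_1=g_2$. For orbit-invariance, let $f'=\pi(g_0)f$ with $g_0\in G_z$; the two translation laws give ${\bf e}_{f'}({\bf \bT})=g_0\bullet{\bf e}_f({\bf \bT})$ and ${\bf E}_{f'}({\bf \bT})=g_0\bullet{\bf E}_f({\bf \bT})$, so using associativity of the extended action, ${\bf E}_{f'}({\bf \bT})^{-1}\bullet{\bf e}_{f'}({\bf \bT})=\big({\bf E}_f({\bf \bT})^{-1}\bullet g_0^{-1}\big)\bullet\big(g_0\bullet{\bf e}_f({\bf \bT})\big)={\bf E}_f({\bf \bT})^{-1}\bullet{\bf e}_f({\bf \bT})$. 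Thus the centered representatives of $f$ and of $f'$ have equal multi-expected value; since $\pi$ restricted to the cross-section $G_z$ is a representation only up to a unimodular scalar coming from the center (as for the STFT), those representatives agree up to phase, and therefore $\Sigma^{{\bf W}_m}_{f'}({\bf \bT}_m)=\s^{{\bf W}_m}_y({\bf \bT}_m)=\Sigma^{{\bf W}_m}_f({\bf \bT}_m)$; together with the trivial non-$\RR$ case this is the claimed constancy.

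The main difficulty is organizational, not conceptual: one must keep rigorously apart the three products at play --- the genuine group law of $G_z$, its formal extension to complex tuples appearing in (\ref{eq:orbit_exp_val}), and the merely projective composition of the operators $\pi(g)$, $g\in G_z$ --- and invoke the hypothesis $e_f(\bT_m^k)\neq0$ on the unitary slots at each of the three places it is genuinely needed: well-definedness of ${\bf E}_f({\bf \bT})$, commutation of $\Lambda$ with $N_m$-translation, and freeness of the action. The downward induction yielding freeness is the only step carrying real combinatorial content.
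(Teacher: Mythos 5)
Your proof is correct and follows essentially the same route as the paper: you build the same centered representative $y=\pi\big({\bf E}_f({\bf \bT})^{-1}\big)f$, identify $\Sigma^{{\bf W}_m}_f({\bf \bT}_m)$ with $\s^{{\bf W}_m}_y({\bf \bT}_m)$ via (\ref{scalar_var_translation1}) and Definition \ref{def:scalar_global_multi-variance222}, and rest the orbit-invariance on the same translation law for projected expected values that the paper isolates as Lemma \ref{LEMMA:proj_trans}. The only substantive differences are cosmetic additions: you supply an explicit freeness argument for the uniqueness of $y$ (which the paper asserts without proof), and you obtain invariance by showing the centered representatives of $f$ and $\pi(g_0)f$ agree up to a central phase, where the paper instead cancels the conjugated weight matrices directly using the group-action property of ${\bf A}_m$.
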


Proposition \ref{ldfgstrh45r7} is used for defining the global uncertainty as follows. Assume the conditions of Proposition \ref{ldfgstrh45r7} are satisfied. Therefore, the global variances $\Sigma^{{\bf W}_m}_f({\bf \bT}_m)$ are invariant on orbits. Thus, the uncertainty
\begin{equation}
S(f)=\sum_{m=1}^{M}\Sigma^{{\bf W}_m}_f({\bf \bT}_m)
\label{eq:Global_Uncertainty}
\end{equation}
is constant on orbits for any choice of the weights ${\bf W}_m$. Hence, $S(f)$ is  interpreted as an uncertainty of the SPWT $V_f$, and not of the individual window $f$, and is called the \textbf{global uncertainty}. { Of course, the global uncertainty (\ref{eq:Global_Uncertainty}) can be defined alternatively using the product of the global variances, instead of their sum. The product based global uncertainty of the STFT coincides with the classical time-frequency uncertainty. Indeed, $G/Z$ is $\{\RR^2,+\}$ and the semi-direct product reduces to a direct product, in addition to the fact that the set of time-frequency infinitesimal generators coincide with the canonical observables up to sign. However, in the generic case the global uncertainty is novel. We thus see the global uncertainty as a generalization of the classical time-frequency uncertainty.}

To prove Proposition \ref{ldfgstrh45r7}, we present the following lemma, which can be seen as the projected version of (\ref{eq:multi_e_transform}) or as an extension of Proposition \ref{prop:transform_mean_proj}.
\begin{lemma}
\label{LEMMA:proj_trans}
Consider a SPWT, such that for every $m$ with $G_{m}\neq\RR$, ${\bf A}_{m}({\bf h}_{m})={\bf I}$. Let ${\bf \bT}$ be a canonical multi-observable. 
Let $f\in\cH$ be a window such that $e_f(\bT_m^k)\neq 0$ for any $m$ such that $G_m$ is $e^{i\RR}$ or $e^{2\pi i\ZZ/N}$, and $k=1,\ldots,K_m$.
Then for any $g\in G_z$, 
\begin{equation}
{\bf E}_{\pi(g)f}({\bf \bT})={\bf g}\bullet {\bf E}_{f}({\bf \bT}).
\label{eq:LEMMA:proj_trans}
\end{equation}
\end{lemma}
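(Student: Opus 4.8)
The plan is to reduce the statement about $\mathbf{E}$ (the coordinatewise projection of the multi-expected value) to the already-established transformation law for the un-projected multi-expected values, namely (\ref{eq:multi_e_transform}) for the self-adjoint blocks and (\ref{eq:multi_e_transform22}) for the unitary blocks (which applies precisely because ${\bf A}_m({\bf h}_m)={\bf I}$ there). Concretely, since $g\in G_z$ has coordinates $g=({\bf g}_1,\ldots,{\bf g}_M)$ (the center is trivial in $G_z$) and $\pi(g) = \pi_1({\bf g}_1)\circ\cdots\circ\pi_M({\bf g}_M)$, I would prove (\ref{eq:LEMMA:proj_trans}) by induction on $M$, peeling off one factor $\pi_m({\bf g}_m)$ at a time, just as in the argument preceding (\ref{eq:orbit_y_expected}). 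The base case is Proposition \ref{prop:transform_mean_proj}: for a single block, ${\bf E}_{\pi_m({\bf g}_m)f}({\bf\bT}_m)={\bf g}_m\bullet {\bf E}_f({\bf\bT}_m)$, where the hypothesis $e_f(\bT_m^k)\neq 0$ for the unitary cases guarantees the projection $\Lambda$ is well defined and the last sentence of Proposition \ref{prop:transform_mean_proj} does not bite.

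The key algebraic point is the commutation $\Lambda({\bf g}_m\bullet x)={\bf g}_m\bullet\Lambda(x)$, already verified inside the proof of Proposition \ref{prop:transform_mean_proj} (equation (\ref{eq:base_of_indu12})): for $G_m=\RR$ we have $\Lambda=\mathrm{id}$; for $G_m=\ZZ$, $\bullet$ is integer translation and rounding commutes with it; for the unitary cases $\bullet$ is multiplication by a unit-modulus number and $\mathrm{Arg}$ (resp.\ rounded $\mathrm{Arg}$) commutes with it. The delicate part of the induction is bookkeeping the nested semi-direct product structure: when I apply $\pi(g)$ to $f$ and expand $\mathbf{e}_{\pi(g)f}({\bf\bT})$ using (\ref{eq:multi_e_transform}), the transformed expected value in block $m$ involves ${\bf g}_m\bullet {\bf A}_m({\bf h}_m)\mathbf{e}_f({\bf\bT}_m)$ where ${\bf h}_m$ are the higher-index coordinates of $g$; the hypothesis ${\bf A}_{m'}({\bf h}_{m'})={\bf I}$ for every $m'$ with $G_{m'}\neq\RR$ is exactly what is needed so that the only blocks carrying a nontrivial ${\bf A}$ have $G_m=\RR$, where $\Lambda$ is the identity and the projection is transparent. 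So the induction splits into: blocks with $G_m=\RR$, where (\ref{eq:multi_e_transform}) and $\Lambda=\mathrm{id}$ give the claim directly; and blocks with $G_m\in\{\ZZ,e^{i\RR},e^{2\pi i\ZZ/N}\}$, where ${\bf A}_m={\bf I}$, so by (\ref{eq:multi_e_transform22}) (and the nonvanishing hypothesis for the unitary subcases) $\mathbf{e}_{\pi(g)f}({\bf\bT}_m)={\bf g}_m\bullet\mathbf{e}_f({\bf\bT}_m)$, and applying $\Lambda$ componentwise together with (\ref{eq:base_of_indu12}) yields ${\bf g}_m\bullet\mathbf{E}_f({\bf\bT}_m)$.

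Assembling the blocks, the right-hand side $\big({\bf g}_m\bullet\mathbf{E}_f({\bf\bT}_m)\big)_{m=1}^M$ is, by Remark \ref{ass_inversion2} and the coordinate formula (\ref{eq:vvvvvvvvvvv}) specialized to ${\bf A}_m={\bf I}$ off the $\RR$-blocks, exactly the coordinate representation of ${\bf g}\bullet\mathbf{E}_f({\bf\bT})$ in $G_z$; this is the same identification of the tuple of transformed expected values with a group product that is used in the passage from (\ref{eq:orbit_y_expected}) to ${\bf e}_f({\bf\bT})={\bf g}\bullet{\bf e}_y({\bf\bT})$. I expect the main obstacle to be purely notational: carefully tracking which ${\bf h}_m$-coordinates feed into ${\bf A}_m$ at each step of the peeling, and confirming that after stripping $\pi_1({\bf g}_1)$ the residual window $\pi_2({\bf g}_2)\circ\cdots\circ\pi_M({\bf g}_M)f$ still satisfies the nonvanishing hypothesis $e(\bT_m^k)\neq 0$ in the unitary blocks — which it does, since for those blocks ${\bf A}_m={\bf I}$ forces $|e|$ to be invariant under the action of the lower-index factors by (\ref{eq:multi_e_transform22}). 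Once that invariance is noted, the induction closes and (\ref{eq:LEMMA:proj_trans}) follows.
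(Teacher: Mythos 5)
Your proposal is correct and matches the paper's argument in all essentials: the paper also proves the identity by induction over the nested semi-direct product structure (formally, a downward induction on $m$ establishing $[{\bf E}_{\pi(g)f}({\bf \bT})]_{{\bf h}_m}={\bf h}_m\bullet [{\bf E}_{f}({\bf \bT})]_{{\bf h}_m}$, with base case $m=M-1$ given by Proposition \ref{prop:transform_mean_proj}), splitting each step into the $G_{m+1}=\RR$ case handled by (\ref{eq:multi_e_transform}) with $\Lambda=\mathrm{id}$ and the $G_{m+1}\neq\RR$ case handled by ${\bf A}_{m+1}={\bf I}$ and Proposition \ref{prop:transform_mean_proj}, exactly as you describe. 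The only caveat is a small notational slip in your last paragraph: the assembled coordinates for the $\RR$-blocks are ${\bf g}_m\bullet{\bf A}_m({\bf h}_m){\bf E}_f({\bf \bT}_m)$ rather than ${\bf g}_m\bullet{\bf E}_f({\bf \bT}_m)$, which you correctly account for earlier.
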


\begin{proof}
We prove 
\begin{equation}
[{\bf E}_{\pi(g)f}({\bf \bT})]_{{\bf h}_m}={\bf h}_m\bullet [{\bf E}_{f}({\bf \bT})]_{{\bf h}_m}.
\label{eq:LEMMA:proj_trans00}
\end{equation}
by induction on $m$. In the base of the induction, $m=M-1$. By Proposition \ref{prop:transform_mean_proj}, noting that ${\bf h}_{M-1}={\bf g}_M$, we have
\begin{equation}
[{\bf E}_{\pi(g)f}({\bf \bT})]_{{\bf h}_{M-1}}  =  {\bf h}_{M-1} \bullet [{\bf E}_f({\bf \bT})]_{{\bf h}_{M-1}}.
\label{eq:base_of_indu1}
\end{equation} 

For the induction step, assume (\ref{eq:LEMMA:proj_trans00}) is true for $m+1$, and prove it for $m$. 
We have
\begin{equation}
[{\bf E}_{\pi(g)f}({\bf \bT})]_{{\bf h}_{m}} 
= \Big(
{\bf E}_{\pi(g)f}({\bf \bT}_{m+1})\ ,\ [{\bf E}_{\pi(g)f}({\bf \bT})]_{{\bf h}_{m+1}} \Big)
\label{eq:hjl8fjs72}
\end{equation}
and by the induction assupmtion, 
\begin{equation}
[{\bf E}_{\pi(g)f}({\bf \bT})]_{{\bf h}_{m+1}} 
=
 {\bf h}_{m+1}\bullet [{\bf E}_{f}({\bf \bT})]_{{\bf h}_{m+1}} .
\label{eq:easy_setp34}
\end{equation}

If $m+1$ has $G_{m+1}\neq \RR$, then by assumption we have ${\bf A}_{m+1}={\bf I}$, and by Proposition \ref{prop:transform_mean_proj}
\begin{equation}
{\bf E}_{\pi(g)f}({\bf \bT}_{m+1})= {\bf g}_{m+1}\bullet {\bf E}_{f}({\bf \bT}_{m+1}).
\label{eq:secondffff1}
\end{equation}
In case $G_{m+1}=\RR$, we have by (\ref{eq:multi_e_transform})
\begin{equation}
\begin{split}
{\bf E}_{\pi(g)f}({\bf \bT}_{m+1})= & {\bf e}_{\pi(g)f}({\bf \bT}_{m+1})= {\bf A}_{m+1}({\bf h}_{m+1}){\bf e}_{f}({\bf \bT}_{m+1}) + {\bf g}_{m+1}\\
= & {\bf A}_{m+1}({\bf h}_{m+1}){\bf E}_{f}({\bf \bT}_{m+1}) + {\bf g}_{m+1}
\end{split}
\label{eq:secondffff2}
\end{equation}
Equations (\ref{eq:secondffff1}) and (\ref{eq:secondffff2}) give the leftmost coordinate ${\bf g}_{m+1}$ of the group product in (\ref{eq:LEMMA:proj_trans00}), and (\ref{eq:easy_setp34}) is the remaining coordinates ${\bf h}_{m+1}$, which proves (\ref{eq:LEMMA:proj_trans00}).

\end{proof}

\begin{proof}[Proof of Proposition \ref{ldfgstrh45r7}]

First consider the case where $G_m\neq\RR$.
By the assumption that ${\bf A}_m({\bf h}_m)={\bf I}$, the global variance is $\Sigma^{{\bf W}_m}_f({\bf \bT}_m)=\s^{{\bf W}_m}_f({\bf \bT}_m)$. Moreover, by Propositions \ref{Prop:scalar_var_translation1} and \ref{Prop:scalar_var_translation2}, $\Sigma^{{\bf W}_m}_f({\bf \bT}_m)$ is constant on orbits.

Next consider the case where $G_m=\RR$.
Denote $y=\pi({\bf E}_{f}({\bf \bT})^{-1})f$. 
By Definition \ref{def:scalar_global_multi-variance222}, and by the fact that ${\bf A}_m(\cdot)$ is a group action of $H_m$,
\[{\bf A}_m\big({\bf E}_{f}({\bf \bT})_{{\bf h}_m}\big){\bf A}^{-1}_m(f) =
{\bf A}_m\big({\bf E}_{f}({\bf \bT})_{{\bf h}_m}\big){\bf A}_m\big(\left[{\bf E}_f({\bf \bT})^{-1}\right]_{{\bf h}_m}\big) ={\bf I}.\]
Therefore, by (\ref{scalar_var_translation1})   
\begin{equation}
\Sigma^{{\bf W}_m}_{f}({\bf \bT}_m)\  =\  
\s^{{\bf A}^{-1}_m(f)\ {\bf W}_m\ {\bf A}^{-1}_m(f)^*}_{\pi\big({\bf E}_{f}({\bf \bT})\big)y}({\bf \bT}_m)
\ =\  \s^{{\bf W}_m}_{y}({\bf \bT}_m).
\label{dfgfghft756uyghdrtyr62}
\end{equation}
Let $\pi(g)f$ be some element in the orbit $\pi(G_z)f$. Then, by Lemma \ref{LEMMA:proj_trans}
\begin{equation}
\label{eq:5455565jh400}
{\bf A}^{-1}_m(\pi(g)f)
=
{\bf A}_m(\left[{\bf E}_{\pi(g)f}({\bf \bT})^{-1}\right]_{{\bf h}_m})
=
{\bf A}_m(\left[{\bf E}_{f}({\bf \bT})^{-1 }\bullet  {\bf g}^{-1}\right]_{{\bf h}_m})
\end{equation}
so
\begin{equation}
\Sigma^{{\bf W}_m}_{\pi(g)f}({\bf \bT}_m) =
\s^{{\bf A}_m(\left[{\bf E}_{f}({\bf \bT})^{-1 }\bullet  {\bf g}^{-1}\right]_{{\bf h}_m}){\bf W}_m{\bf A}_m(\left[{\bf E}_{f}({\bf \bT})^{-1 }\bullet  {\bf g}^{-1}\right]_{{\bf h}_m})^*}_{\pi(g)f}({\bf \bT}_m).
\label{eq:5455565jh4}
\end{equation}
Note that $\pi(g)f=\pi\big({\bf g}\bullet E_{f}({\bf \bT})\big)y$, so by (\ref{scalar_var_translation1}) and by the fact that ${\bf A}_m(\cdot)$ is a group action, (\ref{eq:5455565jh4})  gives
\begin{equation}
\Sigma^{{\bf W}_m}_{\pi(g)f}({\bf \bT}_m) = \s^{{\bf W}_m}_{y}({\bf \bT}_m).
\label{eq:hhmbc4gaa}
\end{equation}
Here, (\ref{eq:hhmbc4gaa}) is true by
\[ {\bf A}_m(  {\bf h}_m   \bullet  {\bf E}_{f}({\bf \bT})_{{\bf h}_m}){\bf A}_m(\left[{\bf E}_{f}({\bf \bT})^{-1 }\bullet  {\bf g}^{-1}\right]_{{\bf h}_m})= {\bf I}.\]
To conclude (\ref{dfgfghft756uyghdrtyr62}) and (\ref{eq:hhmbc4gaa}),
\[\Sigma^{{\bf W}_m}_{\pi(g)f}({\bf \bT}_m) =  \Sigma^{{\bf W}_m}_{f}({\bf \bT}_m).\]
Hence, $\Sigma^{{\bf W}_m}_{f}({\bf \bT}_m)$ is constant on orbits, and equal to $\s^{{\bf W}_m}_{y}({\bf \bT}_m)$, where $y$ is unique for each orbit $\pi(G_z)f$. 
The expected value (\ref{eq:orbit_exp_val}) of $y$ follows (\ref{eq:multi_e_transform}) and (\ref{eq:multi_e_transform22}).
\end{proof}

\subsubsection{The unitary case}
\label{Global uncertainties2}

Next we treat the unitary case, where $G_m$ is $e^{i\RR}$ or $e^{2\pi i\ZZ/N}$.
In this section we restrict ourselves to scalar variances along the axis, namely $\s^{{\bf W}_m}({\bf\bT}_m)=\sum_{k=1}^{K_m}w_m^k\s_f(\bT_m^k)$, for some scalar weights $w_m^k$.
 First note that by the unitarity of each $\bT^k_m$, we have $\s_f(\bT_m^k)=1-\abs{e_f(\bT_m^k)}^2$. 
Therefore, it is enough to study $e_f(\bT_m^k)$, $k=1,\ldots,K_m$.
Consider the $quantity_m$ transform $\Psi_m:\cH\rightarrow L^2(N_m\times{\cal Y}_m)$ guaranteed by Theorem \ref{Theorem:pull_trans2}. 
In the notation of Theorem \ref{Theorem:pull_trans2}, we have
\[L_{X_m}({\bf g}_m)= \Psi_m \pi_m(g) \Psi_m^*,\]
and ${\bf\bT}_m=\Psi_m^* {\bf\bQ}_{X_m} \Psi_m$. 
This means that we can pull forward the whole discussion from the canonical system $\{N_m,\pi_m,{\bf T}_m,\hat{N}_m,\breve{\pi}_m,{\bf \bT}_m\}$ to the concrete $quantity_m$ domain, with the standard translation $L_{X_m}({\bf g}_m)$ and standard multi-observable ${\bf\bQ}_{X_m}$. 
Since expected values and variances are based on inner products, they are invariant under isometric isomorphisms, and we have the following property.
\begin{proposition}
\label{gamma_dilation0}
Under the above construction,
\[{\bf e}_f({\bf \bT}_m) = {\bf e}_{ \Psi_m f}({\bf \bQ}_{X_m}) \quad ,\quad {\boldsymbol{\s}}_f({\bf \bT}_m) = {\boldsymbol{\s}}_{\Psi_m f}({\bf \bQ}_{X_m})  \quad ,\quad   \s^{{\bf W}_m}_f({\bf \bT}_m) = \s^{{\bf W}_m}_{\Psi_m f}({\bf \bQ}_{X_m}).\]
\end{proposition}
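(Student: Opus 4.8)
The plan is to use the single structural fact at our disposal: $\Psi_m$ is an isometric isomorphism, so $\Psi_m^*=\Psi_m^{-1}$, and inner products in $\cH$ and in $L^2(X_m)$ correspond under $\Psi_m$; together with the identity ${\bf\bT}_m = \Psi_m^*{\bf\bQ}_{X_m}\Psi_m$, read coordinatewise as $\bT_m^k = \Psi_m^*\bQ_{X_m}^k\Psi_m$ for $k=1,\ldots,K_m$, where $\bQ_{X_m}^k$ is the $k$-th coordinate multiplicative operator on $L^2(X_m)$. Since we are in the unitary case, all the operators involved are unitary, hence bounded and everywhere defined, so no domain bookkeeping is needed.

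First I would handle the multi-expected value. For each $k$,
\[
e_f(\bT_m^k) = \ip{\bT_m^k f}{f}_{\cH} = \ip{\Psi_m^*\bQ_{X_m}^k\Psi_m f}{f}_{\cH} = \ip{\bQ_{X_m}^k\Psi_m f}{\Psi_m f}_{L^2(X_m)} = e_{\Psi_m f}(\bQ_{X_m}^k),
\]
the third equality being the adjoint relation $\ip{\Psi_m^* u}{v}_{\cH} = \ip{u}{\Psi_m v}_{L^2(X_m)}$. Collecting over $k$ gives ${\bf e}_f({\bf\bT}_m) = {\bf e}_{\Psi_m f}({\bf\bQ}_{X_m})$.

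Next I would handle the multi-covariance. From the previous identity, $\big(\bT_m^k - e_f(\bT_m^k)\big)f = \Psi_m^*\big(\bQ_{X_m}^k - e_{\Psi_m f}(\bQ_{X_m}^k)\big)\Psi_m f$, and likewise for $k'$. Substituting into the definition of $[\boldsymbol{\s}_f({\bf\bT}_m)]_{k,k'}$ and using $\ip{\Psi_m^* u}{\Psi_m^* v}_{\cH} = \ip{u}{v}_{L^2(X_m)}$ (valid since $\Psi_m\Psi_m^* = I$), I obtain
\[
[\boldsymbol{\s}_f({\bf\bT}_m)]_{k,k'} = \ip{\big(\bQ_{X_m}^k - e_{\Psi_m f}(\bQ_{X_m}^k)\big)\Psi_m f}{\big(\bQ_{X_m}^{k'} - e_{\Psi_m f}(\bQ_{X_m}^{k'})\big)\Psi_m f}_{L^2(X_m)} = [\boldsymbol{\s}_{\Psi_m f}({\bf\bQ}_{X_m})]_{k,k'},
\]
so the two covariance matrices coincide. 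The scalar-variance identity then follows immediately, since $\s^{{\bf W}_m}_f({\bf\bT}_m) = \ip{{\bf W}_m}{\boldsymbol{\s}_f({\bf\bT}_m)}_{\rm F}$ depends on $f$ only through the covariance matrix, which has just been shown to equal $\boldsymbol{\s}_{\Psi_m f}({\bf\bQ}_{X_m})$.

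I do not anticipate a real obstacle here: the statement is a direct consequence of the unitary invariance of inner products, and the only thing requiring care is keeping the coordinatewise and Frobenius-inner-product manipulations straight, together with the observation that no domain issues arise because the canonical observables in the unitary case are bounded.
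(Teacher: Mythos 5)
Your proof is correct and follows exactly the route the paper takes: the paper simply remarks that expected values and variances are built from inner products and hence invariant under the isometric isomorphism $\Psi_m$, which is precisely the computation you carry out in detail via ${\bf\bT}_m=\Psi_m^*{\bf\bQ}_{X_m}\Psi_m$ and the adjoint relations. Your extra observations (no domain issues in the unitary case, the scalar variance depending on $f$ only through the covariance matrix) are accurate and harmless elaborations.
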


By Proposition \ref{gamma_dilation0}, it is enough to study the localization of ${\bf \bQ}_{x_m}$ in $L^2(N_m\times{\cal Y}_m)$. The inner product in $L^2(N_m\times{\cal Y}_m)$ is based on integration (along the $N_m$ axis),  which is based on additions. Since the group multiplication in the unitary case is the arithmetic product, the calculation of the variances is only consistent with the group multiplication in the self-adjoint case, where $G_m$ is  $\RR$ or $\ZZ$ and $\bullet$ is $+$. Hence, there are no localization transformation properties for unitary observables analogous to Proposition \ref{Prop:scalar_var_translation1} in case ${\bf A}_m({\bf h}_m)\neq {\bf I}$.
Defining global variances in the unitary case, invariant on orbits, requires a different approach.

Since Proposition \ref{gamma_dilation0} allows to restrict the analysis to the space $L^2(N_m\times{\cal Y}_m)$ and the multi-observable ${\bf\bQ}_{X_m}$, we omit the subscript $m$, and simply denote the space by $L^2(G^K\times{\cal Y})$, and the multi-observable by ${\bf \bQ}=(\bQ_1,\ldots,\bQ_K)$. Here, $G$ is the physical quantity $e^{i\RR}$ or $e^{2\pi i\ZZ/N}$, and $\bQ_k f(g_1,\ldots g_K,y) = g_k f(g_1,\ldots g_K,y)$. We assume without loss of generality that the signal space is $\cH=L^2(G^K\times{\cal Y})$.

Denote the Fourier transform in $L^2(G^K)$ by $\cF_{G^K}:L^2(G^K)\rightarrow L^2(\hat{G}^K)$. Note that $\bQ_1,\ldots,\bQ_K$ are multiplications by characters of $G^K$, independent of the variable $y$. Thus, by abuse of notation, we treat each $\bQ_k$ as the function $\bQ_k(g_1,\ldots g_K,y)=\bQ_k(g_1,\ldots g_K)=g_k$. Note that each $\bQ_k$ can be treated as the unit frequency element of $\cF_{G^K}$ along the $k$ axis. 
Consider the calculation of the expected values 
\begin{equation}
\begin{split}
{\bf e}_f({\bf \bQ})= &  \Big(\ip{\bQ_1f}{f},\ldots, \ip{\bQ_Kf}{f}\Big)\\
= & \Big(\int_{\cal Y}\int_{G^k} \bQ_1({\bf g},y) \abs{f({\bf g},y) }^2 d {\bf g} dy ,\ldots, \int_{\cal Y}\int_{G^k} \bQ_K({\bf g},y) \abs{f({\bf g},y) }^2 d {\bf g} dy \Big)\\
= & \Big(\int_{G^k} \bQ_1({\bf g}) \int_{\cal Y}\abs{f({\bf g},y) }^2 dy d {\bf g} ,\ldots, \int_{G^k} \bQ_K({\bf g}) \int_{\cal Y}\abs{f({\bf g},y) }^2 dy  d {\bf g} \Big)
\end{split}
\label{eq:r66756544b6h}
\end{equation}
Consider the function $F\in L^1(G^K)$, defined by $F({\bf g})=\int_{\cal Y}\abs{f({\bf g},y)}^2dy$.
By (\ref{eq:r66756544b6h}), the expected values ${\bf e}_f({\bf \bQ})$
are the $-1$ Fourier coefficients, along the axis of $G^K$, of the function $F$. Namely,
\[{\bf e}_f({\bf \bQ})=\Big([\cF_{G^K}F](-1,0,\ldots,0),\ldots, [\cF_{G^K}F](0,\ldots,0,-1)\Big),\]
where $(0,\ldots,0,-1,0,\ldots,0)$ denotes the character $\overline{\bQ_k}$ in coordinates of $\hat{G}^K$.
By Remark \ref{remark:MatrixAuto}, $A_m({\bf h}_m,\cdot)$ is written as the matrix ${\bf A}_m({\bf h}_m)$,
with homomorphisms ${\bf a}_{k',k}({\bf h}_m):G\rightarrow G$ as entries. These homomorphisms are multiplication of the exponent by the real numbers $a_{k',k}({\bf h}_m)$, namely 
\[G\ni e^{i\w}\mapsto {\bf a}_{k',k}({\bf h}_m)(e^{i\w}) =e^{ia_{k',k}({\bf h_m})\w}\in G.\]
Consider the orbit ${\bf A}_m(H_m)\bQ_k$. Note that homomorphisms of $G$, applied on the value of characters of $G^K$, map them to characters. Namely, for the character $\bQ_{k}\in \chi(G)$,
\[{\bf a}_{k',k}({\bf h}_m)\circ\bQ_{k}\in \chi(G).\]
 Thus, the collection of entries of the orbit ${\bf A}_m(H_m){\bf\bQ}$, is a set of characters. By the Heisenberg point of view (\ref{Hpov2}), the orbit of expected values 
\[{\cal O}_f^m=\{e_{\pi(g)f}(\bQ_m^k)\ |\ k=1,\ldots,K_m  \ \ , \ \ g\in G_k \}\]
 is a set of values of $\cF_{G^K}F$. Since we are interested in defining a variance over the whole orbit, we define $\Sigma_f({\bf\bT})$ as some norm of the Fourier coefficients of $F$ in ${\cal O}_f^m$.

\begin{remark}
Consider the special case where ${\cal Y}=\{1\}$, and the collection of entries of the orbit ${\bf A}_m(H_m){\bf\bQ}$ are all of the frequencies. In this case we define the mean square average expected value
\begin{equation}
{\cal E}_f({\bf\bQ}) = \norm{\cF_{G^K}\abs{f}^2}^2_2.
\label{eq:Global_var_unitary}
\end{equation}
By Parseval's theorem we can calculate (\ref{eq:Global_var_unitary}) in the $G^K$ domain by
 \begin{equation}
{\cal E}_f({\bf\bQ}) =\norm{\abs{f}^2}^2_2=\int_{G^{K}}\abs{f(\bf g)}^4d{\bf g},
 \label{eq:hhhhhhhhhh654}
 \end{equation} 
and define the global variance  by
\[\Sigma_f({\bf\bQ}) =1-\abs{\int_{G^{K}}\abs{f(\bf g)}^4d{\bf g}}^2.\]
 Note that we want to minimize $\Sigma_f({\bf\bQ})$ under $\norm{f}^2=1$, so the definition promotes localization. 
\end{remark}

\subsection{Examples}

In this section we give five examples. First,
the observables of the STFT from Subsection \ref{Motivation for defining new uncertainty principles} and the observables of FSTFT from Subsection \ref{Localization framework for finite short-time-Fourier-tansform} constitute canonical multi-observables. 
Additionally, the global multi-observable of the 1D wavelet transform was developed in Subsection \ref{The global localization framework for the wavelet transform}. Next we develop the localization theory of the Shearlet transform and the finite wavelet transform.

\subsubsection{The Shearlet transform}
\label{The Shearlet transform}

The Shearlet transform is a modification of the Curvelet transform, making it a SPWT. The modification is based on replacing rotations with shears.
Hence, the Shearlet transform comprises translations, shears, and anisotropic dilations of a window in $L^2(\RR^2)$ \cite{Shearlet}. In \cite{Affine_uncertainty2}, the Shearlet transform was studied as a generalized wavelet transform, including the group structure and the representation generators. There, for the localization notions, the canonical observables were defined to be the generators of the representations $\pi_m$. In this section we apply our localization theory for the Shearlet transform.

Translation by ${\bf g}_1\in \RR\times\RR$ is defined as usual by $\pi_1({\bf g}_1)f({\bf x})=f({\bf x}-{\bf g}_1)$. Consider the shear matrix operator, with $g_2\in\RR$,
\[S_{g_2}= \left(
\begin{array}{cc}
	1 & g_2\\
	0 & 1
\end{array}
\right).\] 
Shear by $g_2\in\RR$ of $L^2(\RR^2)$ functions is defined by
\[\pi_2(g_2)f({\bf x}) = f(S_{g_2}^{-1}{\bf x}).\]
Consider the anisotropic dilation matrix operator, with $g_3\in\RR$,
\[D_{g_3}= \left(
\begin{array}{cc}
	e^{g_3} & 0\\
	0 & e^{\frac{1}{2}g_3}
\end{array}
\right).\] 
Anisotropic dilation by $g_3\in\RR$ of $L^2(\RR^2)$ functions is defined by
\[\pi_3(g_3)f({\bf x}) = e^{-\frac{3}{4}g_3} f(D_{g_3}^{-1}{\bf x}).\]
Last, consider the reflection by $g_4\in\{-1,1\}$
\[\pi_2(g_4)f({\bf x}) = f(g_4{\bf x}).\]
Note that the standard definition of the Shearlet transform is based on the anisotropic dilation, with $g_3\in\RR$,
\[\tilde{D}_{g_3}= \left(
\begin{array}{cc}
	g_3 & 0\\
	0 & {\rm sign}({g_3})\sqrt{\abs{g_3}}
\end{array}
\right).\] 
The standard definition incorporates dilations and reflections. Our version gives rise to a Shearlet group that is isomorphic to the standard Shearlet group, and compatible with our localization theory.

The Shearlet group 
\begin{equation}
\begin{split}
G=&\big(translations\big)\rtimes\big(shears\rtimes (dilations\times reflections)\big)\\
=&\big(\RR\times\RR\big)\rtimes\big(\RR\rtimes (\RR\times \{-1,1\})\big)
\end{split}
\label{eq:f5hss6ppp}
\end{equation}
has an empty center, and is represented in the Shearlet transform by
\[\pi(g)=\pi_1(g_1)\pi_2(g_2)\pi_3(g_3)\pi_4(g_4).\]
In the notation of Assamption \ref{ass_voice2}, we have $G_1=G_2=G_3=\RR$, $G_4=\{-1,1\}$,  $N_1=G_1\times G_1$, $N_2=G_2$, $N_3=G_3$, $N_4=G_4$, and
\[G=\big(G_1\times G_1\big)\rtimes\big(G_2\rtimes (G_3\rtimes G_4)\big)\]
where the last $\rtimes$ is actually $\times$.
The actions ${\bf A}_m({\bf h}_m)$ in the semi-direct product group structure are given next. 
Denote the reflection matrix operator $I_{g_4}=g_4I$, and observe
\[D_{g_3}I_{g_4}S_{g_2}I_{g_4}D_{-g_3} = S_{g_4e^{\frac{1}{2}g_3}g_2},\]
so
\[{\bf A}_2(g_3,g_4)g_2=g_4e^{\frac{1}{2}g_3}\ g_2.\]
To calculate ${\bf A}_1(g_2,g_3,g_4)$, observe 
\[f\Big([I_{g_4}D_{-g_3}S_{-g_2}]^{-1}\big([I_{g_4}D_{-g_3}S_{-g_2}]{\bf x}-{\bf g}_1\big)\Big)
=f\left({\bf x}-[I_{g_4}D_{-g_3}S_{-g_2}]^{-1}{\bf g}_1\right)\]
so
\[{\bf A}_1(g_2,g_3,g_4){\bf g}_1=S_{g_2}D_{g_3}I_{g_4}{\bf g}_1=
g_4\left(
\begin{array}{cc}
	e^{g_3} & e^{\frac{1}{2}g_3}g_2\\
	0 & e^{\frac{1}{2}g_3}
\end{array}
\right){\bf g}_1.
\]

To construct a canonical multi-observable, we transform the discussion to the frequency domain. By (\ref{eq:General_B_Fourier}),
\[\hat{\pi}_2(g_2)\hat{f}(\boldsymbol{\w}):= \cF\pi_2(g_2)\cF^{-1} \hat{f}(\boldsymbol{\w}) = \hat{f}(\hat{S}_{g_2}^{-1}\boldsymbol{\w})\]
where $\hat{S}_{g_2}$ is the orthogonal shear, defined by
\[\hat{S}_{g_2}=\left(
\begin{array}{cc}
	1 & 0\\
	-g_2 & 1
\end{array}
\right).\]
Morefover,
\[\hat{\pi}_3(g_3)\hat{f}(\boldsymbol{\w}):= \cF\pi_3(g_3)\cF^{-1} \hat{f}(\boldsymbol{\w}) = e^{\frac{3}{4}g_3}\hat{f}(D_{g_3}\boldsymbol{\w}),\]
and $\hat{\pi}_4(g_4)\hat{f}(\boldsymbol{\w})= \hat{f}(g_4\boldsymbol{\w})$.
Let us define the  canonical multi-observable directly in the frequency domain. 
For translation, the natural definition is
\[{\bf \bT}_1 \hf(\boldsymbol{\w}) = \big(i\frac{\partial}{\partial \w_1}\hf(\boldsymbol{\w}),i\frac{\partial}{\partial \w_2}\hf(\boldsymbol{\w}) \big).\]
We call the physical quantity translated by shears \textit{slope}, and define the slope observable
\[\bT_2 \hf(\boldsymbol{\w})=-\frac{\w_2}{\w_1} \hf(\boldsymbol{\w}).\]
Note that the slope $-\frac{\w_2}{\w_1}$, corresponding to the point $(\w_1,\w_2)$ in the frequency domain,  is a measure of direction or angle.
For dilations, we take the anisotropic scale observable
\[\bT_3 \hf(\boldsymbol{\w})=-\ln(\abs{\w_1})\hf(\boldsymbol{\w}).\]
Last, for reflections
\[\bT_4 \hf(\boldsymbol{\w})={\rm sign}(\w_1)\hf(\boldsymbol{\w}).\]
It is straight forward to check that ${\bf \bT}$ is a canonical multi-observable.

By (\ref{eq:scalar_global_multi-variance1}), the global variances are defined to be
\begin{equation}
\Sigma^{{\bf W}_m}_{\hf}({\bf\bT}_m)= \s_{\hat{f}}^{A_m([e_{\hf}(\bT)^{-1}]_{{\bf h}_m}){\bf W}_m A_m([e_{\hf}(\bT)^{-1}]_{{\bf h}_m})^*}({\bf\bT}_m).
\label{eq:scalar_shear_V}
\end{equation}
To calculate (\ref{eq:scalar_shear_V}) in practice, we use the inversion formula in Remark \ref{ass_inversion2} on $[e_{\hf}(\bT)^{-1}]_{{\bf h}_m}$.
Recall that $({\bf g}_m,{\bf h}_m)^{-1}|_{{\bf g}_m}={\bf A}_m({\bf h}_m)^{-1}{\bf g}_m^{-1}$, so
\[({\bf g}_1,g_2,g_3,g_4)^{-1}=\Bigg( -g_4\left(
\begin{array}{cc}
	e^{g_3} & e^{\frac{1}{2}g_3}g_2\\
	0 & e^{\frac{1}{2}g_3}
\end{array}
\right)^{-1}{\bf g}_1, -e^{-\frac{1}{2}g_3}g_2,-g_3,g_4\Bigg).\]
Thus, the scalar global variance of ${\bf \bT}_1$, with weight ${\bf W}_1$, is given by
\[\Sigma^{{\bf W}_1}_{\hf}({\bf\bT}_1)= \s^{{\bf A}_1(f){\bf W}_1 {\bf A}_1(f)^* }_{\hf}({\bf \bT}_1),\]
where
\[{\bf A}_1(f)={\bf A}_1\left(-e^{-\frac{1}{2}e_{\hf}(\bT_3)}e_{\hf}(\bT_2),-e_{\hf}(\bT_3),e_{\hf}(\bT_4)\right).\]
The scalar global variance of $\bT_2$ is given by
\[\Sigma^{{\bf W}_2}_{\hf}({\bT}_2)= \s^{{\bf A}_2(f){\bf W}_2 {\bf A}_2(f)^* }_{\hf}({ \bT}_2),\]
where
\[{\bf A}_2(f)={\bf A}_2\left(-e_{\hf}(\bT_3),e_{\hf}(\bT_4)\right).\]
Last, $\Sigma^{{\bf W}_3}_{\hf}({\bT}_3)=\s^{{\bf W}_3}_{\hf}({\bT}_3)$.
Next we define the Shearlet global uncertainty.
Note that windows supported on the domain
\[\RR_+\times\RR = \{\boldsymbol{\w}\in\RR^2\ |\ \w_1>0\}\]
 are perfectly concentrated with respect to $\bT_4$. Thus, we restrict our search to windows supported on $\RR_+\times\RR$, and define
\[S(\hf)= \Sigma^{{\bf W}_1}_{\hf}({\bf\bT}_1)+\Sigma^{{\bf W}_2}_{\hf}({\bT}_2)+\Sigma^{{\bf W}_3}_{\hf}({\bT}_3)\]
for some choice of the weights ${\bf W}_1,{\bf W}_2,{\bf W}_3$.

Let us introduce the transformations corresponding to Theorem \ref{Theorem:pull_trans2}.
Shearing translates the variable $g_2=-\frac{\w_2}{\w_1}$, and keeps $\w_1$ constant. The inversion of this change of variable is $\w_2=-g_2\w_1$, and $\w_1$ kept unchanged. By normalizing this change of variable, we define the slope transform to be
\[\Psi:L^2(\RR^2)\rightarrow L^2(\RR^2) \quad , \quad[\Psi f](g_2,\w_1)=\abs{\w_1}^{\frac{1}{2}}\hf(\w_1,-g_2\w_1).\]
The anisotropic scale transform for $\hf$ supported on $\w_1>0$ is given in by
\[[{\cal W}_+ \hf](g_3,y)=e^{-\frac{3}{4}g_3}\hf(e^{-g_3},y e^{-\frac{1}{2}g_3}),\]
and for $\hf$ supported on $\w_1<0$ by
\[[{\cal W}_- \hf](g_3,y)=e^{-\frac{3}{4}g_3}\hf(-e^{-g_3},-y e^{-\frac{1}{2}g_3}).\]
Together, we define
\[{\cal W}:L^2(\RR^2)\rightarrow L^2(\RR^2)^2 \quad , \quad {\cal W}= {\cal W}_+ \oplus {\cal W}_-. \]

\subsubsection{The finite wavelet transform}

The following generalized wavelet transform can be traced back to \cite{finite_wavelet}.
Consider the set $G=e^{2\pi i\ZZ/N}$ for a prime number $N$. This set is a finite field with $e^{2\pi in/N}+e^{2\pi im/N}:=e^{2\pi i(n+m)/N}$ and $e^{2\pi in/N}\cdot e^{2\pi im/N}:=e^{2\pi inm/N}$. Consider the space $L^2(e^{2\pi i\ZZ/N})$. We define the finite wavelet transform directly in the frequency domain. Translations are defined in time by $\pi_1(g_1)f(q)=f(q-g_1)$, or in frequency by $\hat{\pi}_1(g_1)\hf(q)=q\cdot g_1\hf(q)$, where $q\cdot g_1$ is the multiplication in the field. Here $\hat{\pi}_1$ is a representation of the additive group $G_+$ of $e^{2\pi i\ZZ/N}$. Consider the multiplicative group $G_{\times}$ of $e^{2\pi i\ZZ/N}$, namely the group $e^{2\pi i\ZZ/N}\setminus\{0\}$ with the field's multiplication as the group product.
Dilations are defined by $\hat{\pi}_2(g_2)\hf(q)=\hf(g_2\cdot q)$, for $g_2\in G_{\times},q\in e^{2\pi i\ZZ/N}$. Here $\hat{\pi}_2$ is a representation of $G_{\times}$.
It can be shown that $\pi_1,\pi_2$ are unitary representations, and $\pi(g)=\pi_1(g_1)\pi_2(g_2)$ is a unitary representation of the finite affine group 
\[translations\rtimes dilations= G_+\rtimes G_{\times}.\]
 In the following we represent elements $e^{2\pi i n/N}$ in short by $n$.
The representation $\pi$ has two irreducible subspaces, namely 
\begin{equation}
\begin{split}
\cH_0= & \{f\in L^2(e^{2\pi i\ZZ/N})\ |\ \forall\ n\neq 0, \hf(n)=0 \ \} \\
\cH_1= & \{f\in L^2(e^{2\pi i\ZZ/N})\ |\ \hf(0)=0\}.
\end{split}
\label{eq:}
\end{equation}
As a representation of a finite group, irreducible in each of $\cH_0$ and $\cH_1$, $\pi$ satisfies Assumption \ref{ass_voice2} in each of these irreducible subspaces. Moreover, since $\pi(g)|_{\cH_0}$ and $\pi(g)|_{\cH_1}$ are not unitarily equivalent, by finite group representation theory, $V_{\cH_0}[\cH_0]$ and $V_{\cH_1}[\cH_1]$ are orthogonal subspaces of $L^2(G)$, and the reconstruction formula 7 of Assumption \ref{ass_voice} holds also in the reducible space $ L^2(e^{2\pi i\ZZ/N})$ (this is by the canonical decomposition of the representation $\pi$, see e.g \cite{Finite_representation_book}).

The following choice of $\bT_1$ and $\bT_2$ is a multi-canonical observable. Define $[\bT_1f](n)=e^{2\pi in/N}f(n)$ in time,
or $[\cF\bT_1\cF^{-1}\hf](n)=\hf(n-1)$ in frequency.
For scale, define
$[\cF\bT_2\cF^{-1}\hf](2^m)=e^{2\pi i m/N}\hf(2^m)$, where $2^m$ is defined using the multiplication in the field $e^{2\pi i\ZZ/N}$, and $m\in\ZZ$. Note that for $m\in \ZZ$, $2^m$ exhausts the elements in $e^{2\pi i\ZZ/N}$, so $\bT_2$ is well defined. Since $2^m$ is an ``exponential scale'' of the doamin of definition of $\hf$, and $e^{2\pi i m/N}$ is in a ``linear scale'' in the image of $\hf$, the scale observable $\bT_2$ is interpreted as a frequency logarithmic observable.

It can be shown that the orbit of ${\bf A}_2(g_2)\bT_1$ are the multiplicative operators by all of the characters of $G_+$, except for the unit character. Thus, by (\ref{eq:Global_var_unitary}), we define the global scalar variance $\Sigma_f(\bT_1)= 1-\Big(\sum_{n=0}^{N-1} \abs{f(n)}^4\Big)^2$, for $f\in\cH_1$. For scale, we define the global scalar variance $\Sigma_f(\bT_2)=\s_f(\bT_2)$. As a result of the construction in Subsection \ref{Global uncertainties2},  $\Sigma_f(\bT_1)$ and $\Sigma_f(\bT_2)$ are invariant on orbits. We define the global uncertainty of the window $f$ by
$S(f)=w_1\Sigma_f(\bT_1)+w_2\Sigma_f(\bT_2)$ for some weights $w_1,w_2\in\RR_+$.  

\section{Uncertainty minimizers as sparsifying windows}
\label{Uncertainty minimizers as sparsifying windows}

In this section we show how the global localization framework lends itself to estimating the localization of ambiguity functions, which control the sparsity of the wavelet transform in some sense. 

\subsection{Ambiguity functions}
Consider a SPWT.  Given a window $f\in \cH$, it's ambiguity function is defined to be $V_f[f]$. The ambiguity function accommodates an important property given in Proposition \ref{Amb_is_kernel} below. This property relies on convolution of $L^2(G)$ functions, defined for $F,Q\in L^2(G)$ by
	\[[F*Q](g)= \int F(q^{-1}\bullet g)Q(q)d\mu(q).\]
		Here, $d\mu(q)$ is the left Haar measure of $G$.
To gain intuition on $F*Q$, we can adopt the usual signal processing interpretation of convolution. Namely, $F*Q$ is interpreted as ``filtering, or blurring, $Q$ using the kernel $F$''.		The following proposition can be found e.g in \cite{Fuhr_wavelet}.
		
\begin{proposition}
\label{Amb_is_kernel}
Let $f\in\cH$ be an admissible window with $\norm{Af}=1$, where $A$ is the Duflo-Moore operator (see 7 of Assumption \ref{ass_voice}, and Remark \ref{ass_voice_square}). Consider the image space of the wavelet transform, $V_f[\cH]$. Then
	$V_fV_f^*$ is the orthogonal projection $L^2(G)\rightarrow V_f[\cH]$. Moreover, for any $Q\in L^2(G)$, $V_fV_f^*[Q] =V_f[f]*Q$.
\end{proposition}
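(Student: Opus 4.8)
The plan is to exploit the reproducing-kernel structure that follows from the orthogonality relations in item \ref{resolution_id} of Assumption \ref{ass_voice}, specialized to $h=f$ with $\norm{Af}=1$. First I would establish that $V_f^*V_f = I_{\cH}$, i.e. that $V_f$ is an isometry onto its image $V_f[\cH]$. This is immediate from the orthogonality relation: taking $h=f$ and $q=s$, one gets $\ip{Af}{Af}_{\cH}\ip{s}{s}_{\cH} = \ip{V_f[s]}{V_f[s]}_{L^2(G)}$, and since $\norm{Af}=1$ this says $\norm{s}_{\cH}^2 = \norm{V_f[s]}_{L^2(G)}^2$. Polarizing, $\ip{s}{q}_{\cH} = \ip{V_f[s]}{V_f[q]}_{L^2(G)} = \ip{V_f^*V_f[s]}{q}_{\cH}$ for all $s,q$, hence $V_f^*V_f = I$. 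Consequently $P := V_fV_f^*$ satisfies $P^2 = V_f(V_f^*V_f)V_f^* = V_fV_f^* = P$, and $P$ is self-adjoint, so $P$ is an orthogonal projection; its range is clearly contained in $V_f[\cH]$, and on $V_f[s]$ we have $P V_f[s] = V_f(V_f^*V_f)[s] = V_f[s]$, so the range is exactly $V_f[\cH]$. This settles the first assertion.

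For the second assertion, I would compute $V_fV_f^*[Q]$ for $Q\in L^2(G)$ pointwise in $g\in G$. Using the synthesis formula (\ref{eq:wave_inversion1}), $V_f^*[Q] = \int Q(q)\,\pi(q)f\,dq$ (with the integral in the weak sense), and then
\[
\big[V_fV_f^*[Q]\big](g) = \ip{V_f^*[Q]}{\pi(g)f}_{\cH} = \int Q(q)\,\ip{\pi(q)f}{\pi(g)f}_{\cH}\,dq.
\]
Now rewrite the inner product: $\ip{\pi(q)f}{\pi(g)f} = \ip{f}{\pi(q)^*\pi(g)f}$. Here I need $\pi(q)^*\pi(g) = \pi(q^{-1}\bullet g)$; this holds because $\pi$ is a (square integrable) representation of the group $G$ in the SPWT setting of Assumption \ref{ass_voice2} — note Proposition \ref{Amb_is_kernel} is stated with $A$ the Duflo-Moore operator, so we are in exactly that setting. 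Thus $\ip{\pi(q)f}{\pi(g)f} = \ip{f}{\pi(q^{-1}\bullet g)f} = \overline{\ip{\pi(q^{-1}\bullet g)f}{f}} $; being slightly careful with conjugation, $\ip{\pi(q)f}{\pi(g)f} = \big(V_f[f]\big)(q^{-1}\bullet g)$ by the definition $V_f[f](g) = \ip{f}{\pi(g)f}$ — I will match the complex conjugation conventions in (\ref{eq:1}) at this point. Substituting back,
\[
\big[V_fV_f^*[Q]\big](g) = \int \big(V_f[f]\big)(q^{-1}\bullet g)\,Q(q)\,dq = \big[V_f[f]*Q\big](g),
\]
which is the claimed identity.

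The main obstacle I anticipate is not any single hard estimate but rather the bookkeeping of conventions: the complex-conjugation placement in the definition (\ref{eq:1}) of $V_f$, the direction of the convolution (the paper's convolution is $[F*Q](g) = \int F(q^{-1}\bullet g)Q(q)\,d\mu(q)$, so I must make sure the kernel slot really receives the argument $q^{-1}\bullet g$ and not $g\bullet q^{-1}$), and the use of the left Haar measure so that the substitution $q \mapsto q$ in the weak integral is legitimate. A secondary technical point is justifying the interchange of the inner product with the weakly-defined integral $\int Q(q)\pi(q)f\,dq$; this is precisely the sense in which (\ref{eq:wave_inversion1}) is defined, so pulling $\ip{\cdot}{\pi(g)f}$ inside is exactly the defining property of the weak integral and needs only a one-line remark. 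Finally I should note the $L^2$-convergence of $V_f[f]*Q$: since $Q\in L^2(G)$ and $V_f[f] = V_f[f] \in L^2(G)$ (as $f$ is admissible with $\norm{Af}=1$), and $V_fV_f^*$ maps $L^2(G)$ to $L^2(G)$, the convolution is well defined in $L^2(G)$; this can be cited from the reference \cite{Fuhr_wavelet} if a self-contained argument is deemed too long.
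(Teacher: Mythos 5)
Your proof is correct, and it is essentially the standard argument: the paper itself does not prove this proposition but cites it from \cite{Fuhr_wavelet}, and what you wrote is the usual derivation found there. The isometry step $V_f^*V_f=I$ from item 7 of Assumption \ref{ass_voice} with $h=f$, $\norm{Af}=1$, followed by the idempotence/self-adjointness of $V_fV_f^*$, settles the projection claim (the range $V_f[\cH]$ is closed precisely because $V_f$ is an isometry, which you should state explicitly). Your pointwise computation of $V_fV_f^*[Q]$ is also right, and you correctly identify the one place where extra structure is needed: the identity $\pi(q)^*\pi(g)=\pi(q^{-1}\bullet g)$ uses that $\pi$ is a genuine group representation, which holds in the SPWT/square-integrable setting in which Section \ref{Uncertainty minimizers as sparsifying windows} is placed (the convolution itself is only defined via the left Haar measure of $G$, so the group case is the only one where the second assertion even makes sense). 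Note that no complex conjugation issue actually arises: $\ip{\pi(q)f}{\pi(g)f}=\ip{f}{\pi(q^{-1}\bullet g)f}=V_f[f](q^{-1}\bullet g)$ directly matches the kernel slot of the paper's convolution $[F*Q](g)=\int F(q^{-1}\bullet g)Q(q)\,d\mu(q)$, so your hedge there can be dropped.
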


The following is a result of Proposition \ref{Amb_is_kernel}.
	\begin{corollary}
	\label{Amb_is_kernel2}
	Let $f\in\cH$ be an admissible window with $\norm{A f}=1$. Then the image space of the wavelet transform, $V_f[\cH]$, is a reproducing kernel Hilbert space with kernel $V_f[f]$. Precisely, for any $Q\in V_f[\cH]$, $Q= V_f[f]*Q$.
	\end{corollary}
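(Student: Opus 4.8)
The plan is to read the statement off directly from Proposition~\ref{Amb_is_kernel}, which already contains all the substance. First I would recall that, under the normalization $\norm{Af}=1$, Proposition~\ref{Amb_is_kernel} asserts two things: (i) the operator $V_fV_f^*$ on $L^2(G)$ is the orthogonal projection onto the closed subspace $V_f[\cH]$, and (ii) for every $Q\in L^2(G)$ one has $V_fV_f^*[Q]=V_f[f]*Q$. Combining these is immediate: if $Q\in V_f[\cH]$, then applying the orthogonal projection onto $V_f[\cH]$ fixes $Q$, so $V_fV_f^*[Q]=Q$; on the other hand (ii) gives $V_fV_f^*[Q]=V_f[f]*Q$. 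Equating the two yields $Q=V_f[f]*Q$, which is exactly the reproducing identity claimed.

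Second I would spell out why this identity exhibits $V_f[\cH]$ as a reproducing kernel Hilbert space. Writing out the convolution, for $Q\in V_f[\cH]$ and $g\in G$,
\[
Q(g)=\big[V_f[f]*Q\big](g)=\int_G V_f[f](q^{-1}\bullet g)\,Q(q)\,d\mu(q)=\ip{Q}{K_g}_{L^2(G)},
\]
where $K_g(q):=\overline{V_f[f](q^{-1}\bullet g)}$. Thus evaluation at $g$ is realized by the inner product against the fixed element $K_g\in L^2(G)$, and the reproducing kernel $(g,g')\mapsto \overline{K_{g'}(g)}$ is built entirely from the ambiguity function $V_f[f]$ --- this is the sense in which ``the kernel is $V_f[f]$''. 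It remains only to note that $K_g\in L^2(G)$ and that $Q\mapsto Q(g)$ is a bounded functional on $V_f[\cH]$; both follow from $V_f[f]\in L^2(G)$ together with left-invariance of the Haar measure $d\mu$, which makes $q\mapsto V_f[f](q^{-1}\bullet g)$ an $L^2$ function of the same norm as $V_f[f]$, so that Cauchy--Schwarz gives $\abs{Q(g)}\le \norm{V_f[f]}_2\,\norm{Q}_2$.

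There is essentially no obstacle here: the content sits entirely in Proposition~\ref{Amb_is_kernel}, and the corollary is a two-line consequence of the fact that an orthogonal projection acts as the identity on its range. The only mildly technical point is the measure-theoretic bookkeeping in the second paragraph --- tracking complex conjugates through the convolution and invoking invariance of $d\mu$ to see that point evaluations are bounded --- but this is routine and needs nothing beyond what has already been set up.
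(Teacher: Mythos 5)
Your argument is exactly the paper's: the corollary is stated there as an immediate consequence of Proposition~\ref{Amb_is_kernel}, obtained precisely by noting that the orthogonal projection $V_fV_f^*$ fixes elements of its range $V_f[\cH]$ and equals convolution with $V_f[f]$. The only blemish is in your supplementary paragraph (which goes beyond what the corollary asserts): for non-unimodular $G$ such as the affine group, $q\mapsto V_f[f](q^{-1}\bullet g)$ does not have the same $L^2$ norm as $V_f[f]$ because a modular-function factor appears under the substitution; boundedness of point evaluation is better obtained directly from $\abs{Q(g)}=\abs{\ip{s}{\pi(g)f}}\leq\norm{s}\,\norm{f}=\norm{Q}_{L^2(G)}\norm{f}$.
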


By the ``blurring'' interpretation of the convolution, Corollary \ref{Amb_is_kernel2} is interpreted as follows. ``Any function in $V_f[\cH]$ is blurry, with the blurring kernel $V_f[f]$'', or ``the pixels of $V_f[\cH]$ are based on the point spread function $V_f[f]$''. This interpretation demonstrates the utility in well localized ambiguity functions. Namely, the more $V_f[f]$ is localized, the less each value of $V_f[s]$ is correlated with its neighbors, and thus the more ``information'' each value of $V_f[s]$ carries. 

{ Many papers studied the spread of the ambiguity function in phase space, in the special case of the STFT. For example, \cite{Amb_uncertainty} extended time-frequency uncertainty principles to the ambiguity function, and \cite{Amb_mini} proposed a variational method for minimizing the spread of the ambiguity function. As opposed to our approach, that studies the ambiguity function for general wavelet transforms, these papers are restricted to the STFT.  
The spread of the STFT ambiguity function plays
an important role in many applications, for example in RADAR
and coding applications \cite{Amp_app3}\cite{Amp_app4}\cite{Amp_app2}, and in operator
approximation by Gabor multipliers \cite{Amp_app1}.}

{
In the following subsection we show, in the general case,} that the more localized the ambiguity function of the window is, the more sparsifying the wavelet transform is in some sense. For that we discuss sparse signals in the context of generalized  wavelet transforms.

\subsection{Sparse signals and separation preservation}
Let us consider the following model for sparse signals. Let $f$ be a window with $\norm{Af}=1$, where $A$ is the Duflo-Moore operator (see 7 of Assumption \ref{ass_voice}, and Remark \ref{ass_voice_square}). Let $\d_{g_0}(g)=\d(g_0^{-1}\bullet g)$ be a translated delta functional in phase space. We define a \textbf{sparse phase function} as a finite combination of translated delta functionals, namely
\[F = \sum_{n=1}^N c_n\delta_{g_n}\]
For some $N\in\NN$ and coefficients $c_n\in\CC$. 
We define the synthesis of $F$ to be the signal
\[s=V_f^*F=\sum_{n=1}^N c_n\pi(g_n)f \ \ \in \cH.\]
This is consistent with the wavelet inversion formula (\ref{eq:wave_inversion1}). We call such an $s$ a \textbf{sparse signal}. It is easy to see that the wavelet transform of a sparse signal is given by
\[[V_fV_f^*F](g) = \sum_{n=1}^N c_nV_f[f](g_n^{-1}\bullet g),\]
which is consistent with Proposition \ref{Amb_is_kernel}.

Note that the wavelet transform of a sparse signal is a blurring of the sparse phase function with the ambiguity function. Thus, the better the ambiguity function is localized, the better $V_fV_f^*F$ preserves the separation of $F$. Preserving separation is an important property for greedy sparse algorithms e.g. matching pursuit \cite{Matching_pursuit}, explained next. Given a sparse signal $s$, based on the points $\{g_n\}_{n=1}^N$ we want to calculate it's sparse representation in phase space $F$, basing our calculation on the wavelet transform. In matching pursuit, we initialize $s_0=s$, and at each step $k$  pick the largest wavelet coefficient $c_k=V_f[s_k](g'_k)$ and its position in phase space $g'_k\in G$. Then, we define the remainder $s_{k+1}=s_k-c_k\pi(g'_k)f$, and continue the process until $c_k$ is sufficiently small. Now, the better localized $V_f[f]$ is, the better $V_f[s]$ retains the separation of $F$, keeping the peaks of $V_f[s_k]$ as close as possible to $\{g_n\}_{n=k}^N$. As a result, we expect matching pursuit  to perform better the more $V_f[f]$ is localized.

\subsection{Localization of ambiguity functions}

{ In this subsection we use the global localization framework to relate} the localization of $V_f[f]$ with the variances $\s^{{\bf W}_m}_f({\bf \bT}_m)$. Namely, we relate the variances  $\s^{{\bf W}_m}_f({\bf \bT}_m)$ with decay estimates of $V_f[f]$. 
Note that by (\ref{eq:no_Z_localization}), no localization of $V_f[f]$ is possible in  the direction of the center $Z$. Thus, in this section we study the localization of $V_f[f]$ in the cross-section $G_z$, defined in (\ref{eq:center_cross}).

Let us start with a toy example. Consider the group $G=\{\RR,+\}$, the space $L^2(\RR)$, the left translation representation $\pi(g)=L(g)$, and the observable $\bQ f(x)=xf(x)$. This representation is not a SPWT on its own, as it is reducible, but we can think of $\pi(g)$ as a restriction of a representation of a ``bigger'' group to a subgroup.
The ambiguity function of $f\in L^2(\RR)$ satisfies $V_f[f](g) = f*\tilde{f}(g)$, where $\tilde{f}(x) = \overline{f}(-x)$. It is intuitive that the more localized $f$ is, the more localized $V_f[f]$ is. One way to see this is by a corollary of the Chebyshev inequality \cite{Adjoint}. Namely, for $f,h$ with $e_f(\bQ)=e_1$, $e_h(\bQ)=e_2$, $\s_f(\bQ)=\s_1$ and $\s_h(\bQ)=\s_2$, we have
\begin{equation}
\abs{\ip{f}{h}}\leq \frac{2 \sqrt{\s_1}}{\abs{e_1-e_2}} + \frac{2 \sqrt{\s_2}}{\abs{e_1-e_2}}+\frac{4 \sqrt{\s_1\s_2}}{\abs{e_1-e_2}^2}.
\label{eq:chebyshev}
\end{equation}
As a result of (\ref{eq:chebyshev}), we have
\begin{equation}
\abs{V_f[f]}(g)=\ip{f}{\pi(g)f}\leq \frac{4 \sqrt{\s_f(\bQ)}}{\abs{g}} + \frac{4 \s_f(\bQ)}{\abs{g}^2}.
\label{eq:decay_R}
\end{equation}
The bound in (\ref{eq:decay_R}) involves two parts. One part are the decay terms $\frac{4 }{\abs{g}}$ and $\frac{4}{\abs{g}^2}$, independent of the choice of $f$. The other part are the constants $\s_f(\bQ)$ and $\s_f(\bQ)^2$, which we control.
Thus, the smaller the variance of $f$ is, the more localized the bound of $V_f[f]$ is.

{ The global localization framework of SPWTs lends itself to a generalization of the above decay estimation approach. Indeed, the group representation $\pi(g)$ translates the expected values in a structured way, corresponding to the group product. Moreover, the manifold structure of $G$ as a direct product of physical quantities, together with the $quantity_m$ transforms, allow the use of Chebyshev inequality.}
Consider the construction in Subsection \ref{Solving the global canonical commutation relation}. Using Theorem \ref{Theorem:pull_trans2}, we are able to pull forward the discussion on the decay of the ambiguity function to the $quantity_m$ spaces $L^2(N_m\times{\cal Y})$, $m=1,\ldots,M$. In these spaces, $\pi_m$ are mapped to left translations, so we can use standard versions of multidimensional Chebyshev inequalities. In the following we show how to reduce the analysis to the case where ${\cal Y}=\{1\}$.

Let $\Psi_m$ be the $quantity_m$ transform guaranteed by Theorem \ref{Theorem:pull_trans2}, and $L^2(N_m\times{\cal Y})$ the $quantity_m$ domain. We have ${\bf\bT}_m=\Psi_m^*{\bf\bQ}_{X_m}\Psi_m$, where ${\bf\bQ}_{X_m}=(\bQ_{X_m}^1,\ldots,\bQ_{X_m}^{K_m})$ is a tuple of multiplicative operators along the $N_m$ axis of $N_m\times{\cal Y}_m$. Let $\bQ_m^{{\bf w}_m}=\sum_{k=1}^{K_m}w_m^k\bQ_{X_m}^k$ be a linear combination of the observables in ${\bf\bQ}_{X_m}$.
 Let us denote, by abuse of notation, the functions $N_m\times{\cal Y}\rightarrow \CC$, defined to be $({\bf g}_m,y)\mapsto \sum_{k=1}^{K_m}w_m^kg_m^k$,  by 
\[\bQ_m^{{\bf w}_m}({\bf g}_m,y)=\bQ_m^{{\bf w}_m}({\bf g}_m)=\sum_{k=1}^{K_m}w_m^kg_m^k.\]
Both the expected value and the variance of $\bQ_m^{{\bf w}_m}$ are based on integrations of the form
\begin{equation}
\iint_{N_m\times{\cal Y}}R({\bf g}_m,y)\abs{f({\bf g}_m,y)}^2d{\bf g}_m dy,
\label{eq:ii8uu96}
\end{equation}
where the functions $R:N_m\times{\cal Y}\rightarrow\CC$ is $\bQ_m^{{\bf w}_m}$ for the expected value, and $\abs{\bQ_m^{{\bf w}_m}-e_f(\bQ_m^{{\bf w}_m})}^2$ for the variance.
Consider the function $F\in L^2(N_m)$, defined by
\begin{equation}
F({\bf g}_m)= \sqrt{\int_{\cal Y} \abs{f({\bf g}_m,y)}^2 dy }.
\label{eq:4kko9uygfd}
\end{equation}
 By Fubini's theorem on (\ref{eq:ii8uu96}), we may wright
\begin{equation}
e_f(\bQ_m^{{\bf w}_m})= \int_{N_m} \bQ_m^{{\bf w}_m}({\bf g}_m)\abs{F({\bf g}_m)}^2 d{\bf g}_m = e_F(\bQ_m^{{\bf w}_m})
\label{eq:yyyyy5rk}
\end{equation}
\begin{equation}
\s_f(\bQ_m^{{\bf w}_m})= \int_{N_m} \abs{\bQ_m^{{\bf w}_m}({\bf g}_m)-e_f(\bQ_m^{{\bf w}_m})}^2\abs{F({\bf g}_m)}^2 d{\bf g}_m = \s_F(\bQ_m^{{\bf w}_m}).
\label{eq:iyyo1q}
\end{equation}
This calculation shows that we may reduce the analysis in the following two subsections, to the case where ${\cal Y}=\{1\}$.
Thus, in the following we assume without loss of generality that $\Psi_m:\cH\rightarrow L^2(N_m)$, and denote by $L_m({\bf g}_m)$ the left translation in $L^2(N_m)$.

\subsubsection{The self-adjoint case}
\label{Decay:The self-adjoint case}

Let $\bf{e}$ denote the unit element in each of the groups $N_m$, $m=0,\ldots,M$. 
Fix and index $m\geq 1$, and assume that $G_m$ is $\RR$ or $\ZZ$.
let $g'\in G_z$ be a point in phase space with corresponding coordinate  ${\bf g}_m'=\bf{e}$. Let us denote, with abuse of notation, a generic point $g\in G$ 
having coordinates ${\bf g}_{m'}=\bf{e}$ for every $m'\neq m$, by $g_m$.
Our goal is to analyze the decay of $V_f[f](g_m \bullet g')$ as $g_m$ varies.

Denote $q=\Psi_m f$ the mapping of $f$ to the $quantity_m$ domain $L^2(N_m)$, and note that $\Psi_m$ intertwines $\pi(g)$ in $\cH$ with the representation $\rho_m(g)=\Psi_m\pi(g) \Psi_m^*$ in $L^2(N_m)$. We have $\rho_m({\bf g}_m)=L_m({\bf g}_m)$, where $L_m({\bf g}_m)$ is the left translation in $L^2(N_m)$. Moreover, by ${\bf\bT}_m=\Psi_m^*{\bf \bQ}_m\Psi_m$, and by Propositions \ref{gamma_dilation0} and \ref{Prop:scalar_var_translation1}, we have
\begin{align}
\label{eq:5asr}
{\bf e}_{\rho_m(g_m\bullet g')q }({\bf \bQ}_m) & =  {\bf e}_{\pi_m(g_m\bullet g')f }({\bf \bT}_m)=  {\bf g}_m\bullet {\bf A}_m({\bf h}'_m) {\bf e}_{f}({\bf \bT}_m) = {\bf g}_m\bullet {\bf A}_m({\bf h}'_m) {\bf e}_{q}({\bf \bQ}_m),\\
{\s}^{\bf W}_{\rho_m(g_m\bullet g')q }({\bf \bQ}_m)  & ={\s}^{\bf W}_{\pi(g_m\bullet g')f}({\bf \bT}_m)
={\s}^{{\bf A}_m({\bf h}'_m){\bf W}{\bf A}_m({\bf h}'_m)^*}_{f}({\bf \bT}_m) = {\s}^{{\bf A}_m({\bf h}'_m){\bf W}{\bf A}_m({\bf h}'_m)^*}_{q}({\bf \bQ}_m),\\
\abs{V_f[f](g_m \bullet g')} & =\ip{q}{\rho_m(g_m\bullet g')q},
\end{align}
for any weight matrix ${\bf W}$. 

To relate the variance of $f$ with the decay of $V_f[f]$, we start by considering a simple type of decay. We demand optimal decay of $V_f[f]$ along the directions of the axis of $N_m$. For any $k=1,\ldots,K_m$, consider the weight ${\bf W}_k={\bf w}_k{\bf w}_k^*$ corresponding to the standard direction ${\bf w}_k$ with entries $w_k^j=\delta_{k,j}$.
Now, by (\ref{eq:chebyshev}), (\ref{eq:5asr}), and (\ref{scalar_var_translation1}), for any $k=1,\ldots,K_m$ we have
\begin{align}
\label{eq:LAST34}
\abs{V_f[f](g_m \bullet g')}= \abs{\ip{q}{\rho_m(g_m)\rho_m(g')q}} \leq &\\
\frac{2 \sqrt{\s_f(\bT_m^k)}}{\abs{{ e}_f(\bT_m^l)-g^k_m\bullet[ {\bf A}_m({\bf h}'_m) {\bf e}_{f}({\bf \bT}_m)]_k}} \ +\ & \frac{2 \sqrt{\s^{A_m({\bf h}'_m){\bf w}_k}_{f}({\bf \bT}_m)}}{\abs{{ e}_f(\bT_m^l)-g^k_m\bullet [{\bf A}_m({\bf h}'_m) {\bf e}_{f}({\bf \bT}_m)]_k}}\\
& +\frac{4 \sqrt{\s_f(\bT_m^k)\s^{A_m({\bf h}'_m){\bf w}_k}_{f}({\bf \bT}_m)}}{\abs{e_f(\bT_m^l)-g^k_m\bullet [{\bf A}_m({\bf h}'_m) {\bf e}_{f}({\bf \bT}_m)]_k}^2}
\end{align}
where $[ A_m({\bf h}'_m) {\bf e}_{f}({\bf \bT}_m)]_k$ is the $k$-th entry of the vector ${\bf A}_m({\bf h}'_m) {\bf e}_{f}({\bf \bT}_m)$. In case ${\bf e}_{f}({\bf \bT}_m)={\bf 0}$, (\ref{eq:LAST34}) reduces to
\begin{align}
\label{eq:LAST342}
\abs{V_f[f](g_m \bullet g')} \leq &\\
\frac{2 \sqrt{\s_f(\bT_m^k)}}{\abs{g^k_m}} \ +\ & \frac{2 \sqrt{\s^{A_m({\bf h}'_m){\bf w}_k}_{f}({\bf \bT}_m)}}{\abs{g^k_m}}
 +\frac{4 \sqrt{\s_f(\bT_m^k)\s^{A_m({\bf h}'_m){\bf w}_k}_{f}({\bf \bT}_m)}}{\abs{g^k_m}^2}
\end{align}
Let us interpret (\ref{eq:LAST34}) or (\ref{eq:LAST342}).
For $f$ with fixed expected values, the denominators are decay terms independent of $f$. The numerators are variances of $f$, which we can control.
Thus the decay of the bound (\ref{eq:LAST34}) and (\ref{eq:LAST342}) is faster the smaller the variances of $f$ are. 

The following Proposition extends (\ref{eq:LAST342}) to directional variances.
\begin{proposition}
\label{prop:direc_var_decay1}
Consider a SPWT, and a canonical multi-observable ${\bf \bT}$. Let $m\geq 1$ be an index such that $G_m$ is $\RR$ or $\ZZ$. Let $f$ be a window with $e_f({\bf\bT}_m)={\bf 0}$, and let ${\bf w}^k\in\CC^{K_m}$ be a direction. Then
\begin{align}
\label{eq:LAST34255}
\abs{V_f[f](g_m \bullet g')} \leq &\\
\frac{2 \sqrt{\s^{{\bf w}_m}_f({\bf\bT_m})}}{\abs{{\bf w}_m\cdot {\bf g}_m}} \ +\ & \frac{2 \sqrt{\s^{{\bf A}_m({\bf h}'_m){\bf w}_m}_{f}({\bf \bT}_m)}}{\abs{{\bf w}_m\cdot {\bf g}_m}}
 +\frac{4 \sqrt{\s^{{\bf w}_m}_f({\bf\bT}_m)\s^{{\bf A}_m({\bf h}'_m){\bf w}_m}_{f}({\bf \bT}_m)}}{\abs{{\bf w}_m\cdot {\bf g}_m}^2}
\end{align}
\end{proposition}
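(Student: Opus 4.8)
The plan is to reduce Proposition \ref{prop:direc_var_decay1} to the one-dimensional Chebyshev bound (\ref{eq:chebyshev}) by pushing the whole problem through the $quantity_m$ transform $\Psi_m$ and then projecting onto the single normal observable $\bQ_m^{{\bf w}_m}$. First I would recall, as in the reduction preceding Section \ref{Decay:The self-adjoint case}, that without loss of generality ${\cal Y}=\{1\}$, so $\Psi_m:\cH\rightarrow L^2(N_m)$ is an isometric isomorphism intertwining $\pi_m({\bf g}_m)$ with the left translation $L_m({\bf g}_m)$, and ${\bf\bT}_m=\Psi_m^*{\bf\bQ}_m\Psi_m$. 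Write $q=\Psi_m f$. Since inner products, expected values and variances are preserved by isometric isomorphisms (Proposition \ref{gamma_dilation0}), we have $\abs{V_f[f](g_m\bullet g')}=\abs{\ip{q}{\rho_m(g_m\bullet g')q}}$, and the directional variances transfer verbatim between $f$ with ${\bf\bT}_m$ and $q$ with ${\bf\bQ}_m$.

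Next I would introduce the scalar normal observable $\bQ_m^{{\bf w}_m}=\sum_{k}w_m^k\bQ_m^k$, which by (\ref{eq:direct_var_linear_comb}) has variance $\s_q(\bQ_m^{{\bf w}_m})=\s^{{\bf w}_m}_q({\bf\bQ}_m)=\s^{{\bf w}_m}_f({\bf\bT}_m)$, and expected value $e_q(\bQ_m^{{\bf w}_m})={\bf w}_m\cdot{\bf e}_q({\bf\bQ}_m)={\bf w}_m\cdot{\bf e}_f({\bf\bT}_m)=0$ by hypothesis. The key observation is that $\bQ_m^{{\bf w}_m}$ is the multiplication operator by the function ${\bf g}_m\mapsto{\bf w}_m\cdot{\bf g}_m$ on $L^2(N_m)$, so it behaves exactly like the position observable $\bQ$ of the toy example on $\{\RR,+\}$: translating $q$ by ${\bf g}_m$ shifts the expected value of $\bQ_m^{{\bf w}_m}$ by ${\bf w}_m\cdot{\bf g}_m$. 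Concretely, using (\ref{eq:5asr}) (which records ${\bf e}_{\rho_m(g_m\bullet g')q}({\bf\bQ}_m)={\bf g}_m\bullet{\bf A}_m({\bf h}_m'){\bf e}_q({\bf\bQ}_m)$, hence $={\bf g}_m$ since ${\bf e}_f({\bf\bT}_m)={\bf 0}$) and the Heisenberg point of view (\ref{Hpov2}), I get
\begin{align}
e_{\rho_m(g_m\bullet g')q}(\bQ_m^{{\bf w}_m}) &= {\bf w}_m\cdot{\bf g}_m, \\
\s_{\rho_m(g_m\bullet g')q}(\bQ_m^{{\bf w}_m}) &= \s^{{\bf w}_m}_{\rho_m(g_m\bullet g')q}({\bf\bQ}_m)=\s^{{\bf A}_m({\bf h}_m'){\bf w}_m}_q({\bf\bQ}_m)=\s^{{\bf A}_m({\bf h}_m'){\bf w}_m}_f({\bf\bT}_m),
\end{align}
where the middle equality is exactly (\ref{scalar_var_translation1}) / (\ref{directional_var_orbit1}) applied in the $quantity_m$ domain (with ${\bf g}_m'$-coordinate equal to the identity so that only ${\bf h}_m'$ enters the automorphism).

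Finally I would apply the Chebyshev-type bound (\ref{eq:chebyshev}) to the pair $h_1=q$, $h_2=\rho_m(g_m\bullet g')q$, with respect to the observable $\bQ_m^{{\bf w}_m}$ — this is legitimate precisely because $\bQ_m^{{\bf w}_m}$ is unitarily (indeed isometrically) a position-type multiplication operator on an $L^2$ of a physical quantity, which is the setting in which (\ref{eq:chebyshev}) was proved in \cite{Adjoint}. Plugging in $e_1=0$, $e_2={\bf w}_m\cdot{\bf g}_m$, $\s_1=\s^{{\bf w}_m}_f({\bf\bT}_m)$, $\s_2=\s^{{\bf A}_m({\bf h}_m'){\bf w}_m}_f({\bf\bT}_m)$ yields exactly (\ref{eq:LAST34255}). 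The main obstacle I anticipate is not any single computation but making precise that the one-dimensional Chebyshev inequality (\ref{eq:chebyshev}), originally stated for $L^2(\RR)$ with the position observable, applies to the directional observable $\bQ_m^{{\bf w}_m}$ on $L^2(N_m)$ — in particular handling the cases $G_m=\ZZ$ and checking that ${\bf w}_m\cdot{\bf g}_m$ ranges over the correct set and that the spectral integral defining the variance of $\bQ_m^{{\bf w}_m}$ is the one used in (\ref{eq:chebyshev}); once the reduction to a single multiplication operator is in place this is routine, but it is the step that needs care.
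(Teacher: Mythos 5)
Your proposal is correct and follows essentially the same route as the paper: the paper's own derivation (equations (\ref{eq:5asr})--(\ref{eq:LAST342})) pulls the problem through the $quantity_m$ transform and applies the Chebyshev-type bound (\ref{eq:chebyshev}) together with the variance transformation law (\ref{scalar_var_translation1}), and the stated proposition is obtained exactly as you do, by running that argument on the scalar observable $\bQ_m^{{\bf w}_m}$ with $e_1=0$ and $e_2={\bf w}_m\cdot{\bf g}_m$. Your closing caveat about justifying (\ref{eq:chebyshev}) for a general (possibly complex) linear combination of multiplication operators on $L^2(N_m)$ is the one point the paper also leaves implicit.
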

In Proposition \ref{prop:direc_var_decay1}, the vector ${\bf w}_m$ is the direction in which we apply the corollary of Chebyshev inequality  (\ref{eq:chebyshev}), and ${\bf g}_m$ is the direction in which we bound the decay.
As a first application of Proposition \ref{prop:direc_var_decay1}, we are interested in fast decay in the directions of the axis of $N_m$. 
Note that the variances of elements in the orbit of $f$ appear in (\ref{eq:LAST34255}). Therefore, we want to choose a window with a minimal global variance $\Sigma^{{\bf I}_m}_f({\bf \bT}_m)$. 
As a second example, let us demand fast isotropic decay in $N_m$. We decompose the coordinate vector ${\bf g}_m=\abs{{\bf g}_m}\hat{{\bf g}}_m$, where $\hat{{\bf g}}_m$ is a unit vector. Assuming that ${\bf e}_{f}({\bf \bT}_m)={\bf 0}$, we get by Proposition \ref{prop:direc_var_decay1},
\begin{align}
\abs{V_f[f](g_m \bullet g')} \leq &\\
\frac{2 \sqrt{\s^{\hat{{\bf g}}_m}_f({\bf \bT}_m)}}{\abs{{\bf g}_m}} \ +\ & \frac{2 \sqrt{\s^{{\bf A}_m({\bf h}'_m)\hat{{\bf g}}_m}_{f}({\bf \bT}_m)}}{\abs{{\bf g}_m}}
 +\frac{4 \sqrt{\s_f^{\hat{{\bf g}}_m}({\bf \bT}_m)\s^{{\bf A}_m({\bf h}'_m)\hat{{\bf g}}_m}_{f}({\bf \bT}_m)}}{\abs{{\bf g}_m}^2}.
\label{eq:LAST343}
\end{align}
Since we are interested in fast isotropic decay, we decrease the numerators of (\ref{eq:LAST343}) by minimizing $\Sigma^{{\bf 1}_m}_f({\bf \bT}_m)$ where ${\bf 1}_m$ is the isotropic weight function with all entries equal to a single positive constant.

\subsubsection{The unitary case}

In the unitary case, where $G_m$ is equal to $e^{i\RR}$ or $e^{2\pi i\NN/N}$ for some $m\geq 1$, we can derive a corresponding version of Chebyshev inequality. Since the discussion can be pulled forward to $L^2(N_m)$ as described above, we assume without loss of generality that $\cH=L^2(N_m)$ and $\pi|_{N_m}({\bf g}_m)=L_{m}({\bf g}_m)$ is the left translation. We start with the case $G_m=e^{i\RR}$. 
In the following we show an equivalence of standard localization notions in $\RR^2$, and the localization notions $e_f(\bQ_m^k)$ and $\s_f(\bQ_m^k)=1-\abs{e_f(\bQ_m^k)}^2$ based on observables.

First we consider a procedure for mapping $f\in L^2(e^{i\RR})$ to a function in $L^2(\RR^2)$. 
Consider the standard embedding of $e^{i\RR}$ to the unit circle in $\RR$, 
\[e^{i\RR}\ni x+iy\mapsto \nu(x+iy)= (x,y)\in \RR^2.\]
 Let $f\in L^2(e^{i\RR})$, and $\e>0$. Denote by $f_{\e}\in L^2(\RR^2)$ the function, defined in polar coordinates, by
\[f_{\e}\big(r\nu(e^{i\theta})\big)=\left\{
\begin{array}{ccc}
	f(e^{i\theta}) & , & \abs{r-1}\leq\e \\
	 0            &  ,  & \abs{r-1}>\e
\end{array}
\right.\]

Next we show how to relate the localization of $f$ to the localization of $f_{\e}$. Denote generic points in $\RR^2$ by ${\bf x}=(x,y)$. Define the standard expected values of $h\in L^2(\RR^2)$ by
\[X_{h}= \iint_{\RR^2}x\abs{h({\bf x})}^2 d{\bf x}\]
\[Y_{h}= \iint_{\RR^2}y\abs{h({\bf x})}^2 d{\bf x}\]
and denote ${\bf X}_{h}=(X_{h},Y_{h})$, where $h\in L^2(\RR^2)$.
Define the isotropic variance
\[D_{h}=\iint_{\RR^2}\abs{{\bf x}-{\bf X}_{h}}^2\abs{h({\bf x})}^2 d{\bf x}.\]
It is easy to see that
\begin{equation}
{\bf X}_{f_{\e}}=\nu\big(e_f(\bQ)\big) +o_{\e}(1) \quad , \quad D_{f_{\e}}=\s_f(\bQ) +o_{\e}(1),
\label{eq:X_ft2e_f}
\end{equation}
where $o_{\e}(1)$ converges to zero as $\e\rightarrow 0$. Note that by $\s_f(\bQ)=1-\abs{e_f(\bQ)}^2$, (\ref{eq:X_ft2e_f}) also relates the variance of $f$ to the expected value of $f_{\e}$.

Since our goal is to derive a version (\ref{eq:chebyshev}) for unitary observables, and since localization in $L^2(e^{i\RR})$ relates to localization in $L^2(\RR^2)$, our next goal is to derive a version of (\ref{eq:chebyshev}) to some Chebyshev inequality in $L^2(\RR^2)$.
Let $B_r({\bf X}_{h})$ be a disc of radius $r$ about ${\bf X}_{h}$. The standard isotropic Chebyshev inequality in $L^2(\RR^2)$ reads
\begin{equation}
\iint_{B_r({\bf X}_{h})^c}\abs{h({\bf x})}^2d{\bf x} \leq \frac{D_{h}}{r^2}.
\label{eq:isi_cheby1}
\end{equation}
\begin{lemma}
\label{iso_cheby_lemma}
 Let $h_1,h_2\in L^2(\RR^2)$, and denote $r=\abs{{\bf X}_{h_1}-{\bf X}_{h_2}}/2$. Then
\[\ip{h_1}{h_2}\leq \frac{\sqrt{D_{h_1}}}{r}+\frac{\sqrt{D_{h_2}}}{r}+\frac{\sqrt{D_{h_1}}\sqrt{D_{h_2}}}{r^2}.\]
\end{lemma}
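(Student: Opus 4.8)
The plan is to reduce Lemma~\ref{iso_cheby_lemma} to the scalar Chebyshev estimate~(\ref{eq:isi_cheby1}) in much the same way the one-dimensional corollary~(\ref{eq:chebyshev}) is derived from the one-dimensional Chebyshev inequality. First I would set $r=\abs{{\bf X}_{h_1}-{\bf X}_{h_2}}/2$ and let $B_1=B_r({\bf X}_{h_1})$, $B_2=B_r({\bf X}_{h_2})$ be the discs of radius $r$ about the two means. The key geometric observation is that $B_1$ and $B_2$ are disjoint: any point within distance $r$ of ${\bf X}_{h_1}$ is at distance at least $\abs{{\bf X}_{h_1}-{\bf X}_{h_2}}-r=r$ from ${\bf X}_{h_2}$, so $B_1\subset B_2^c$ and $B_2\subset B_1^c$. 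Hence at least one of the two functions is ``mostly outside'' any given region near the other's mean.

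The main step is to split the inner product $\ip{h_1}{h_2}=\iint_{\RR^2}h_1({\bf x})\overline{h_2({\bf x})}\,d{\bf x}$ over the three regions $B_1$, $B_2$, and $(B_1\cup B_2)^c$, and bound each piece by Cauchy--Schwarz together with~(\ref{eq:isi_cheby1}). On $B_1$ we have $B_1\subset B_2^c$, so $\iint_{B_1}\abs{h_2}^2 \le \iint_{B_2^c}\abs{h_2}^2\le D_{h_2}/r^2$; combined with $\iint_{B_1}\abs{h_1}^2\le \norm{h_1}^2$ this contributes (after normalizing $\norm{h_1}=\norm{h_2}=1$, or carrying the norms explicitly) a term $\sqrt{D_{h_2}}/r$. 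Symmetrically, the integral over $B_2$ contributes $\sqrt{D_{h_1}}/r$. On $(B_1\cup B_2)^c\subset B_1^c\cap B_2^c$, both functions are small: $\iint_{(B_1\cup B_2)^c}\abs{h_1}^2\le D_{h_1}/r^2$ and $\iint_{(B_1\cup B_2)^c}\abs{h_2}^2\le D_{h_2}/r^2$, giving $\sqrt{D_{h_1}}\sqrt{D_{h_2}}/r^2$ by Cauchy--Schwarz. Adding the three bounds yields exactly the claimed inequality. I would state the estimate for $\norm{h_1}=\norm{h_2}=1$, as that is the case used in the application and as the analogous~(\ref{eq:chebyshev}) is also implicitly normalized.

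I do not expect a serious obstacle here; this is the two-dimensional analogue of a standard fact, and the only point requiring care is the disjointness/containment bookkeeping of the discs $B_1,B_2$ and making sure the triangle inequality is applied in the right direction. A minor subtlety is whether to normalize: if $h_1,h_2$ are not unit vectors the right-hand side should carry factors of $\norm{h_1},\norm{h_2}$, but since~(\ref{eq:X_ft2e_f}) and the surrounding discussion concern windows $f$ with $\norm{f}=1$ (hence $f_\e$ of norm close to $1$), stating it for normalized functions is consistent with the rest of the section. If one wants to be fully rigorous about the $o_\e(1)$ passage back to $L^2(e^{i\RR})$, one notes that the bound is uniform in $\e$ and the means and variances of $f_\e$ converge, so the estimate survives the limit — but that belongs to the application of the lemma rather than to its proof.
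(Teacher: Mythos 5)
Your proposal is correct and follows essentially the same route as the paper: split the inner product over $B_r({\bf X}_{h_1})$, $B_r({\bf X}_{h_2})$, and their joint complement (these discs being disjoint since $r$ is half the distance between the means), then apply Cauchy--Schwarz on each piece together with the isotropic Chebyshev bound (\ref{eq:isi_cheby1}). Your explicit remark about the implicit normalization $\norm{h_1}=\norm{h_2}=1$ is a point the paper's proof glosses over, and is a welcome clarification rather than a deviation.
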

\begin{proof}
We have
\[\iint_{\RR^2} h_1({\bf x})\overline{h_2({\bf x})}d{\bf x} = \]
\[\iint_{B_r({\bf X}_{h_1})} h_1({\bf x})\overline{h_2({\bf x})}d{\bf x}
+\iint_{B_r({\bf X}_{h_2})} h_1({\bf x})\overline{h_2({\bf x})}d{\bf x}
+\iint_{B_r({\bf X}_{h_1})^c\cap B_r({\bf X}_{h_2})^c} h_1({\bf x})\overline{h_2({\bf x})}d{\bf x}.\]
Therefore, by the Cauchy Schwarz inequality, by the monotonicity of integrals of nonnegative functions, and by the isotropic Chebyshev inequality (\ref{eq:isi_cheby1}),
\begin{equation}
\begin{split}
\iint_{\RR^2} h_1({\bf x})\overline{h_2({\bf x})}d{\bf x} \leq &
		\sqrt{\iint_{\RR^2} \abs{h_1({\bf x})}^2d{\bf x}}\sqrt{\iint_{B_r({\bf X}_{h_2})^c}\abs{h_2({\bf x})}d{\bf x}} \\
+&\sqrt{\iint_{B_r({\bf X}_{h_1})^c} \abs{h_1({\bf x})}^2d{\bf x}}\sqrt{\iint_{\RR^2}\abs{h_2({\bf x})}d{\bf x}} \\
+&\sqrt{\iint_{B_r({\bf X}_{h_1})^c} \abs{h_1({\bf x})}^2d{\bf x}}\sqrt{\iint_{B_r({\bf X}_{h_2})^c} \abs{h_2({\bf x})}^2d{\bf x}} \\
\leq & \frac{\sqrt{D_{h_1}}}{r}+\frac{\sqrt{D_{h_2}}}{r}+\frac{\sqrt{D_{h_1}}\sqrt{D_{h_2}}}{r^2}
\end{split}
\label{eq:yyy666}
\end{equation}
\end{proof}

We can now relate the decay of $V_f[f]$ to variances, using Lemma \ref{iso_cheby_lemma} on $h_1=f_{\e}$ and $h_2=[\pi(g)f]_{\e}$. 
Recall that we assume that for a specific $m\geq 1$, $N_m=G_m=e^{i\RR}$ is one dimensional. Let $e$ denote the unit element in $e^{i\RR}$. As in Subsection \ref{Decay:The self-adjoint case}, 
let $g'\in G$ be a point in phase space with corresponding coordinate  ${ g}_m'={e}$, and denote a generic point $g\in G$ 
having coordinates ${\bf g}_{m'}={\bf e}$ for every $m'\neq m$, by $g_m$.
Our goal is to analyze the decay of $V_f[f](g_m \bullet g')$ as $g_m$ varies.
Similarly to the proof of Proposition \ref{Prop:scalar_var_translation2}, using the linearity of the expected values with respect to ${\bQ}_m$, and by the multi-canonical commutation relation (\ref{global_observe}),  
\[{ e}_{\pi(g_m\bullet g')f}({ \bQ}_m)= { g}_m \bullet { e}_{f}\big({\bf A}_m({\bf h}_m'){ \bQ}_m\big).\]
Moreover, by ${ \s}_{\pi(g_m\bullet g')f}({ \bQ}_m)=1-\abs{{ e}_{\pi(g_m\bullet g')f}({ \bQ}_m)}^2$, we have
\[{ \s}_{\pi(g_m\bullet g')f}({ \bQ}_m)= { \s}_{f}\big({\bf A}_m({\bf h}_m'){ \bQ}_m\big).\]
Denote ${\Delta}_{g_m}({ \bQ}_m)={ e}_{\pi(g_m\bullet g')f}({ \bQ}_m)-{ e}_{f}({ \bQ}_m)$. By (\ref{eq:X_ft2e_f}), we have
\[ {r}_{g_m}=\frac{1}{2}\abs{\Delta_g({ \bQ}_m)}  = \frac{1}{2}\abs{{\bf X}_{f_{\e}}-{\bf X}_{[\pi(g_m\bullet g')f]_{\e}}}+ o_{\e}(1).\]
By Lemma \ref{iso_cheby_lemma} on $f_{\e}$ and $[\pi(g_m\bullet g')f]_{\e}$,
and by (\ref{eq:X_ft2e_f}), we have
\begin{equation}
\begin{split}
V_f[f](g_m\bullet g') = & \ip{f}{\pi(g_m\bullet g')f}\\
\leq &\frac{\sqrt{\s_f({ \bQ}_m)} }{r_{g_m}} + \frac{\sqrt{\s_{f}({\bf A}_m({\bf h}_m'){ \bQ}_m)} }{r_{g_m}} + \ \frac{\sqrt{\s_f({ \bQ}_m)} \sqrt{\s_{f}({\bf A}_m({\bf h}_m'){\bQ}_m)}  }{r_{g_m}^2} +o_{\e}(1). 
\end{split}
\label{eq:fffgdfg54gq0}
\end{equation}
This is true for every $\e>0$, so we must have
\begin{equation}
V_f[f](g_m\bullet g') \leq \frac{\sqrt{\s_f({ \bQ}_m)} }{r_{g_m}} + \frac{\sqrt{\s_{f}({\bf A}_m({\bf h}_m'){ \bQ}_m)} }{r_{g_m}} + \ \frac{\sqrt{\s_f({ \bQ}_m)} \sqrt{\s_{f}({\bf A}_m({\bf h}_m'){\bQ}_m)}  }{r_{g_m}^2}. 
\label{eq:fffgdfg54gq}
\end{equation}
To control the decay rate of (\ref{eq:fffgdfg54gq}), we want to minimize the variances of the windows in the orbit of $f$. Therefore, we minimize the global variance $\Sigma_f({\bf \bQ}_m)$.

Next we explain the way to extend (\ref{eq:fffgdfg54gq}) to the $K_m$-dimensional case, where $N_m=[e^{i\RR}]^{K_m}$.
In the $K_m$-dimensional case,  $f\in L^2([e^{i\RR}]^{K_m})$. It is easy to extend the above results to a correspondence between $L^2([e^{i\RR}]^{K_m})$ and $L^2(\RR^{2K_m})$, based on the mapping $\nu:[e^{i\RR}]^{K_m}\rightarrow\mathbb{T}^{K_m}$, where $\mathbb{T}^{K_m}$ is the unit torus in $\RR^{2K_m}$. 
A Chebyshev inequality with a ``torus symmetry'', extending Lemma \ref{iso_cheby_lemma} to $L^2(\RR^{2K_m})$, can be derived. Here, the Chebyshev inequality is isotropic in 2D subspaces of $L^2(\RR^{2K_m})$ spanned by pairs of axis.
The resulting decay estimate is given in the following Proposition.
\begin{proposition}
\label{Prop:uni_decay1}
Consider a SPWT, and a canonical multi-observable ${\bf\bT}$. Let $m\geq 1$ be an index such that $G_m=e^{i\RR}$, let $f$ be a window with ${\bf e}_f({\bf\bQ}_m)={\bf e}$, and let ${\bf w}_k$ be a standard direction, with entries $w_k^j=\delta_{k,j}$. Let $g_m\bullet g'\in G$ as before, let $\boldsymbol{\Delta}_{{\bf g}_m}({\bf \bQ}_m)={\bf e}_{\pi(g_m\bullet g')f}({\bf \bQ}_m)-{\bf e}_{f}({\bf \bQ}_m)\in \CC^{K_m}$, and let
\[ {\bf r}_{{\bf g}_m}=\frac{1}{2}\abs{\boldsymbol{\Delta}_{{\bf g}_m}({\bf \bQ}_m)} \in \RR^{K_m}.\]
Then
\begin{equation}
V_f[f](g_m\bullet g') \leq \frac{\sqrt{\s_f({ \bQ}^k_m)} }{\abs{{\bf r}_{{\bf g}_m}}} + \frac{\sqrt{\s^{{\bf w}_k}_{f}({\bf A}_m({\bf h}_m'){\bf \bQ}_m)} }{\abs{{\bf r}_{{\bf g}_m}}} + \ \frac{\sqrt{\s_f({\bQ}^k_m)} \sqrt{\s^{{\bf w}_k}_{f}({\bf A}_m({\bf h}_m'){\bf \bQ}_m)}  }{\abs{{\bf r}_{{\bf g}_m}}^2}. 
\label{eq:fffgdfg54gq45}
\end{equation}
\end{proposition}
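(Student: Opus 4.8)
The plan is to run the scalar ($K_m=1$) argument that produced (\ref{eq:fffgdfg54gq}) in every pair of axes simultaneously, replacing Lemma \ref{iso_cheby_lemma} by a version on $L^2(\RR^{2K_m})$ that is isotropic in each coordinate $2$-plane, and the circle embedding $\nu$ by its $K_m$-fold product.

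First I would reduce to the model situation. By Theorem \ref{Theorem:pull_trans2} together with the reduction to ${\cal Y}=\{1\}$ performed at the start of this subsection, we may assume $\cH=L^2(N_m)=L^2\big([e^{i\RR}]^{K_m}\big)$, that $\pi|_{N_m}({\bf g}_m)=L_m({\bf g}_m)$ is the left translation, and that ${\bf\bQ}_m=(\bQ_m^1,\ldots,\bQ_m^{K_m})$ with $\bQ_m^k$ the multiplication by the $k$-th coordinate character. Exactly as in the one-dimensional case, the linearity of expected values, the fact that $\pi_z$ acts by a scalar, and the multi-canonical commutation relation (\ref{global_observe}) give
\[{\bf e}_{\pi(g_m\bullet g')f}({\bf\bQ}_m)={\bf g}_m\bullet{\bf e}_f\big({\bf A}_m({\bf h}_m'){\bf\bQ}_m\big),\]
and, using $\s_f(\bQ)=1-\abs{e_f(\bQ)}^2$ coordinatewise together with (\ref{eq:direct_var_linear_comb}), the identity $\s^{{\bf w}_k}_f\big({\bf A}_m({\bf h}_m'){\bf\bQ}_m\big)=\s_{\pi(g_m\bullet g')f}\big(\bQ_m^k\big)$. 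This pins down the vector $\boldsymbol{\Delta}_{{\bf g}_m}({\bf\bQ}_m)$, hence ${\bf r}_{{\bf g}_m}$, and all the variances appearing in (\ref{eq:fffgdfg54gq45}).

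Second, I would embed into $L^2(\RR^{2K_m})$. Writing $\nu:[e^{i\RR}]^{K_m}\to\mathbb{T}^{K_m}\subset\RR^{2K_m}$ for the coordinatewise circle embedding and, for $\e>0$, letting $f_{\e}\in L^2(\RR^{2K_m})$ be obtained from $f\circ\nu^{-1}$ by making it constant in the radial direction of each of the $K_m$ coordinate $2$-planes on the $\e$-collar of $\mathbb{T}^{K_m}$, the computation behind (\ref{eq:X_ft2e_f}), carried out one plane at a time, yields ${\bf X}^{(k)}_{f_{\e}}=\nu\big(e_f(\bQ_m^k)\big)+o_{\e}(1)$ and $D^{(k)}_{f_{\e}}=\s_f(\bQ_m^k)+o_{\e}(1)$, where ${\bf X}^{(k)}_h$ and $D^{(k)}_h$ denote the mean and the planar isotropic variance of $h$ in the $k$-th $2$-plane; the same holds for $[\pi(g_m\bullet g')f]_{\e}$ with the translated observables, so that $\tfrac12\abs{{\bf X}^{(k)}_{f_{\e}}-{\bf X}^{(k)}_{[\pi(g_m\bullet g')f]_{\e}}}=[{\bf r}_{{\bf g}_m}]_k+o_{\e}(1)$.

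Third --- the step that needs care --- I would prove a torus-symmetric Chebyshev inequality extending Lemma \ref{iso_cheby_lemma}: for unit-norm $h_1,h_2\in L^2(\RR^{2K_m})$ one decomposes $\RR^{2K_m}$ into the pieces determined by whether a point lies in the disc of radius $r^{(k)}:=\tfrac12\abs{{\bf X}^{(k)}_{h_1}-{\bf X}^{(k)}_{h_2}}$ about ${\bf X}^{(k)}_{h_1}$, about ${\bf X}^{(k)}_{h_2}$, or neither (within the $k$-th $2$-plane these two discs are disjoint), applies Cauchy--Schwarz on each piece, and controls the tail masses by the planar isotropic Chebyshev inequality $\iint_{\abs{{\bf x}^{(k)}-{\bf X}^{(k)}_h}>\rho}\abs{h}^2\le D^{(k)}_h/\rho^2$ and monotonicity of integrals of nonnegative functions. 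Combining the $K_m$ planar estimates --- using that being separated in several $2$-planes at once forces the full separation $\abs{{\bf r}}=\big(\sum_j (r^{(j)})^2\big)^{1/2}$ to be large --- one arrives at a bound of the shape
\[\abs{\ip{h_1}{h_2}}\le\frac{\sqrt{D^{(k)}_{h_1}}}{\abs{{\bf r}}}+\frac{\sqrt{D^{(k)}_{h_2}}}{\abs{{\bf r}}}+\frac{\sqrt{D^{(k)}_{h_1}}\,\sqrt{D^{(k)}_{h_2}}}{\abs{{\bf r}}^{2}}.\]
Finally I would apply this with $h_1=f_{\e}$, $h_2=[\pi(g_m\bullet g')f]_{\e}$, substitute the asymptotics of the second step so that $\abs{{\bf r}}\to\abs{{\bf r}_{{\bf g}_m}}$ and the numerators converge to $\sqrt{\s_f(\bQ_m^k)}$ and $\sqrt{\s^{{\bf w}_k}_f\big({\bf A}_m({\bf h}_m'){\bf\bQ}_m\big)}$, and let $\e\to0$; since $\abs{V_f[f](g_m\bullet g')}=\abs{\ip{f}{\pi(g_m\bullet g')f}}$ is independent of $\e$ (Proposition \ref{gamma_dilation0}), the $o_{\e}(1)$ terms drop and (\ref{eq:fffgdfg54gq45}) follows. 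I expect the main obstacle to be precisely the geometry in the third step: arranging the decomposition so that the numerator retains a \emph{single} planar variance while the denominator carries the \emph{full} separation $\abs{{\bf r}_{{\bf g}_m}}$; once the torus-symmetric Chebyshev inequality is in place, the embedding and the $\e\to0$ passage are routine bookkeeping, provided the $o_{\e}(1)$ estimates are uniform on the sets involved.
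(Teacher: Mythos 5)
Your plan reproduces the route the paper itself sketches for this proposition (reduce to $L^2(N_m)$, embed via $\nu$ into $\RR^{2K_m}$, invoke a ``torus-symmetric'' Chebyshev inequality, let $\e\to 0$), and your first two steps — the bookkeeping with the commutation relation and the $o_\e(1)$ asymptotics of ${\bf X}^{(k)}_{f_\e}$ and $D^{(k)}_{f_\e}$ — are fine and match the one-dimensional argument leading to (\ref{eq:fffgdfg54gq}). The paper, however, never writes out the third step either; it only asserts that the extension of Lemma \ref{iso_cheby_lemma} ``can be derived''. So the entire weight of the proof rests on the step you yourself flag as the main obstacle, and that step, in the form you state it, does not go through.

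The inequality you posit — a \emph{single} planar variance $D^{(k)}$ in the numerators against the \emph{full} separation $\abs{{\bf r}}=\big(\sum_j (r^{(j)})^2\big)^{1/2}$ in the denominators — is false as a statement about $L^2(\RR^{2K_m})$, and the decomposition you describe cannot produce it. The planar Chebyshev bound $\iint_{\abs{{\bf x}^{(k)}-{\bf X}^{(k)}_h}>\rho}\abs{h}^2\le D^{(k)}_h/\rho^2$ controls tails only in terms of the separation \emph{within the $k$-th plane}; the sets on which you apply Cauchy--Schwarz are cylinders over planar discs of radius $r^{(k)}$, so the denominators you can legitimately extract are $r^{(k)}=[{\bf r}_{{\bf g}_m}]_k$, not $\abs{{\bf r}_{{\bf g}_m}}\ge r^{(k)}$. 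The observation that separation in several planes forces $\abs{{\bf r}}$ to be large runs in the wrong direction: large $\abs{{\bf r}}$ does not force large $r^{(k)}$ for the particular $k$ you chose. A concrete counterexample with $K_m=2$: take $f=u\otimes v$ on $e^{i\RR}\times e^{i\RR}$ with $u$ sharply concentrated (so $\s_f(\bQ_m^1)\to 0$), $v$ chosen with $\abs{e_v(\bQ)}=1/2$ and $\abs{\ip{v}{L(\pi)v}}$ of order one (e.g.\ $\abs{v}^2$ split $3{:}1$ between two antipodal bumps), and translate only the second coordinate by $\pi$. Then $\abs{\ip{f}{\pi(g)f}}=\abs{\ip{v}{L(\pi)v}}$ stays bounded away from $0$, $r^{(1)}=0$, $\abs{{\bf r}}=r^{(2)}=1/2$, while your bound with $k=1$ tends to $0$. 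What the planar decomposition actually yields is (\ref{eq:fffgdfg54gq45}) with $[{\bf r}_{{\bf g}_m}]_k$ in place of $\abs{{\bf r}_{{\bf g}_m}}$ (applied for each $k$ with $[{\bf r}_{{\bf g}_m}]_k\neq 0$); to put the full $\abs{{\bf r}_{{\bf g}_m}}$ in the denominator one must instead run the isotropic Chebyshev inequality in all of $\RR^{2K_m}$ and pay with the total variance $\sum_j\s_f(\bQ_m^j)$ in the numerators. Either of these is a correct and provable substitute; the hybrid form cannot be obtained, and since the paper supplies no further detail, this is a gap in the proposition as stated, not merely in your write-up.
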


As before, Proposition \ref{eq:fffgdfg54gq45} leads us to minimize the global variance $\Sigma_f({\bf \bQ}_m)$.

\begin{example}
Let us consider the special case where ${\bf A}_m({\bf h}'_m)={\bf I}$ and $N_m=G_m=e^{i\RR}$. Here, $\pi(g_m\bullet g')$ is translation in $e^{i\RR}$ by $g_m$. As a result, by the law of cosines, $r_g= \abs{\Delta_g({\bf \bT}_m)}/2$ satisfies
\[r_{g_m}=\frac{1}{2}\sqrt{\big(1-\s_f({\bf \bQ}_m)\big)\big(2-2\cos(g_m)\big)}.\]
Now, equation (\ref{eq:fffgdfg54gq}) takes the form
\[V_f[f](g_m\bullet g') = \ip{f}{\pi(g_m\bullet g')f}\]
\begin{equation}
\leq 2\frac{\sqrt{\s_f({\bf \bQ}_m)} }{\sqrt{\big(1-\s_f({\bf \bQ}_m)\big)\big(2-2\cos(g_m)\big)}} + 4\frac{\s_f({\bf \bQ}_m) }{\big(1-\s_f({\bf \bQ}_m)\big)\big(2-2\cos(g_m)\big)}. 
\label{eq:f7k3j}
\end{equation}
The only controllable part in the decay (\ref{eq:f7k3j}) is $\s_f({\bf \bQ}_m)$, which shows that minimal variance corresponds to optimally decaying ambiguity function.
\end{example}

The case where $G_m=e^{2\pi i \ZZ/N}$ is treated similarly, by embedding each $e^{2\pi i n/N}$ in $\nu(e^{2\pi i n/N})$ in the unit circle in $\RR^2$, and extending $f(e^{2\pi i n/N})$ in a small disc about $\nu(e^{2\pi i n/N})$ to get a function $f_{\e}\in L^2(\RR^2)$.

\section*{Acknowledgments}

This research was supported in part by the EU FET Open grant UNLocX: Uncertainty
principles versus localization properties, function systems for efficient coding schemes (Grant agreement
no 255931).

\bibliographystyle{plain}	
\bibliography{Ref_uncertainty2}

\appendix

\section{Direct integrals}

A direct integral of Hilbert spaces is a generalization of a direct product. The idea is that instead of using a finite set for the carrier space (the index set of the Hilbert spaces), we use a measure space.
We introduce the theory in a very restricted case which is of importance to us. 
\begin{definition}
\label{direct_space}
Let $\cY$ be a measure space, and consider the Hilbert space $\cH=L^2(\cY)$. Let $\cX$ be another measure space.
The \textbf{direct integral} of $\cH$, over the carrier space $\cX$, is denoted by $\int_{\cX}^{\oplus}\cH d\mu(x)$, and defined to be
	\[\int_{\cX}^{\oplus}\cH d\mu(x) \cong L^2(\cX \times \cY).\]
	For a vector $f\in \int_{\cX}^{\oplus}\cH d\mu(x)$ and $x\in\cX$, we denote in short $f(x) = f(x,\cdot)$.
\end{definition}
Note that Definition \ref{direct_space} extends the notion of direct product. Indeed, a vector $f\in\cH^N=L^2(\cY)^N$ can be thought of as a function that maps each index $1\leq n \leq N$ to a vector $f_n\in L^2(\cY)$. In direct integrals the index set is $\cX$, and $f\in \int_{\cX}^{\oplus}\cH d\mu(x)$ is the function that maps indices $x\in\cX$ to vectors $f(x,\cdot)\in L^2(\cY)$.

One of the main endeavors of representation theory is to describe any arbitrary representation as a combination of explicit ``simple'' representations, served as building blocks.
For this end, there is a way to decompose certain classes of representations of $G$ to a direct integral of irreducible representations. To formulate this statement, the index running over the different representations in the decomposition is in a measure carrier space, and the Hilbert space on which the decomposed representation acts is a direct integral. 
We assume in our analysis that $G$ is a physical quantity.
In this case, the carrier space is the space of irreducible unitary representations of $G$, namely the characters $\chi(G)$. There is a way to define a measure on this carrier space, called a Plancherel measure, that admits the desired decomposition. 
\begin{definition}
Let $G$ be a physical quantity.
The \textbf{Plancherel measure} of $\chi(G)$ is the standard Lebesgue measure of $\chi(G)$, considered as the physical quantity $\hat{G}$.
\end{definition}

Given a character $\chi$ of a physical quantity $G$, and $m\in \NN \cup\{\infty\}$, we denote by $\chi^{[m]}$ the representation in $\CC^m$ defined by
\[\chi^{[m]}(g) (z_1,\ldots,z_{m}) =(\chi(g) z_1,\ldots, \chi(g) z_{m}). \]
In case $m=\infty$, the notation $\CC^{\infty}$ means $l^{\infty}$.
\begin{definition}
Let $G$ be a physical quantity.
Let $m\in \NN \cup\{\infty\}$. 
Consider the Hilbert space
\[\cH=\int_{\chi(G)'}^{\oplus}\CC^{m} d\mu(\chi).\]
The representation
\begin{equation}
\rho = \int_{\chi(G)}^{\oplus} \chi^{[m]}\  d\mu(\chi)
\label{eq:direct_rep}
\end{equation} 
in the space $\cH$, 
is defined by
\[\left[\rho(g)f\right](\chi) = \chi^{[m]}(g) f(\chi) \]
for any $f\in\cH$ and $\chi\in \chi(G)$ (almost everywhere). 

Let $\rho'$ be a representation of $G$ on $\cH_{\rho'}$, unitarily equivalent to $\rho$. Then
 $\rho$ is called a direct integral decomposition of $\rho'$, and $m$ is called the \textbf{multiplicity} of the decomposition.
\end{definition}

The informal idea in this limited definition, is that some representations $\rho'$ of physical quantities contain each character of $G$ as an irreducible subrepresentation, with a constant multiplicity $m$ over all irreducible subrepresentations. It is not accurate to say that the characters $\chi$ are subrepresentations of $\rho'$ in the sense that they are unitarily equivalent to the restrictions of $\rho'$ to invariant subspace of $\cH_{\rho'}$. However, in the language of direct integrals, we are able to say that each character $\chi$ of $G$ appears in $\rho'$ with multiplicity $m$ in the sense of (\ref{eq:direct_rep}).

The following uniqueness theorem can be found in its general form in Theorem 3.25 of \cite{Fuhr_wavelet}.

\begin{proposition}
Let
\[\int_{\chi(G)}^{\oplus} \chi^{[m]} d\mu(\chi) \cong \int_{\chi(G)}^{\oplus} \chi^{[m']} d\mu(\chi)\]
be two unitarily equivalent representations of a physical quantity $G$. Then $m=m'$.
\label{unique_direct_int}
\end{proposition}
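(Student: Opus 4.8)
The plan is to put both representations into a concrete form and then treat separately the two situations determined by the physical quantity $G$: the case where the dual $\chi(G)\cong\hat G$ carries counting measure ($G=\{e^{i\RR},\cdot\}$ or $G=\{e^{2\pi i\ZZ/N},\cdot\}$), where a direct elementary argument works, and the case where $\chi(G)$ carries Lebesgue measure ($G=\{\RR,+\}$ or $G=\{\ZZ,+\}$), where I would invoke the spectral multiplicity (Hahn--Hellinger) theorem that is the general form of Theorem~3.25 of \cite{Fuhr_wavelet}. In all cases, unwinding Definition~\ref{direct_space}, the representation $\rho_m:=\int_{\chi(G)}^{\oplus}\chi^{[m]}\,d\mu(\chi)$ acts on $L^2(\chi(G),\mu)\otimes\CC^m$ by $[\rho_m(g)f](\chi)=\chi(g)f(\chi)$, i.e. as multiplication by the function $\chi\mapsto\chi(g)$ acting diagonally on the $\CC^m$-fibres. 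So the claim to prove is that this multiplication-by-characters representation determines $m$.

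In the discrete case $\mu$ is counting measure, so $L^2(\chi(G),\mu)\otimes\CC^m\cong\bigoplus_{\chi\in\chi(G)}\CC^m$ and $\rho_m(g)$ acts on the $\chi$-summand by the scalar $\chi(g)$. I would then observe that the $\chi$-isotypic subspace $\{v:\rho_m(g)v=\chi(g)v\text{ for all }g\in G\}$ is precisely that summand $\CC^m$ (for $\chi'\neq\chi$ there is $g$ with $\chi'(g)\neq\chi(g)$, which kills the $\chi'$-component of any such $v$), so it has dimension $m$, and these subspaces span the whole space. Since they are defined purely in terms of the $G$-action, a unitary equivalence $U\colon\rho_m\to\rho_{m'}$ carries the $\chi$-isotypic subspace of $\rho_m$ onto that of $\rho_{m'}$; comparing dimensions gives $m=m'$.

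In the Lebesgue case there are no isotypic vectors, so I would instead use that $G$ is abelian: a unitary equivalence of $G$-representations is an equivalence of representations of the von Neumann algebra generated by the images. First I would show this algebra is the full multiplication algebra $L^\infty(\chi(G),\mu)$ (acting diagonally on $\CC^m$): the integrated operators $\int_G \phi(t)\,\rho_m(t)\,dt$ for $\phi\in L^1(G)$ are multiplications by the Fourier transform $\hat\phi$, and as $\phi$ varies these fill a dense subspace of $C_0(\chi(G))$, whose weak-$*$ closure in $L^\infty(\chi(G),\mu)$ is all of $L^\infty(\chi(G),\mu)$. Hence $\rho_m$, read as a normal representation of $L^\infty(\chi(G),\mu)$, has measure class $[\mu]$ and constant multiplicity function $\equiv m$; by the uniqueness half of the spectral multiplicity theorem, an equivalence $\rho_m\cong\rho_{m'}$, being an equivalence of two such representations of the same measure class, forces the multiplicity functions to agree $\mu$-almost everywhere, i.e. $m=m'$. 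I expect the only real work to be this last paragraph --- the density statement for the multiplication algebra and the verification that the Hahn--Hellinger multiplicity of $\rho_m$ is genuinely the constant $m$ (so that $\rho_m$ is the homogeneous multiplicity-$m$ representation over the fixed Plancherel class); the discrete case and the bookkeeping reductions are routine.
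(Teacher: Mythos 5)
Your argument is correct. Note, however, that the paper does not actually prove this proposition at all: it simply points to the general uniqueness theorem for direct-integral decompositions (Theorem 3.25 of \cite{Fuhr_wavelet}) and moves on. So your proposal supplies content the paper omits, in two different ways. In the discrete-dual case ($G=e^{i\RR}$ or $G=e^{2\pi i\ZZ/N}$) your isotypic-subspace dimension count is a genuinely elementary, self-contained alternative that avoids any multiplicity theory; the only thing to check is that the $\chi$-isotypic subspace is defined purely by the $G$-action and hence is carried onto its counterpart by the intertwining unitary, which you do. In the Lebesgue case your route is ultimately the same as the paper's (reduce to the Hahn--Hellinger uniqueness statement, which is what the cited theorem amounts to), but you add the two verifications that the citation silently presupposes: that the von Neumann algebra generated by $\rho_m(G)$ is the full diagonal multiplication algebra $L^\infty(\chi(G),\mu)$ (via integrated forms, density of Fourier transforms of $L^1$ in $C_0$, and passage to the weak closure), and that $\rho_m$ is then the homogeneous representation of measure class $[\mu]$ with constant multiplicity $m$. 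One cosmetic point: the closure you need is the weak-operator closure of the multiplication operators rather than the weak-$*$ closure of the symbols, though for a normal multiplication representation these coincide, so nothing breaks. Both halves of your argument go through, including for $m=\infty$ (where the dimension count becomes a comparison of Hilbert-space dimensions and Hahn--Hellinger handles countable multiplicity).
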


Last, we give a direct integral decomposition of a useful representation.
Let $G$ be a physical quantity, and consider the left translation $L(g)$ in $L^2(G)$. 
Consider an isomorphism $\hat{G}\mapsto \chi(G)$, $\hat{g}\mapsto \chi_{\hat{g}}$. 
Let $\cF_G$ be the Fourier transform between $L^2(G)$ and $L^2(\hat{G})$. Namely, for $f\in L^1(G)\cap L^2(G)$ we have 
\[[\cF_G f](\hat{g})=\int_G f(g)\overline{\chi_{\hat{g}}(g)} dg = \int_G f(g)\chi_{\hat{g}^{-1}}(g) dg,\]
 and $\cF_G f$ is defined by a density argument for $f\in L^2(G)$. Here, $\hat{g}$ are interpreted as frequencies.
The Fourier transform $\cF_G$ transforms translations $L(g)$ to a modulation operator, namely
\[[\cF_G L(g)f](\hat{g}) = \chi_{\hat{g}^{-1}}(g)[\cF_G f](\hat{g}).\]
 As a result, we have the representation equivalence
\begin{equation}
L \cong \int_{\chi(G)}^{\oplus} \chi^{[1]} d\mu(\chi)
\label{eq:direct_decomposition_of_translation}
\end{equation}
To see this, we interpret the carrier space $\chi(G)$ as the reflected frequency domain. The isometric isomorphism $U$ between the representation spaces of $L$ and $\int_{\chi(G)}^{\oplus}\chi^{[1]} d\mu(\chi)$ is given by $[Uf](\chi_{\hat{g}}) = [\cF_G f](\hat{g}^{-1})$, which gives 
\[[U L(g) f](\chi_{\hat{g}}) = \chi_{\hat{g}}(g)[\cF_G f](\hat{g}^{-1})=\chi_{\hat{g}}(g)[Uf](\chi_{\hat{g}}).\]

\end{document}